\documentclass[acmsmall]{}


\setcopyright{acmlicensed}
\acmJournal{POMACS}
\acmYear{2024} \acmVolume{8} \acmNumber{1} \acmArticle{4} \acmMonth{3}\acmDOI{10.1145/3639030}

\usepackage{amsmath}
\usepackage{amsthm}
\usepackage{algorithm}
\usepackage{algorithmic}
\usepackage{xspace}
\usepackage{graphicx}
\usepackage{subfigure}
\usepackage{wrapfig}
\usepackage{graphicx}
\usepackage{subfigure}
\usepackage{url}
\usepackage{color}
\usepackage{multirow}
\usepackage{gensymb}
\usepackage{longtable}
\usepackage{tikz}
\usepackage{comment}
\usepackage{dsfont}
\usepackage{framed}
\usepackage{enumitem}
\usepackage{xspace}
\usepackage{bm}
\usepackage{pifont}

\theoremstyle{plain}
\newtheorem{theorem}{Theorem}[section]

\newtheorem{lemma}[theorem]{Lemma}

\theoremstyle{definition}
\newtheorem{definition}{Definition}
\newtheorem{assumption}{Assumption}

\newcommand{\ouralg}{\texttt{LA-OACP}\xspace}
\newcommand{\expert}{\texttt{OACP}\xspace}
\newcommand{\experttwo}{\texttt{OACP+}\xspace}
\newcommand{\ouralgtwo}{\texttt{OACP+}\xspace}

\AtBeginDocument{%
  }

\keywords{Online Allocation, Replenishable Budget, Learning-Augmented Algorithm}

\begin{CCSXML}
	<ccs2012>
	<concept>
	<concept_id>10003752.10003809.10010047</concept_id>
	<concept_desc>Theory of computation~Online algorithms</concept_desc>
	<concept_significance>500</concept_significance>
	</concept>
	</ccs2012>
\end{CCSXML}

\ccsdesc[500]{Theory of computation~Online algorithms}

\begin{document}

\title{Online Allocation with Replenishable Budgets: Worst Case and Beyond}

\author{Jianyi Yang}
\email{jyang239@ucr.edu}
\affiliation{%
	\institution{University of California, Riverside}
	\streetaddress{900 University Ave.}
	\city{Riverside}
	\state{California}
	\postcode{92521}
    \country{United States}
}

\author{Pengfei Li}
\email{pli081@ucr.edu}
\affiliation{%
  \institution{University of California, Riverside}
  \streetaddress{900 University Ave.}
  \city{Riverside}
  \state{California}
  \postcode{92521}
  \country{United States}
}

\author{Mohammad J. Islam}
\email{misla056@ucr.edu}
\affiliation{%
	\institution{University of California, Riverside}
	\streetaddress{900 University Ave.}
	\city{Riverside}
	\state{California}
	\postcode{92521}
    \country{United States}
}

\author{Shaolei Ren}
\email{sren@ece.ucr.edu}
\affiliation{%
	\institution{University of California, Riverside}
	\streetaddress{900 University Ave.}
	\city{Riverside}
	\state{California}
	\postcode{92521}
    \country{United States}
}

\renewcommand{\shortauthors}{Jianyi Yang, et al.}

\begin{abstract}
This paper studies online resource allocation with replenishable budgets,
where budgets can be replenished on top of the initial budget and
an agent sequentially chooses 
online allocation decisions without violating the available budget constraint
at each round. We propose a
novel online algorithm, called 
\expert (Opportunistic Allocation with Conservative Pricing), that 
conservatively adjusts dual variables while opportunistically
utilizing available resources.
\expert achieves a bounded asymptotic
competitive ratio in adversarial settings as the number of decision rounds $T$ gets large. Importantly, the asymptotic competitive ratio of \expert
is optimal in the absence of additional assumptions on budget replenishment.
To further improve the competitive ratio, we 
make a mild assumption that there is budget replenishment every
$T^*\geq1$ decision rounds and propose
\experttwo to dynamically adjust the total budget assignment for online allocation. 
Next,
we move beyond the worst-case and propose
\ouralg (Learning-Augmented \expert/\experttwo),
a novel learning-augmented algorithm
for online allocation with replenishable budgets.
We prove that \ouralg can improve
the average utility compared to \expert/\experttwo when the ML predictor
is properly trained, while still offering worst-case utility guarantees when the
ML predictions are arbitrarily wrong.
Finally, we run simulation studies of sustainable AI inference 
powered by renewables, validating our analysis
and demonstrating the empirical benefits of \ouralg.
\end{abstract}

\maketitle

\section{Introduction}

Online allocation subject to resource (or budget) constraints 
models a sequential decision-making problem where the agent
needs to allocate resources without violating the available budget constraint
at each round.
It is a central problem of critical importance
in numerous applications, such as revenue management, online advertising,
computing resource management, among many others. 
For example, 
Internet companies need to select advertisements based on online user arrivals
subject to advertisers' budget constraints;
cloud operators need to dynamically allocate user requests
to available machines subject to resource constraints;
and edge devices need to dynamically optimize its battery energy usage while intermittently
harvesting energy from the surrounding environment.
As such, the problem of online allocation and its variants
have received rich attention in the past few decades \cite{OnlineAllocation_Dual_mirror_descent_Google_ICML_2020_balseiro2020dual,OnlineAllocation_MultiInventory_MinghuaChen_Sigmetrics_2022_10.1145/3530902,OnlineAllocation_DualMirroDescent_Google_OperationalResearch_2022_doi:10.1287/opre.2021.2242,Shaolei_L2O_LAAU_OnlineBudget_AAAI_2023,NRM_book_talluri2004theory,resolving_bounded_revenue_loss_NRM_jasin2012re}.

Online allocation decisions are temporally coupled due to total budget constraints,
thus requiring complete offline information to obtain the optimal solution.
Nonetheless, the availability of only online information in practice
makes online allocation extremely challenging.
To meet budget constraints
in online settings, a commonly considered approach is 
 Lagrangian relaxation, which includes weighted budget constraints
 as a regularizer for online decision making where the weights
 are dual variables and can be interpreted as the budget/resource price \cite{devanur2019near,OCO_zinkevich2003online,Neely_Booklet,PalomarChiang_Distributed}.
Consequently,
by adjusting the resource price,
the agent's budget consumption is also governed so as to meet the budget constraint. For example,
 there have been a variety of
approaches to updating the dual variables online \cite{OCO_agrawal2014fast,OCO_zinkevich2003online,Neely_Booklet,OnlineAllocation_Dual_mirror_descent_Google_ICML_2020_balseiro2020dual,OnlineAllocation_DualMirroDescent_Google_OperationalResearch_2022_doi:10.1287/opre.2021.2242}.

Despite these efforts and advances in various (relaxed) settings
such as stochastic utility functions \cite{OnlineAllocation_Dual_mirror_descent_Google_ICML_2020_balseiro2020dual}, 
optimizing the total utility subject
to strict budget constraints still remains a challenging problem
in \emph{adversarial} settings, where the utility functions can be arbitrarily
presented to the agent. In fact, 
 competitive online algorithms for adversarial settings have only been proposed very recently. More concretely, 
 online resource allocation with a single-inventory constraint
\cite{OnlineAllocation_SingleInventory_MinghuaChen_Sigmetrics_2019_10.1145/3322205.3311081} and a multi-inventory constraint \cite{OnlineAllocation_MultiInventory_MinghuaChen_Sigmetrics_2022_10.1145/3530902} are two of the very few known competitive online algorithms with a finite number of decision rounds under the assumption that the utility 
 functions of each inventory are separable.
 In \cite{OnlineAllocation_DualMirroDescent_Google_OperationalResearch_2022_doi:10.1287/opre.2021.2242}, an online allocation algorithm that adjusts
 the dual variable is proposed, achieving a bounded asymptotic competitive ratio in adversarial settings when the length of each problem instance is sufficiently long.
Nonetheless, these studies \cite{OnlineAllocation_DualMirroDescent_Google_OperationalResearch_2022_doi:10.1287/opre.2021.2242,OnlineAllocation_MultiInventory_MinghuaChen_Sigmetrics_2022_10.1145/3530902,OnlineAllocation_SingleInventory_MinghuaChen_Sigmetrics_2019_10.1145/3322205.3311081} are crucially limited in the following  aspects.

$\bullet$ \emph{No budget replenishment}. First and foremost, the total budget constraint is fixed without allowing \emph{replenishment} online \cite{OnlineAllocation_DualMirroDescent_Google_OperationalResearch_2022_doi:10.1287/opre.2021.2242,OnlineAllocation_MultiInventory_MinghuaChen_Sigmetrics_2022_10.1145/3530902,OnlineAllocation_SingleInventory_MinghuaChen_Sigmetrics_2019_10.1145/3322205.3311081}. In fact, these algorithms explicitly assume
that budgets are \emph{not} replenishable, which would otherwise
void their competitive analysis.
However, budget replenishment in an online manner is common in practice, e.g., dynamic energy harvesting (see Section~\ref{sec:formulation_example}
for additional examples). While
some studies \cite{EnergyHarvesting_LearningAided_LyapunovOptimization_LongboHuang_TMC_2020_8807278,learning_aided_energy_harvesting_outdated_state_infomation_yu2019learning,Lyapunov_optimization_energy_harvesting_wireless_sensor_qiu2018lyapunov,OnlineOpt_EnergyHarvesting_Bregman_Neely_USC_SIgmetrics_2020_10.1145/3428337,Huang:2011:UOS:2107502.2107531} have considered budget replenishment,
they typically focus on independent and identically distributed  budget replenishment. In contrast, arbitrary budget replenishment
in adversarial settings naturally provides additional power to the adversary, thus
creating significantly more challenges.

$\bullet$ \emph{Worst-case performance only.}
Second, the studies \cite{OnlineAllocation_DualMirroDescent_Google_OperationalResearch_2022_doi:10.1287/opre.2021.2242,OnlineAllocation_MultiInventory_MinghuaChen_Sigmetrics_2022_10.1145/3530902,OnlineAllocation_SingleInventory_MinghuaChen_Sigmetrics_2019_10.1145/3322205.3311081} only focus on the worst-case performance
in terms of the competitive ratio. As a result, the
conservativeness needed to  address
the worst possible problem input significantly limits their 
average-case performance for most typical problem inputs.
Online algorithms based on machine learning (ML)
models have been considered for various problems \cite{L2O_NewDog_OldTrick_Google_ICLR_2019,L2O_Combinatorial_Reinforcement_AAAI_2020,L2O_OnlineBipartiteMatching_Toronto_ArXiv_2021_DBLP:journals/corr/abs-2109-10380,Shaolei_L2O_LAAU_OnlineBudget_AAAI_2023}, including
online resource allocation \cite{L2O_OnlineResource_PriceCloud_ChuanWu_AAAI_2019_10.1609/aaai.v33i01.33017570,L2O_AdversarialOnlineResource_ChuanWu_HKU_TOMPECS_2021_10.1145/3494526}.
Nonetheless, unlike the hand-crafted online algorithms \cite{OnlineAllocation_DualMirroDescent_Google_OperationalResearch_2022_doi:10.1287/opre.2021.2242,OnlineAllocation_MultiInventory_MinghuaChen_Sigmetrics_2022_10.1145/3530902,OnlineAllocation_SingleInventory_MinghuaChen_Sigmetrics_2019_10.1145/3322205.3311081},
ML-based online optimizers may not offer worst-case performance
guarantees and can result in significantly bad results when, for example,
the training-testing distribution differs. Even though heuristic techniques
such adversarial training can empirically mitigate the lack of performance robustness to some extent,
it is still challenging to provably guarantee the worst-case performance of ML models. Thus, it remains an open problem to achieve the \emph{best of both worlds} --- improving the average utility while offering the worst-case robustness (in the presence of budget replenishment). 
In fact, as highlighted above, there even do not exist competitive online algorithms that address budget replenishment in adversarial settings, let alone
a learning-augmented algorithm that can improve the average performance while provably offering  worst-case performance guarantees.

\textbf{Contributions.}
In this paper, we address the above points and consider online allocation with replenishable budgets, where
the agent receives budget replenishment on the fly and needs to choose irrevocable online decisions to allocate $M$ resources. The goal of the agent
is to maximize the total utility over $T$ rounds subject to 
per-round available budget constraints, where the per-round utility is a function in terms of the online allocation decision.

We first consider an adversarial setting and propose an online algorithm, called
\expert (Opportunistic Allocation with Conservative Pricing), that 
updates the dual variable (i.e., resource pricing) online to regulate the
agent's budget allocation and achieves 
an asymptotic competitive ratio as $T\to\infty$.
The key insight of \expert is
that we treat the uncertain budget replenishment differently
than the initially-assigned fixed budget and set the resource price
in a conservative manner, which encourages the agent to be more frugal
while still allowing the agent
to opportunistically utilize the replenished budgets when applicable.
Most importantly, we prove in Theorem~\ref{thm:expertbound}
that \expert achieves the same asymptotic competitive ratio bound
as the state-of-the-art optimal bound
in \cite{OnlineAllocation_DualMirroDescent_Google_OperationalResearch_2022_doi:10.1287/opre.2021.2242} that does not address budget replenishment.
In our setting with replenishable budgets, the adversary naturally has
more power than the setting of a fixed known budget, as it can arbitrarily
present budget replenishments to the agent.
Therefore, achieving the same asymptotic competitive ratio 
as that of the state-of-the-art algorithm for fixed budget allocation \cite{OnlineAllocation_DualMirroDescent_Google_OperationalResearch_2022_doi:10.1287/opre.2021.2242} highlights the benefit of \expert in terms of addressing additional uncertainties of replenished budget.

\begin{table*}[t]
\scriptsize
\centering
\begin{tabular}{ l|c|c|c|c} 
\toprule
 \textbf{Algorithm} &  \textbf{Budget replenishment} & 
  \textbf{Budget cap} &  \textbf{Worst-case robustness} & 
   \textbf{Average utility bound}\\
\midrule
\text{CR-Pursuit} \cite{OnlineAllocation_SingleInventory_MinghuaChen_Sigmetrics_2019_10.1145/3322205.3311081}&\ding{56}&NA&\ding{52}&\ding{56}\\ 
\midrule
\text{A\&P \cite{OnlineAllocation_MultiInventory_MinghuaChen_Sigmetrics_2022_10.1145/3530902}} &\ding{56}&NA&\ding{52}&\ding{56}\\ 
\midrule
\text{DMD} \cite{OnlineAllocation_DualMirroDescent_Google_OperationalResearch_2022_doi:10.1287/opre.2021.2242} &\ding{56}&NA&\ding{52}&\ding{56}\\ 
\midrule
\textbf{\expert} (our work) &\ding{52}&\ding{52}&\ding{52}&\ding{56}\\ 
\midrule
\textbf{\experttwo} (our work) &\ding{52}&\ding{52}&\ding{52}&\ding{56}\\ 
\midrule
\textbf{\ouralg} (our work) &\ding{52}&\ding{52}&\ding{52}&\ding{52}\\ 
\bottomrule
\end{tabular}
\caption{Comparison between our work and recent online competitive allocation algorithms for adversarial settings. Algorithms for non-adversarial settings
are discussed in Section~\ref{sec:related} and not shown in the table.}
\label{table:comparison_literature}
\end{table*}

Next, we propose \experttwo to utilize the budget replenishment more efficiently under a mild assumption that the budget is replenished at least every $T^*\geq1$ decision rounds.   
Specifically, 
\experttwo divides the whole episode of $T$ rounds into 
$K$ frames of unequal lengths and performs frame-level budget assignment online and a round-level online budget allocation within each frame.  
To account for the maximum budget cap, a new threshold-based budget assignment strategy is proposed to decide the assigned budget for each frame.  Given the assigned budget for each frame, we apply \expert for round-level budget allocation while deferring all the budget replenishment to future frames.  
We prove that
\experttwo achieves a higher asymptotic competitive ratio than \expert if the total budget replenishment is positive
in every $T^*$ rounds (Theorem~\ref{thm:enhanced_expertbound_cr}). 

Last but not least, we move beyond the worst-case and
aim to maximize the average utility while still offering
worst-case utility guarantees.
We propose a novel learning-augmented algorithm, called \ouralg
(Learning-Augmented \expert),
that integrates a trained ML predictor with \expert.
More concretely, \ouralg utilizes the ML prediction (i.e.,
online allocation decision by the ML-based optimizer)
and expert decision (from \expert or \experttwo) as advice, and judiciously 
combine them. The key novelty of \ouralg is to introduce
a new reservation utility that produces a constrained
decision set within which all decisions can meet the worst-case utility
constraint (defined with respect to \expert or \experttwo).
Meanwhile, \ouralg ensures that the online decisions are chosen
from the constrained decision set while being close
to ML predictions so as to exploit the benefits of ML predictions
to improve the average utility.
We rigorously prove that \ouralg can improve
the average utility compared to \expert when the ML predictor
is properly trained, while still offering worst-case utility guarantees
(see Theorems~\ref{thm:robustness} and~\ref{thm:average}).

Finally, we run simulation studies of sustainable AI inference
to maximize
the total utility 
subject to energy constraints with renewable replenishment. Our results  validate the analysis of \expert, \experttwo and \ouralg, demonstrating the empirical advantage of \ouralg in terms of the average utility
over \expert and \experttwo as well as other baseline algorithms.

We highlight the main difference between our algorithms
and recent online allocation algorithms that consider adversarial
settings  in Table~\ref{table:comparison_literature}.
\emph{Our major contributions are also summarized as follows}.
First, we propose two novel online algorithms \expert and \experttwo
that achieve bounded asymptotic competitive ratios
for online allocation with replenishable budgets in adversarial settings
(\textbf{Theorem~\ref{thm:expertbound}  and Theorem~\ref{thm:enhanced_expertbound_cr}}).
To our knowledge, the proposed provably-competitive algorithms advance
the existing competitive online algorithms to address budget replenishment in adversarial settings for the first time \cite{OnlineAllocation_MultiInventory_MinghuaChen_Sigmetrics_2022_10.1145/3530902,OnlineAllocation_DualMirroDescent_Google_OperationalResearch_2022_doi:10.1287/opre.2021.2242,OnlineAllocation_SingleInventory_MinghuaChen_Sigmetrics_2019_10.1145/3322205.3311081}.
Second, we move beyond the worst case
and propose a novel learning-augmented algorithm, \ouralg,
that probably improves the average utility compared to \expert or \experttwo (\textbf{Theorem~\ref{thm:average}}), while still offering worst-case
utility guarantees for online allocation with budget replenishment for any problem instance (\textbf{Theorem~\ref{thm:robustness}}).

\section{Problem Formulation}

In this section, we present the problem formulation for online
allocation with replenishable budgets.

\textbf{Notations:} For the convenience of presentation, we first introduce the common notations used throughout
the paper. Unless otherwise noted, we use $[N]$ to denote the set $\{1,2,\cdots, N\}$ for a positive integer $N$. 
 $\mathbb{E}(\cdot)$ is the expectation operator,  $\mathbb{P}$ is a probability measure, 
 $\mathbb{I}(x)$ is an indicator function
 (i.e., $\mathbb{I}(x)=1$ if the condition $x$ is true and
 $\mathbb{I}(x)=0$ otherwise),
 and $\mathcal{R}_+^D$ and $\mathcal{R}_{++}^D$ are $D$-dimensional non-negative and strictly positive real number spaces, respectively.
 For a vector $x$, 
$x_j$ denotes its $j$-the element and $\|x\|$ is its norm  ({$l_2$ norm by default). For two vectors $x$ and $y$,
we use $x\leq y$ to denote \emph{element}-wise inequality, i.e.,
$x_j\leq y_j$ for all $j$ and use $x\odot y$ to denote \emph{element}-wise product. $\min(x,y)$ denotes the \emph{element}-wise minimization. We also use $[x]^b=\min\left(x,b\right)$ and
$[x]^+=\max\left(x,0\right)$, where the capping and rectifying operators
are applied for each element when $x$ is a vector.
For a sequence of 
variables $c_1,\cdots, c_T$, we use $c_{i:j}$
to denote the subsequence $c_i,\cdots,c_j$ for $1\leq i\leq j\leq T$;
we have $c_{i:j}=\varnothing$ if $i>j$.

\subsection{Model}

We consider an online allocation problem with replenishable resource budgets, where each sequence (a.k.a., problem instance)
includes $T$ consecutive rounds and involves sequential
allocation of $M$ types of resources based on online information.

At the beginning of a sequence (i.e., round $t=1$), 
the decision maker (i.e., agent)
is endowed with an initial resource budget $B_1=[B_{1,1},\cdots, B_{M,1}]\in\mathcal{R}_{++}^M$,
where $B_{m,1}=T\rho_m$ 
is the initial resource budget
for type-$m$ resource, with $\rho_m>0$ being
 the per-round average budget initially assigned to the 
 agent, 
for $m\in[M]$. Moreover, we have $B_1\leq B_{\max}$, where
$B_{\max}=[B_{1,\max},\cdots, B_{M,\max}]$
represents the maximum budget cap.
The inclusion
 of $B_{\max}$ is both practical and general:
$B_{\max}$ captures practical constraints such
as battery capacity for energy resources,
space constraint for product inventory, among others,
and the budget cap can be effectively voided when setting a large $B_{m,\max}\to\infty$ for $m\in[M]$, to which case our design also applies.

At the beginning of each round $t\in[T]$,
the agent is presented with a utility function $f_t(x): \mathcal{R}_{+}^M
\to \mathcal{R}_{+}$, where $x\in\mathcal{X}$ is the allocation decision.
 Additionally, the agent also receives a potential budget replenishment $\hat{E}_t=\left[\hat{E}_{1,t},\cdots,\hat{E}_{M,t}\right]\in\mathcal{R}_{+}^M$,
resulting in a total budget of $\min\left(B_t+\hat{E}_t, {B_{\max}}\right)= B_t + E_t$,
available for allocation
at round $t$, where $B_t$ is the remaining budget at the end of round $t$. 
In other words, due to the budget cap, the actual budget replenishment is $E_t=\min(\hat{E}_t, B_{\max}-B_t)$
 at round $t$. 

 The agent's allocation decision for round $t$ is $x_t=[x_{1,t},\cdots, x_{M,t}]\in\mathcal{X}$, where $\mathcal{X}=\{x\in\mathcal{R}_+^M| 0\leq x \leq \bar{x}\}$ with $\bar{x}=[\bar{x}_1,\cdots, \bar{x}_M]$ representing
the maximum allocation for each resource type at each round. 
Note that we have $\bar{x}\leq B_{\max}$ since otherwise
the maximum budget cap is more stringent
while the maximum allocation constraint $\bar{x}$ is never binding.

Given
the budget replenishment and the agent's allocation decision,
the budget evolves as $B_{t+1}=\min \left(B_t+\hat{E}_t,B_{\max}\right)-x_t
=B_t+E_t-x_t$
for round $t+1$.
Thus, the information revealed to the agent at the beginning
of each round $t$ can be summarized as $y_t=(f_t,\hat{E}_t)$,
while all the information for a sequence can be written
as $y=[y_1,\cdots,y_T]\in\mathcal{Y}$, where 
$\mathcal{Y}$ denotes the space of all possible episodic 
information. When the context is clear, we also use $y$
to denote a sequence. Any remaining budgets at the end of an sequence
are wasted without rolling over to the next sequence. If an algorithm $\pi$ is used to solve the problem with information $y$, the total the total utility is denoted as $F_T^{\pi}(y)=\sum_{t=1}^T f_t(x_t)$. 

To summarize, for a sequence
$y$, the \emph{offline} problem can be formulated
as 
\begin{subequations}\label{eqn:objective_offline}
 \begin{gather}\label{eqn:objective}
       \max_{x_{1:T}\in \mathcal{X}^T} \sum_{t=1}^Tf_t(x_{t}) \\
       \label{eqn:constraint_1}
       s.t., \;\;\;\; x_t\leq B_t+E_t  \text{ and } x_t\in\mathcal{X},
        \;\;\;\; \forall t\in[T]\\
        \label{eqn:budget_dynamics}
        B_{t+1}=B_t+{E}_t - x_t \text{ and } E_t=\min\left(\hat{E}_t, B_{\max} - B_t\right), \;\;\;\; \forall t\in[T]
 \end{gather}
\end{subequations}

 Next, we make the following standard assumptions on the utility function $f_t(x)$ for $t\in[T]$.

\begin{assumption}[Utility function $f_t(x)$]\label{assumption:utility_function}
For any $t\in [T]$, the utility function $f_t(x): \mathcal{R}_{+}^M
\to \mathcal{R}_{+}$  is assumed to be non-negative, 
 have subgradients at each point of $x\in\mathcal{X}$.
 In addition, we assume
 $f_t(0)=0$ and $\sup f_t(x)=\bar{f}$ for $t\in[T]$ and $x\in\mathcal{X}$.
\end{assumption}

The assumptions are standard in the literature on online allocation
with budget constraints \cite{OnlineAllocation_Dual_mirror_descent_Google_ICML_2020_balseiro2020dual,OnlineAllocation_DualMirroDescent_Google_OperationalResearch_2022_doi:10.1287/opre.2021.2242,OnlineAllocation_MultiInventory_MinghuaChen_Sigmetrics_2022_10.1145/3530902}. 
Note that we do not require concavity of the utility functions, making our algorithms applicable for a wide range of applications.

\subsection{Performance Metrics}

With complete information $y=[y_1,\cdots, y_T]\in\mathcal{Y}$
provided to the agent at the beginning of a sequence,
the problem in \eqref{eqn:objective_offline} can be efficiently solved
via subgradient methods for constrained optimization \cite{subgradient_goffin1977convergence,constrained_opt_bertsekas2014constrained,subgradient_boyd2003subgradient}. If the utility functions are concave,  subgradient methods such as the projected subgradient method and the primal-dual subgradient method have provable convergence guarantees \cite{subgradient_boyd2003subgradient,subgradient_goffin1977convergence}.  Nonetheless,
in practice, the agent only has access to online information
$y_{1:t}$ before making its decision $x_t$ at round $t\in[T]$, 
adding substantial challenges.

Our goal is to design an online algorithm $\pi$ that maps
available online information $y_{1:t}$ to a decision 
 $x_t\in \mathcal{X}$ subject to the budget constraint \eqref{eqn:constraint_1}
at each round $t\in[T]$. To measure the decision quality
of an online algorithm $\pi$, we use the following metrics that
capture the \emph{worst}-case and \emph{average}-case performance, respectively.

\begin{definition}[Asymptotic competitive ratio \cite{balseiro2019learning,borodin2005online}]\label{def:cr}
The asymptotic competitive ratio
of an online algorithm $\pi$ is $CR^{\pi}$ if $\lim_{T\to\infty}\sup_{y\in\mathcal{Y}}\frac{1}{T}\left(OPT(y)-\frac{1}{CR^{\pi}} F_T^{\pi}(y)\right)\leq 0$,
where
$F_T^{\pi}(y)=\sum_{t=1}^T f_t(x_t)$ is the total utility of algorithm $\pi$ 
and $OPT(y)$ is the optimal utility obtained by the oracle given
offline information.
\end{definition}

\begin{definition}[Average utility]
Given an online algorithm $\pi$, the average
utility is defined as $AVG^{\pi}=\mathbb{E}_{y\in\mathcal{Y}}\left[F_T^{\pi}(y)\right]$,
where the expectation is over the sequence information $y\sim \mathbb{P}_{y}$.
\end{definition}

Both competitive ratio and average utility are important in practice, characterizing
the robustness of an online algorithm (in terms of
its utility ratio to the optimal oracle) and its quality for typical problem instances, respectively. 
Here, we consider an asymptotic competitive ratio
(in the sense of $T\to\infty$)
because of the intrinsic hardness of our problem ---
even for online allocation of a fixed budget without
replenishment, only an asymptotic competitive ratio
is attainable in the state of the art \cite{OnlineAllocation_DualMirroDescent_Google_OperationalResearch_2022_doi:10.1287/opre.2021.2242}.
 We shall design in Section~\ref{sec:expert} online allocation algorithms to address the worst-case robustness, while
we will consider the average performance (subject to a worst-case robustness constraint) in Section~\ref{sec:learning}.

\subsection{Application Examples}\label{sec:formulation_example}

We now provide a few examples as motivating applications to make our model more concrete.

\emph{Online advertising with budget replenishment.} Online advertisement
serves as a prominent, if not the most prominent, source of revenue for Internet companies \cite{OnlineAllocation_DualMirroDescent_Google_OperationalResearch_2022_doi:10.1287/opre.2021.2242}. Advertisers need to dynamically set a biding
budget, which will then be used by
the publisher to maximize  profits or the number of impressions for advertisers per their contracts with the publisher. Meanwhile, they can also increase budgets anytime they like. Thus, by viewing
the bidding budget as an online decision, this problem fits nicely into
the online allocation of replenishable budgets.

\emph{Sustainable AI inference.} Nowadays, the rapidly increasing  demand for artificial intelligence (AI) inference, especially large language models, has resulted
in large carbon emissions \cite{ML_Carbon_Bloom_176B_Sasha_Luccioni_arXiv_2022_luccioni2022estimating}. To achieve sustainable AI inference, it is important to 
exploit renewable generation to replenish on-site energy storage.
Meanwhile, for the same AI inference service, there often
exist multiple models (e.g.,
eight different GPT-3 models \cite{ML_GPT3_Energy_Others_NIPS_2020_NEURIPS2020_1457c0d6}), each having a distinct model
size to offer a flexible tradeoff
between accuracy performance and energy consumption.
However, the renewables are known to be time-varying and unstable. Thus, 
by
viewing the intermittent renewables as replenished budgets,
 the resource manager needs to schedule an appropriate AI model
 for inference in an online manner to maximize the utility (e.g. maximizing the accuracy) given
available energy constraints \cite{carbon_aware_computing_radovanovic2022carbon,green_AI_schwartz2020green}.

\emph{Online inventory management with dynamic replenishment.}
Manufacturers need to dynamically dispatch available inventory to
different distributors based on market demands. Meanwhile, they will
also replenish the inventory through newly manufactured products.
The goal is to manage the available inventory to maximize the total
profit/revenue given dynamic replenishment and environment (e.g., market
demands and supply-chain situation), to which our model is well suited.

\section{\expert: Opportunistic Allocation with Conservative Pricing}\label{sec:expert}

In this section, we address the worst-case robustness in adversarial settings
and design
an asymptotically competitive online algorithm, called
\expert, that conservatively updates the dual variable based on mirror descent and opportunistically allocates replenished budgets.
 Using a novel technique,
\expert provably offers the optimal worst-case 
performance guarantees for adversarial settings of online allocation with replenishable budgets (Theorem~\ref{thm:expertbound}).
Then, by making
an additional assumption on the minimum budget replenishment,
we extend \expert to \experttwo, which offers an improved
asymptotic competitive ratio (Theorem~\ref{thm:enhanced_expertbound_cr}).

To solve the online allocation problem in \eqref{eqn:objective_offline}, one can equivalently relax the budget constraints
using Lagrangian
techniques. More specifically, 
instead of directly solving 
\eqref{eqn:objective_offline},
we introduce a regularizer and solve $\hat{x}_t=\arg\max_{x\in\mathcal{X}}\{f_t(x)-\mu_t^\top x\}$
where $\mu_t\in\mathcal{R}_+^M$ is the Lagrangian multiplier vector (a.k.a., \emph{dual} variable) with each entry corresponding
to one resource budget constraint. The interpretation
of $\mu_t$ is that it can be viewed
as the resource \emph{price} \cite{Varian_Microeconomics1992,PalomarChiang_Distributed}: a higher price encourages resource conservation to meet the budget constraints, and vice versa. 

If we were able to optimally set  $\mu_t\in\mathcal{R}_+^M$ for $t\in[T]$, 
we could optimally solve \eqref{eqn:objective_offline} 
while satisfying the per-round budget constraints.
Nonetheless, like in the original problem \eqref{eqn:objective_offline}, finding the optimal
$\mu_t$ for $t\in[T]$ requires the complete offline information $y=[y_1,\cdots,y_T]$ at the beginning
of an episode, but this information is clearly lacking 
for online allocation.

Despite this challenge, the interpretation of the dual variable $\mu_t$ as
the resource price at round $t\in[T]$ provides us with inspiration
for the design of \expert.
Specifically, in view of the dynamic budget replenishment
$E_t$, we propose to \emph{conservatively} update the
price $\mu_{t+1}$ to a higher value for each round ${t+1}$ 
as if $E_t$ does not exist, and then \emph{opportunistically}
use the actually available budget $B_t+E_t$.
Our algorithm, called \expert,
is described in Algorithm~\ref{alg:expert}.

\subsection{Competitive Algorithm Design}
\begin{algorithm}[!t]
\caption{Opportunistic Allocation with Conservative Pricing (\expert)}
	\begin{algorithmic}[1]\label{alg:expert}
  \REQUIRE Initialize dual variable $\mu_1$,
 and budget  {$B_1=\rho T$ for $\rho>0$}
  \FOR   {$t=1$ to $T$}
  \STATE Receive utility function $f_t(x)$ and potential budget replenishment $\hat{E}_t$.
        \STATE Get the actual replenished budget $E_t=\min\{\hat{E}_t,B_{\max}-B_t\}$ 
        \STATE Pre-select action $x_t$ based on $\mu_t$:
        $\hat{x}_t=\arg\max_{x\in \mathcal{X}}\{f_t(x)-\mu_t^\top x\}$\\
        \IF{$\hat{x}_t\leq B_t+E_t$}
        \STATE $x_t=\hat{x}_t$ and $g_t=-\hat{x}_t+\rho$ \;\;\; \textsf{//for conservative pricing}
        \ELSE
        \STATE $x_t=0$ and $g_t=0$
        \ENDIF
	\STATE Update budget $B_{t+1}=B_{t}+E_t -x_t$ 
  \STATE Dual mirror descent:\\
  $\mu_{t+1}=\arg\min_{\mu\geq0} g_t^\top\mu+\frac{1}{\eta}V_h(\mu,\mu_t)$
  where $V_h(\mu,\mu_t)=h(\mu)-h(\mu_t)-\triangledown h(\mu_t)^{\top}(\mu-\mu_t)$ is the Bregman divergence in which $h(\mu)$ is a 
  $\sigma$-strongly convex reference function (Assumption~\ref{assumption:reference_function})
  \ENDFOR
 \end{algorithmic}
\end{algorithm}

At each round
$t\in[T]$, given
$\mu_t$ and online information, we solve 
the following relaxed optimization problem:
\begin{equation}\label{eqn:relaxed_opt_each_round}
\hat{x}_t=\arg\max_{x\in \mathcal{X}}\{f_t(x)-\mu_t^\top x\}.
\end{equation}
Next, we check if $\hat{x}_t$ satisfies the current
budget constraint $B_t+E_t$:  we set $x_t=\hat{x}_t$ if the budget constraint
is satisfied, and $x_t=0$ otherwise. Then, we update
the dual variable 
based on mirror descent
 $\mu_{t+1}=\arg\min_{\mu\geq0} g_t^\top\mu+\frac{1}{\eta}V_h(\mu,\mu_t)$,
  where $V_h(\mu,\mu_t)=h(\mu)-h(\mu_t)-\triangledown h(\mu_t)^{\top}(\mu-\mu_t)$ is the Bregman divergence
defined with respect to a reference function $h(\mu)$.

The goal of mirror descent is to update the dual variable $\mu_{t+1}$ such that it can set a resource \emph{price} that reflects the current budget level   
while staying not too far away from the current
dual variable $\mu_t$ as regularized
by $\frac{1}{\eta}V_h(\mu,\mu_t)$ in
terms of
 Bregman divergence. 
 In particular, the usage of mirror descent to update dual variables for online constrained optimization 
 has begun to be explored recently
\cite{OnlineAllocation_Dual_mirror_descent_Google_ICML_2020_balseiro2020dual,OnlineAllocation_DualMirroDescent_Google_OperationalResearch_2022_doi:10.1287/opre.2021.2242,OnlineOpt_EnergyHarvesting_Bregman_Neely_USC_SIgmetrics_2020_10.1145/3428337}.
Nonetheless, the prior studies on online allocation under {adversarial}
settings 
have only considered a fixed budget without dynamic budget replenishment \cite{OnlineAllocation_DualMirroDescent_Google_OperationalResearch_2022_doi:10.1287/opre.2021.2242}.

\textbf{Key insight.} The key insight of \expert
lies in how we set $g_t$ and choose $x_t$ in Lines~5 and~6 of
Algorithm~\ref{alg:expert}. 
The dual variable $\mu_t$ is updated based on $g_t=-\hat{x}_t+{\rho}$,
whose inverse (i.e., $\hat{x}_t-{\rho}$) measures the overuse of the current allocation compared with a reference per-round budget ${\rho}=\frac{B_1}{T}$. When $g_t$ is smaller, the degree of budget overuse is greater, and $\mu_{t+1}$ tends to be greater in the mirror descent step, encouraging the agent to use fewer resources at round $t+1$. Under the setting of no budget replenishment, 
it is natural to set the per-round budget ${\rho}=\frac{B_1}{T}$
 to evaluate the degree of over-consumption for each round. Nonetheless, in the presence of budget replenishment, we cannot simply use $\rho+E_t$ as the reference to incorporate new replenishment $E_t$ in resource pricing. 
The reason is that the sequence of $E_t$ can be arbitrary and the future replenishment $E_{t+1},\cdots, E_T$ is unknown.
As a result, aggressively using $\tilde{\rho}=\rho+E_t$ as the reference per-round resource consumption can result in an unnecessarily low
resource price $\mu_{t+1}$.
Instead, \expert still sets the reference per-round budget as ${\rho}=\frac{B_1}{T}$ as if no budget replenishment were received.
Consequently, the resource price
$\mu_{t+1}$  tends to be higher than using $\rho+E_t$ otherwise, encouraging the agent to be more frugal in resource consumption.
 On the other hand, the budget replenishment $E_t$ can
be still used opportunistically by increasing the actual available budget
from $B_t$ to $B_t+E_t$ (Line~5). 
Thus, by doing so, \expert tends to be \emph{more conservative in resource pricing (i.e., $\mu_t$), while still opportunistically using budget replenishment in actual allocation decisions}.

Next, to make Algorithm~\ref{alg:expert} self-contained, we specify the following assumptions on the reference function
$h(\mu)$ used in the mirror descent step.

\begin{assumption}[Reference function $h(\mu)$]\label{assumption:reference_function}
The reference function $h(\mu):\mathcal{R}_+^M\to\mathcal{R}$ is 
 differentiable
 and $\sigma$-strongly convex in 
 $\|\cdot\|_1$-norm in $\mathcal{R}_+^M$, i.e.,
 $h(\mu)-h(\mu')\geq \triangledown h(\mu')^{\top}(\mu-\mu')+\frac{\sigma}{2}\|\mu-\mu'\|_1^2$ for
 any $\mu,\mu'\in\mathcal{R}_+^M$. 
 \end{assumption}
 
Assumption~\ref{assumption:reference_function}
is standard in the analysis of mirror descent-based
algorithms \cite{OnlineAllocation_Dual_mirror_descent_Google_ICML_2020_balseiro2020dual,OnlineAllocation_DualMirroDescent_Google_OperationalResearch_2022_doi:10.1287/opre.2021.2242}.
Along with Assumption~\ref{assumption:utility_function}
on the utility function, it essentially ensures
that there is always a unique solution 
in the mirror descent step in Line~11 of Algorithm~\ref{alg:expert}. Importantly,
this step can recover common gradient-based update algorithms by a proper choice of the reference function. For example,  with
$h(\mu)=\sum_{m=1}^M \mu_m \log (\mu_m)$, the
update in Line~11 of Algorithm~\ref{alg:expert}
becomes $\mu_{t+1}=\mu_t\odot\exp(-\eta g_t)$ and captures
multiplicative weight updates, where
the operator ``$\odot$'' is the element-wise product \cite{Alg_MultiplicativeWeightsUpdate_Arora_2012_v008a006};
for $h(\mu)=\frac{1}{2}\|\mu\|_2^2$,
the update rule becomes $\mu_{t+1}=\left[\mu_t-\eta g_t\right]^+$
and recovers online sub-gradient descent method \cite{OnlineAllocation_DualMirroDescent_Google_OperationalResearch_2022_doi:10.1287/opre.2021.2242}.

\subsection{Performance Analysis}

We proceed with the analysis of \expert in terms of its worst-case performance. Our result highlights that \expert is asymptotically competitive
against the offline oracle $OPT$, generalizing
the prior results on the allocation of a fixed budget
\cite{OnlineAllocation_DualMirroDescent_Google_OperationalResearch_2022_doi:10.1287/opre.2021.2242} to replenishable budgets.

\begin{theorem}\label{thm:expertbound}
For any episode $y\in\mathcal{Y}$ and $\eta>0$, by Algorithm \ref{alg:expert}, the utility of \expert satisfies
\begin{equation}\label{eqn:expert_bound}
\begin{split}
OPT(y)-\alpha F_T^{\expert}(y)
\leq  
\alpha\bar{f}+
\frac{\alpha \left(\bar{\rho}+\|\bar{x}\|_{\infty}\right)^2\eta T}{2\sigma}+\frac{\alpha}{\eta}V_h(\mu,\mu_1),
\end{split}
\end{equation}
where $\alpha= \max_{m\in[M]} 
\frac{\bar{x}_m}{\rho_m}$, $\bar{\rho}=\max_{m\in[M]}\rho_m$ is the maximum per-round average budget initially assigned to the agent at round $t=1$,
$\bar{x}$ is the maximum per-round resource allocation constraint,
$V_h(\mu,\mu_1)$ is the Bregman divergence between
$\mu$ and the initial dual variable $\mu_1$ given
the $\sigma$-strongly convex reference function $h$,
and $\mu=0$ if Line~5 of Algorithm~\ref{alg:expert} is
always true, and otherwise, $\mu=\frac{\bar{f}}{\alpha\rho_j}e_j$ with $j=\arg\min_{m\in\mathcal{M}_A} V_h(\frac{\bar{f}}{\alpha\rho_m}e_m,\mu_1)$ where $\mathcal{M}_A=\left\{m|\; \exists t\in[T] \text{ such that }\hat{x}_{m,t}>(B_t+E_t)_m\right\}$, $e_m$
is a standard $M$-dimensional unit vector. Furthermore, by optimally setting $\eta=\frac{1}{\bar{\rho}+\|\bar{x}\|_{\infty}}\sqrt{2\sigma V_h(\mu,\mu_1)/T}$, we have
\begin{equation}\label{eqn:asymptotic_cr_sublinear}
\begin{split}
\lim_{T\to\infty}\sup_{y\in\mathcal{Y}}\frac{1}{T}\left(OPT(y)-\alpha F_T^{\expert}(y)\right)\leq \lim_{T\to\infty}\frac{1}{T}\left(\alpha\bar{f} + \alpha \left(\bar{\rho}+\|\bar{x}\|_{\infty}\right)\sqrt{\frac{V_h(\mu,\mu_1)T}{2\sigma}}\right)=0,
\end{split}
\end{equation}
i.e., \expert achieves an asymptotic
competitive ratio of $\frac{1}{\alpha}=\min_{m\in[M]}\frac{\rho_m}{\bar{x}_m}$ against $OPT$.\footnote{Throughout the paper, the asymptotic competitive ratio is naturally no greater than 1, i.e., $CR^{\expert}=\min\{1, \frac{1}{\alpha}\}$.}
\end{theorem}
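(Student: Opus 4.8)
The plan is to run the dual-mirror-descent competitive analysis, arranged so that the conservative price updates ($g_t=\rho-\hat{x}_t$, which ignore $E_t$) make the dual iterates behave as if the budget were the fixed quantity $T\rho$, while the opportunistic execution (using $B_t+E_t$) guarantees the realized utility is no worse than that of a $\rho$-budget competitor. First I would record the one-step mirror-descent inequality: since $h$ is $\sigma$-strongly convex in $\|\cdot\|_1$ (Assumption~\ref{assumption:reference_function}), the update in Line~11 satisfies, for every fixed comparator $\mu\ge0$,
\[
g_t^\top(\mu_t-\mu)\le \tfrac1\eta\big(V_h(\mu,\mu_t)-V_h(\mu,\mu_{t+1})\big)+\tfrac{\eta}{2\sigma}\|g_t\|_\infty^2 .
\]
Summing telescopes the Bregman terms down to $\frac1\eta V_h(\mu,\mu_1)$, and since $g_t\in\{\rho-\hat{x}_t,\,0\}$ with $\hat{x}_t\in\mathcal{X}$ we have $\|g_t\|_\infty\le\bar\rho+\|\bar{x}\|_\infty$, producing the regret terms $\frac1\eta V_h(\mu,\mu_1)+\frac{\eta T}{2\sigma}(\bar\rho+\|\bar{x}\|_\infty)^2$.

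Next I would exploit the per-round optimality of Line~4: for every $t$ and every feasible offline action $x_t^{*}\in\mathcal{X}$, the definition of $\hat{x}_t$ gives $f_t(x_t^{*})-\mu_t^\top x_t^{*}\le f_t(\hat{x}_t)-\mu_t^\top\hat{x}_t$. The idea is to sum this over $t$, substitute $\hat{x}_t=\rho-g_t$ on rounds where Line~5 holds, and feed the resulting $\sum_t g_t^\top\mu_t$ into the regret bound. This links $OPT=\sum_t f_t(x_t^{*})$ to the executed utility $F_T^{\expert}=\sum_t f_t(x_t)$ plus the regret, up to the price-weighted budget-slack sum $\sum_t\mu_t^\top(x_t^{*}-\rho)$ and the $x_t^{*}$-dependent contribution.

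The main obstacle --- and the place where both the factor $\alpha$ and the specific comparator $\mu$ enter --- is reconciling three facts: (a) the oracle may consume up to $\bar{x}_m$ per round on resource $m$ while the price is calibrated only to the per-round budget $\rho_m$; (b) the adversary may inject arbitrary replenishment $E_t$; and (c) on rounds where Line~5 fails the algorithm forfeits utility by playing $x_t=0$. I would split on whether any overuse occurs. If $\mathcal{M}_A=\varnothing$, every round executes $\hat{x}_t$, no utility is forfeited, the comparator $\mu=0$ suffices, and the slack sum is controlled because no resource is ever exhausted. If $\mathcal{M}_A\neq\varnothing$, overuse on some resource $j$ certifies that the algorithm has essentially drained resource $j$ (below one round's allocation), so its realized allocation --- hence $F_T^{\expert}$ --- is large relative to $\rho_j T$; choosing $\mu=\frac{\bar{f}}{\alpha\rho_j}e_j$ assigns this binding resource a shadow price at most $\bar{f}/\bar{x}_j$ (using $\alpha\rho_j\ge\bar{x}_j$), exactly the price at which a full allocation's cost equals the maximal per-round utility $\bar{f}$. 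Carrying this price through weak duality bounds $OPT$ by $\alpha$ times the algorithm-side quantities, because converting the per-round cap $\bar{x}_m$ into the pricing budget $\rho_m$ costs the factor $\bar{x}_m/\rho_m\le\alpha$; the residual single-round forfeiture contributes the additive $\alpha\bar{f}$.

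Finally, assembling the pieces yields exactly $OPT(y)-\alpha F_T^{\expert}(y)\le\alpha\bar{f}+\frac{\alpha(\bar\rho+\|\bar{x}\|_\infty)^2\eta T}{2\sigma}+\frac\alpha\eta V_h(\mu,\mu_1)$. Balancing the two $\eta$-dependent terms gives $\eta=\frac1{\bar\rho+\|\bar{x}\|_\infty}\sqrt{2\sigma V_h(\mu,\mu_1)/T}$ and an $O(\sqrt T)$ bound; dividing by $T$ and letting $T\to\infty$ drives the right-hand side to zero, establishing the asymptotic competitive ratio $\frac1\alpha$. I expect the delicate step to be the third paragraph: making rigorous that draining resource $j$ forces $F_T^{\expert}$ to be large while the conservative price simultaneously keeps $\mu_t$ from collapsing under adversarial replenishment, so that the two effects combine to absorb the replenishment into the factor $\alpha$ rather than leaving an unbounded $\mu^\top\sum_t\hat{E}_t$ term.
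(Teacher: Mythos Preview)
Your proposal is correct and mirrors the paper's proof: the same mirror-descent regret lemma, the same per-round inequality $\alpha f_t(x_t)\ge f_t(x_t^*)-\alpha\mu_t^\top(\rho-x_t)$ (obtained from optimality of $\hat{x}_t$ together with $f_t(0)=0$, which gives $f_t(x_t)\ge\mu_t^\top x_t$), the same case split on $\mathcal{M}_A$, and the same comparator $\mu$. Your closing worry about an unbounded $\mu^\top\sum_t\hat{E}_t$ term is unfounded---because $g_t$ uses $\rho$ rather than $\rho+E_t$, replenishment never enters the dual bookkeeping; its only role is deciding which rounds land in $\mathcal{T}_A$, and the paper's resolution of your ``delicate step'' is simply that a violation of resource $j$ forces $\sum_{t\notin\mathcal{T}_A}x_{t,j}\ge T\rho_j-\bar{x}_j$ (replenishment can only help here), so with $\mu=\tfrac{\bar f}{\alpha\rho_j}e_j$ the comparator term $\alpha\sum_{t\notin\mathcal{T}_A}\mu^\top(\rho-x_t)\le -|\mathcal{T}_A|\bar f+\alpha\bar f$ exactly cancels the $|\mathcal{T}_A|\bar f$ from the forfeited rounds.
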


The proof of Theorem~\ref{thm:expertbound} is deferred to  Appendix~\ref{appendix:proof_expert} to keep the main body
of the paper more concise for
better readability.
Our proof relies on a technique specifically designed for budget replenishment. 
Concretely, without budget replenishment, the allocation algorithm (e.g., DMD in \cite{OnlineAllocation_DualMirroDescent_Google_OperationalResearch_2022_doi:10.1287/opre.2021.2242}) stops allocation whenever any resource type in
the initial budget
$B_1$ is exhausted.
In contrast, \expert continues allocation
until the end of an episode due to new budget replenishment. To account for this, we introduce a group $\mathcal{T}_A$ of rounds that each has a budget violation event, and bound the total utility for rounds that are not in $\mathcal{T}_A$.

Theorem~\ref{thm:expertbound} can be interpreted as follows.
Without optimally setting $\eta$, by rearranging the terms in \eqref{eqn:expert_bound}, we have
$F_T^{\expert}(y)\geq \frac{1}{\alpha} OPT(y)
- \bar{f}- 
\frac{\left(\bar{\rho}+\|\bar{x}\|_{\infty}\right)^2\eta T}{2\sigma}-\frac{1}{\eta}V_h(\mu,\mu_1)$. That is,
 for any episode $y\in\mathcal{Y}$, \expert can obtain
a total utility of at least $\frac{1}{\alpha}$ times the optimal
oracle's utility, minus per-round utility bound $\bar{f}$ and a
term related to the convergence of $\mu$.
Moreover, by setting
$\eta\sim O(\frac{1}{\sqrt{T}})$, \expert achieves
an asymptotic competitive ratio bound of $\frac{1}{\alpha}$ as $T\to\infty$.
The parameter $\alpha= \max_{m\in[M]} 
\frac{\bar{x}_m}{\rho_m}$ measures how stringent
the
initially assigned per-round budget is with respect to the agent's own maximum
allocation constraint. Naturally, the larger $\alpha$ (i.e., 
the initial budget is relatively more limited), a lower competitive
ratio bound. Moreover, the asymptotic competitive ratio bound
in Theorem~\ref{thm:expertbound} 
matches the optimal bound for online allocation of a fixed budget
\cite{OnlineAllocation_DualMirroDescent_Google_OperationalResearch_2022_doi:10.1287/opre.2021.2242}.

We also note that, with the added uncertainties
due to  budget replenishment, the optimal (offline) resource price $\mu_t^*$
can also be time-varying, while the optimal resource price
$\mu^*$ is fixed when without budget replenishment \cite{OnlineAllocation_DualMirroDescent_Google_OperationalResearch_2022_doi:10.1287/opre.2021.2242}.
Consequently, even if we aggressively update
the resource price $\mu_{t}$ by directly incorporating replenished budgets
at each round,
there is still no hope to learn the optimal dynamic resource price $\mu_t^*$ with a sublinear regret (or an asymptotic competitive ratio of 1);
instead, we can incur additional utility losses due to aggressive but
potentially incorrect tracking of $\mu_t^*$ in an adversarial setting.
Therefore, \expert utilizes the design of conservative pricing
while using opportunistic allocation for actual decisions.
It adds to the literature by generalizing the state-of-the-art (asymptotically) competitive
online algorithm for the setting of a fixed budget \cite{OnlineAllocation_DualMirroDescent_Google_OperationalResearch_2022_doi:10.1287/opre.2021.2242}
to replenishable budgets.

In our setting with replenishable budgets, the adversary naturally has
more power than the setting of a fixed budget, as it can adversrially
present budget replenishments to the agent.
Thus, achieving the same optimal asymptotic competitive ratio 
as that of state-of-the-art DMD for fixed budget allocation \cite{OnlineAllocation_DualMirroDescent_Google_OperationalResearch_2022_doi:10.1287/opre.2021.2242} demonstrates the merit of \expert in terms of addressing additional uncertainties
of replenished budget.

Importantly,
our asymptotic competitive ratio $\frac{1}{\alpha}$ is optimal
in the adversarial budget replenishment setting.
Specifically, 
in the adversarial case,
it is possible that there is zero budget replenishment, or
the budget replenishment only arises in the last
decision round and the utility function for this round is chosen as zero by the adversary.
As a consequence, the replenished budget cannot be utilized to improve
the utility, and our setting essentially reduces to the no budget replenishment setting in the worst case.
This means that without further assumptions
on the budget replenishment, one cannot find a higher competitive ratio than the optimal bound $\frac{1}{\alpha}$ for online allocation with a fixed budget \cite{OnlineAllocation_DualMirroDescent_Google_OperationalResearch_2022_doi:10.1287/opre.2021.2242}.

\subsection{Extension to \experttwo with Minimum Budget Replenishment Assumption}\label{sec:experttwo}

\begin{algorithm}[!t]
\caption{Opportunistic Allocation with Conservative Pricing + (\experttwo)}
	\begin{algorithmic}[1]\label{alg:expert_2}
  \REQUIRE Unit frame length $T^*$
 and initial budget  {$B_1=\rho T$ for $\rho>0$}
 \FOR {frame $i=1$ to $K$}
 \STATE Initialize $\mu_{T_{i-1}+1}$, set learning
 rate $\eta_i>0$, assign the budget $B_{T_{i-1}+1}^{(i)}=B^{(i)}$ as Eqn.~\eqref{eqn:allocatebudget} and $\hat{\rho}_i=B^{(i)}/(T_i-T_{i-1})$,
 where $T_i=(2^{i}-1)T^*$. 
  \FOR   {$t=T_{i-1}+1$ to $T_i$}
  \STATE Receive utility function $f_t(x)$.
        \STATE Pre-select action $x_t$ based on $\mu_t$:
        $\hat{x}_t=\arg\max_{x\in \mathcal{X}}\{f_t(x)-\mu_t^\top x\}$\\
        \IF{$\hat{x}_t\leq B_t^{(i)}$}
        \STATE $x_t=\hat{x}_t$ and $g_t=-\hat{x}_t+\hat{\rho}_i$ \;\;\; 
        \ELSE
        \STATE $x_t=0$ and $g_t=0$
        \ENDIF
	\STATE Update budget $B_{t+1}^{(i)}=B_{t}^{(i)} -x_t$ and the actual remaining budget $B_{t+1}=B_{t}+E_t -x_t$
  \STATE Update dual
  $\mu_{t+1}=\arg\min_{\mu\geq0} g_t^\top\mu+\frac{1}{\eta_i}V_h(\mu,\mu_t)$.
  \ENDFOR
  \ENDFOR
 \end{algorithmic}
\end{algorithm}

In the unrestricted adversarial budget replenishment case, there can be zero
budget replenishment and hence, one cannot expect
a higher asymptotic competitive
ratio than that of the optimal bound for fixed budget allocation.
Next, to avoid the trivial case of no budget replenishment
and improve the asymptotic competitive
ratio, we make a mild assumption on the minimum budget replenishment every $T^*$
rounds (referred to as a unit frame)
and propose a new algorithm called \experttwo.


\subsubsection{The Design of \experttwo}\label{sec:design_experttwo}

As discussed in the key insight of Algorithm~\ref{alg:expert}, aggressively setting $g_t=-\hat{x}_t+\rho+E_t$ for resource pricing cannot improve the competitive ratio since $E_t$ is arbitrary and $\rho+E_t$ is not a reliable reference per-round budget in the adversarial case. On the other hand,
a higher fixed budget means
that the online allocator is less starved and hence
can increase the competitive ratio \cite{OnlineAllocation_Dual_mirror_descent_Google_ICML_2020_balseiro2020dual,OnlineAllocation_DualMirroDescent_Google_OperationalResearch_2022_doi:10.1287/opre.2021.2242}. Thus, this provides us with
an inspiration to improve the competitive ratio of \expert:
\emph{Batching the budget replenishment and allocating it later as if we had a higher fixed budget.}

Concretely, we design a new  two-level online allocation algorithm,
called \ouralgtwo, which divides an entire episode of $T$ rounds into $K$ frames and batches the budget replenishment in frame $i$ for resource allocation
in frame $i+1$. Then, 
within each frame, \ouralgtwo views
the effective budget replenishment (subject to frame-level budget allocation to be specified in Eqn.~\eqref{eqn:allocatebudget}) in the previous frame as
if it were a fixed resource and allocates it online.
 
 \experttwo is described in Algorithm~\ref{alg:expert_2},
 where we introduce a unit frame of length $T^*\geq1$ rounds during
 which a minimum amount of budget is replenished (see Definition~\ref{asp:replenishment}). Note that \ouralgtwo
 only needs the information of $T^*$, but does not know
 the minimum budget replenishment within $T^*$ rounds.
Within each frame $i\in [K]$ starting from round $T_{i-1}+1$ to round $T_i$, we initialize the dual variable, assign the budget $B^{(i)}$ as the initial budget for frame $i$, and set the reference per-round budget $\hat{\rho}_i=B^{(i)}/(T_i-T_{i-1})$. Then, by considering
that all the budget replenishment in frame $i$ is deferred
for allocation in frame $i+1$ (Line~11 of Algorithm~\ref{alg:expert_2}), 
we apply \expert with a fixed assigned frame-level budget $B^{(i)}$ to choose actions for all rounds in frame $i$. The dual variable is updated based on the reference per-round budget $\hat{\rho}_i$ and learning rate $\eta_i$ for frame $i$. 
Note that in Line~6, we make sure the allocation is not larger than the remaining frame budget $B_t^{(i)}$ which is a part of the fixed assigned frame-level budget $B^{(i)}$. This means that
the new budget replenishment in frame $i$ is not incorporated in the resource pricing or used for allocation in frame $i$. The remaining frame-level budget $B_t^{(i)}$ and the actual remaining budget $B_t$ are updated simultaneously in Line~11. 
By batching the budget replenishment in frame $i$ and
deferring it for allocation in frame $i+1$, 
\ouralgtwo 
can allocate more resources as if it had a higher fixed budget in frame $i+1$.

Nonetheless, to improve the competitive ratio, 
there are two key challenges in the design of \ouralgtwo
---
frame construction and frame-level budget assignment  --- which
we address as follows.

\textbf{Frame construction.} 
To defer the budget replenishment in one frame to
the next frame and allocate it as fixed budget,
it is crucial to appropriately decide the length of each frame, i.e., frame construction.
An intuitive way of frame construction is to divide the entire episode
 of $T$ rounds uniformly into $K=\left\lceil T/T^*\right\rceil$ frames, each with $T^*\geq1$ rounds (which is the length of a unit frame).  
 By doing so, \ouralgtwo incurs an additional term of
 $\mathcal{O}(\sqrt{T^*})$  in the reward bound of each frame
 by Theorem~\ref{thm:expertbound}
 and hence a total additional term of $\mathcal{O}\left(\sqrt{T^*}\left\lceil T/T^*\right\rceil\right)$, which grows linearly
 with $T$. 
Thus, to avoid the additional linear term $\mathcal{O}\left(\sqrt{T^*}\left\lceil T/T^*\right\rceil\right)$,
 \experttwo utilizes a doubling frame construction as follows.

Specifically, the entire episode of $T$ rounds is divided into $K=\left \lceil \log_2(T/T^*) \right \rceil $ frames, where $T^*\geq1$
is the length of a unit frame. The $i$-th frame starts from 
round $T_{i-1}+1$ and ends at round  $T_i$,
where $T_i=(2^{i}-1)T^*$.\footnote{The last frame
(i.e., $K$-th frame) starts from round $(2^{K-1}-1)T^*+1$
and ends at the last round $T$. For the convenience of presentation,
we assume $T=(2^K-1)T^*$ to be consistent with the previous frame's ending round $T_i=(2^{i}-1)T^*$.}
In other words, assuming the first frame has
a length of $T^*$ rounds,
the length of frame $i=2,\cdots,K$
is $2^{i-1} T^*$,
doubling the length
of its previous frame $i-1$. 
For each frame $i$, the additional term incurred by
\ouralgtwo is   $\mathcal{O}(\sqrt{2^{i-1} T^*})$,
the sum of which is still sublinear with respect to $T$,
keeping the asymptotic competitive ratio independent
of the choice of the initial dual in each frame.



\textbf{Frame-level budget assignment.} It remains to set the frame-level budget $B^{(i)}$ for each frame $i$ given uncertain future budget replenishment. The initial fixed budget $B_1=T\rho$
is proportionally divided into $K$ frames: 
the frame-level budget $B^{(i)}$ for each frame $i$ includes a fixed budget $2^{i-1}T^*\rho$, where $2^{i-1}T^*$ is the length of frame $i$. 
Additionally, the assigned frame-level budget $B^{(i)}$ also includes an additive budget $\Omega_i$ which comes from the budget replenishment and unused budgets assigned in previous frames. Without a maximum budget cap (i.e. $B_{\max}=\infty$), we can directly set $\Omega_i$ as the actual budget accumulation $B_{T_{i-1}+1}-(T-T_{i-1})\rho$, where $B_{T_{i-1}+1}$ is the actual remaining budget at the beginning of frame $i$ and $(T-T_{i-1})\rho=(T-(2^{i-1}-1)T^*)\rho$ is the sum of the fixed budget 
assignment reserved for the remaining frames (including frame $i$).
Thus, by combining the fixed budget and replenished budget (including unused assignments) from previous frames, the assigned total budget for frame $i$ is $B^{(i)}=B_{T_{i-1}+1}-(T-(2^i-1)T^*)\rho$.

However, if the maximum budget cap $B_{\max}$ exists, it can restrict the actual budget replenishment.
Thus, if we assign all the actually accumulated budget for frame $i$,
it can happen that 
little additional budgets (other
than the fixed budget $2^{i}T^*\rho$) can be used for frame $i+1$.
To further explain this point,
consider an online allocation problem with a linear utility function $f_t(x)=<c_t,x>$ (i.e., the inner product of $c_t$ and $x$). 
Suppose that the remaining budget $B_{T_1+1}$ at the beginning of the second frame (which has a length $2T^*$ rounds)
is as large as $B_{\max}$. This is possible if there is a large budget replenishment during the first frame.
As a result, new budget replenishments cannot be accumulated due to the budget cap $B_{\max}$ unless some budgets have been consumed.
Assume that the budget replenishment $\hat{E}_t$ and context parameter $c_t$ for the second frame are as follows. In the first $T^*+1$ rounds of the second frame, the budget replenishment is $\hat{E}_t>0$ and the context is $c_t=0$; in the following $T^*-1$ rounds of the second frame, the budget replenishment for each round is $\hat{E}_t=0$ and the context parameter $c_{t}$ is sufficiently large.
In this example, \ouralgtwo will not consume any resource during the first $T^*+1$ rounds, and instead consume all of the assigned budget $B^{(2)}$ during
the remaining $T^*-1$ rounds. 
As a result, no budget replenishment can be accumulated in this frame due to the maximum budget cap.
 If we still assign the frame budget as $B^{(2)}=B_{T^*+1}
 -(T-3T^*)\rho$ as if there were no budget cap,  the remaining budget at the beginning of the third frame will be $
 B_{T_2+1}=(T-3T^*)\rho$ and the assigned total budget for the third frame will be $4T^*\rho$, resulting in zero additive budget for the third frame ($\Omega_3=0$) other than the fixed budget assignment $4T^*\rho$.


To ensure a positive additive budget for each future frame, we need to allocate the budget replenishment $\Omega_i$ for frame $i$ more conservatively: the additive budget $\Omega_i, i\in[2,K-1]$ is set as the minimum of the actual budget accumulation $B_{T_{i-1}+1}-(T-(2^{i-1}-1)T^*)\rho$ and a threshold $\Gamma_i$, i.e., $\Omega_i=\min\{B_{T_{i-1}+1}-(T-(2^{i-1}-1)T^*)\rho,\Gamma_i\}$. 

It remains to design a proper threshold $\Gamma_i$ for frame-level budget assignment. Naturally,
if the budget cap $B_{\max}$ becomes larger, the threshold $\Gamma_i$ 
 should be set higher; also,  $\Gamma_i$ should increase with the length of the frame. We set the threshold as $\Gamma_i=2^{i-2}T^* \rho_{\max}\odot\beta$ where the operator ``$\odot$'' is the element-wise product, $\rho_{\max}=\frac{B_{\max}}{T}$ and $\beta\in R_+^M$ is a hyper-parameter indicating
 the level of conservativeness to balance between the aggressive budget assignment
 for the next frame and conservative budget reservation for subsequent future frames. Therefore, the assigned total frame-level budget for frame $i$ is the sum of the fixed budget assignment $2^{i-1}T^*\rho$ (where $\rho=\frac{B_1}{T}$) and an additive budget $\Omega_i$, i.e.
\begin{equation}\label{eqn:allocatebudget}
\begin{split}
B^{(i)}=2^{i-1}T^*\rho+\Omega_{i},
\end{split}
\end{equation}
where $\Omega_1=0$,
$\Omega_{i}=\min\left\{B_{T_{i-1}+1}-(T-(2^{i-1}-1)T^*)\rho, \Gamma_i\right\}$ with $\Gamma_i=2^{i-2}T^*\rho_{\max}\odot\beta$
for $i\in[2, K-1]$,
and $\Omega_{K}=B_{T_{K-1}+1}-2^{K-1}T^*\rho$.
When $\rho_{\max}=\frac{B_{\max}}{T}$ is sufficiently large
such that the threshold $\Gamma_i$ is not activated,
 the assigned budget becomes $B^{(i)}=B_{T_{i-1}+1}-(T-(2^{i}-1)T^*)\rho$,
 which reduces to the budget assignment without a maximum budget cap
 and shows the flexibility of our design of frame-level budget assignment.



\subsubsection{Performance Analysis}

In this section, we give the asymptotic competitive ratio of \experttwo to highlight the benefits of budget replenishment. To
avoid the adversarial case which can 
reduce to the no budget replenishment setting, 
we first define the minimum replenishment $E_{\min}\geq0$ for a unit frame with length $T^*$ and then provide the asymptotic competitive ratio relying on $E_{\min}$. The assumption of minimum budget replenishment in each unit frame is reasonably mild in practice, especially for large $T^*$. For example,
it is reasonable to assume that a minimum amount of solar renewables are replenished each day  \cite{carbon_aware_computing_radovanovic2022carbon,OnlineOpt_EnergyHarvesting_Bregman_Neely_USC_SIgmetrics_2020_10.1145/3428337,OnlineAllocation_DualMirroDescent_Google_OperationalResearch_2022_doi:10.1287/opre.2021.2242}. 
Note that $E_{\min}$ is 
decided by the environment and \ouralgtwo does not need
the knowledge of $E_{\min}$.

 \begin{definition}[Minimum budget replenishment]\label{asp:replenishment}
Given a unit frame of $T^*\geq 1$ rounds, the minimum potential budget replenishment for type-$m$ resource within each unit frame is $E_{\min,m}\geq 0$, i.e., 
$E_{\min,m} = \inf_j \left\{\sum_{t=(j-1)T^*+1}^{j\cdot T^*}\hat{E}_{t,m}\right\}$, where $\hat{E}_{t,m}$ is the budget that would be replenished at
round $t$ if $B_{\max,m}\to\infty$,
$j=0,\cdots,\left\lceil T/T^*\right\rceil-1$ is the index of a unit frame and $E_{\min}=\left[E_{\min,1}, \cdots, E_{\min,M}\right]$.
\end{definition}

\begin{theorem}\label{thm:enhanced_expertbound_cr}
If the learning rate for frame $i$ is chosen as $\eta_i=\frac{1}{\bar{\rho}+\frac{\bar{\beta}}{2}\bar{\rho}_{\max}+\|\bar{x}\|_{\infty}}\sqrt{2\sigma V_h(\mu,\mu_{T_{i-1}+1})/(2^{i-1}T^*)}$ with $\bar{\rho}_{\max}=\max_m\rho_{\max,m}$ where $\rho_{\max,m}=\frac{B_{\max,m}}{T}$ and $\bar{\beta}=\max_m\beta_{m}$, \experttwo achieves an asymptotic
competitive ratio against $OPT$ as 
\begin{equation}
CR^{\experttwo}=\min_{m\in[M]}\frac{\rho_m+\Delta\rho_m}{\bar{x}_m},
\end{equation}
where $\bar{x}_m$ is the maximum per-round allocation of type-$m$ resource and $\Delta\rho_m\geq 0$ is the improvement due to budget replenishment.
Specifically, if  $B_{\max,m}\geq (T+T^*)\rho_m$ holds for a resource $m$, we have $\Delta\rho_m = \min\left\{ \frac{E_{\min,m}}{2T^*}, \frac{2B_{\max,m}}{3(T+T^*)}-\frac{\rho_m}{3}\right\}$
with the optimal choice of $\beta_m=\frac{4T}{3(T+T^*)}-\frac{2\rho_m}{3\rho_{\max,m}}$;
and if $B_{\max,m}< (T+T^*)\rho_m$ holds for a resource $m$, we have $\Delta\rho_m = \min\left\{ \frac{E_{\min,m}}{2T^*}, \frac{B_{\max,m}}{6T^*}-\frac{(T-T^*)\rho_m}{6T^*}\right\}$ with
the optimal choice of $\beta_m=\frac{T}{3T^*}-\frac{T-T^*}{3T^*}\frac{\rho_m}{\rho_{\max,m}}$. 
Moreover, without the minimum budget replenishment
(i.e., $E_{\min,m}=0$), we have $\Delta\rho_m=0$ and the asymptotic competitive ratio $CR^{\experttwo}$ reduces to the one in Theorem~\ref{thm:expertbound}.
\end{theorem}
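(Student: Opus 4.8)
The plan is to reduce the analysis of \experttwo to a frame-by-frame application of Theorem~\ref{thm:expertbound}, treating each frame as a self-contained instance of \expert that runs with a \emph{boosted} per-round reference budget, and then to combine the frame-level guarantees. Writing $L_i=2^{i-1}T^*$ for the length of frame $i$ and recalling $B^{(i)}=2^{i-1}T^*\rho+\Omega_i$ from Eqn.~\eqref{eqn:allocatebudget}, the effective per-round budget of frame $i$ is $\hat{\rho}_{i,m}=\rho_m+\Omega_{i,m}/L_i$. Since within a frame \experttwo executes exactly the \expert update against the fixed budget $B^{(i)}$ while deferring all replenishment to the next frame, Theorem~\ref{thm:expertbound} applies to frame $i$ with $\rho$ replaced by $\hat{\rho}_i$ and $\alpha$ by $\alpha_i=\max_{m}\bar{x}_m/\hat{\rho}_{i,m}$, giving a per-frame bound of the form $OPT^{(i)}-\alpha_i F^{(i)}\le \alpha_i\bar{f}+\alpha_i(\bar{\hat{\rho}}_i+\|\bar{x}\|_\infty)\sqrt{V_h L_i/(2\sigma)}$. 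The threshold rule caps $\Omega_{i,m}\le\Gamma_{i,m}=2^{i-2}T^*\rho_{\max,m}\beta_m$, so $\hat{\rho}_{i,m}\le\rho_m+\tfrac12\rho_{\max,m}\beta_m$, which is precisely what makes the stated $\eta_i$ (with $\bar{\rho}$ replaced by $\bar{\rho}+\tfrac{\bar{\beta}}{2}\bar{\rho}_{\max}$) the per-frame-optimal learning rate.

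First I would sum these per-frame bounds over $i=1,\dots,K$. The doubling construction is exactly what keeps the aggregate additive error sublinear: the per-frame error scales as $\mathcal{O}(\sqrt{L_i})=\mathcal{O}(\sqrt{2^{i-1}T^*})$, and $\sum_{i=1}^{K}\sqrt{2^{i-1}T^*}=\mathcal{O}(\sqrt{2^{K}T^*})=\mathcal{O}(\sqrt{T})$, so after dividing by $T$ and letting $T\to\infty$ this term vanishes, as does $\sum_i\alpha_i\bar{f}=\mathcal{O}(\log T)$ since $K=\mathcal{O}(\log T)$. The comparison against the \emph{global} oracle $OPT(y)$ is then obtained by telescoping the frame-level Lagrangian inequalities $f_t(\hat{x}_t)-\mu_t^\top\hat{x}_t\ge f_t(x_t^*)-\mu_t^\top x_t^*$ against the single globally feasible offline solution $x^*$; one must verify that the accumulated dual penalty $\sum_t\mu_t^\top x_t^*$ is controlled, which ties back to the captured replenishment $\sum_i\Omega_i$ dominating the offline replenishment over the horizon.

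The hard part will be the \textbf{budget-accumulation lower bound}, namely that for every bulk frame $i\in[2,K-1]$ the boost satisfies $\Omega_{i,m}/L_i\ge\Delta\rho_m$ asymptotically. I would track the actual dynamics $B_{t+1}=B_t+E_t-x_t$ across frames. On the replenishment side, Definition~\ref{asp:replenishment} guarantees that frame $i-1$, which contains $2^{i-2}$ unit frames, receives uncapped replenishment at least $2^{i-2}E_{\min,m}$; deferred into frame $i$ of length $2^{i-1}T^*$, this contributes a per-round boost of at least $E_{\min,m}/(2T^*)$, matching the first term in the $\min$ defining $\Delta\rho_m$. The genuine difficulty is the budget cap: when $B_{t,m}$ sits near $B_{\max,m}$, the replenishment $E_t=\min(\hat{E}_t,B_{\max}-B_t)$ is clipped and the nominal $E_{\min,m}$ can be lost, so one must show that drawing down the assigned frame budget frees enough headroom to re-capture the incoming replenishment. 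This is the role of the threshold $\Gamma_i$: the choice $\Omega_{i,m}=\min\{\text{accumulation},\Gamma_{i,m}\}$ must be shown both to preserve feasibility (the accumulation $B_{T_{i-1}+1,m}-(T-T_{i-1})\rho_m$ never turns negative) and to cap the per-frame draw so that a cap-limited boost of $\tfrac12\rho_{\max,m}\beta_m$ survives into later frames. Proving that the clipped accumulation never collapses below $\min\{2^{i-2}E_{\min,m},\Gamma_{i,m}\}$, with the correct constants, is where I expect the bulk of the technical effort to lie.

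Finally I would optimize the conservativeness parameter $\beta_m$. With the boost lower-bounded by $\min\{E_{\min,m}/(2T^*),\tfrac12\rho_{\max,m}\beta_m\}$ and $\beta_m$ constrained so that the threshold neither exceeds the headroom permitted by $B_{\max,m}$ nor drives the reserved budget negative, maximizing $\Delta\rho_m$ over feasible $\beta_m$ splits into the two stated regimes according to whether $B_{\max,m}\ge(T+T^*)\rho_m$ or not; substituting $\rho_{\max,m}=B_{\max,m}/T$ and the optimal $\beta_m$ then yields the closed forms, e.g.\ $\tfrac12\rho_{\max,m}\beta_m=\tfrac{2B_{\max,m}}{3(T+T^*)}-\tfrac{\rho_m}{3}$ in the first regime. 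The degenerate case $E_{\min,m}=0$ forces the first argument of the $\min$ to zero, so $\Delta\rho_m=0$ and $CR^{\experttwo}$ collapses to the ratio of Theorem~\ref{thm:expertbound}, as claimed.
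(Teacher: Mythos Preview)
Your overall architecture matches the paper's proof: apply the Theorem~\ref{thm:expertbound} bound frame by frame with the boosted reference $\hat\rho_i$, cap $\hat\rho_{i,m}\le\rho_m+\tfrac12\rho_{\max,m}\beta_m$ via the threshold to set $\eta_i$, lower-bound $\hat\rho_{i,m}$ through a replenishment-accumulation lemma, optimize $\beta_m$, and use doubling to keep the additive error $\mathcal{O}(\sqrt{T})$. One step, however, is a wrong turn. You say the comparison against the global $OPT$ requires ``the captured replenishment $\sum_i\Omega_i$ dominating the offline replenishment over the horizon.'' This is false (the offline oracle can harvest replenishment that \experttwo loses to the cap) and, more importantly, unnecessary: the per-round Lagrangian inequality $f_t(x_t)-\mu_t^\top x_t\ge f_t(x_t^*)-\mu_t^\top x_t^*$ holds for \emph{any} $x_t^*\in\mathcal X$, and the only subsequent requirement on $x_t^*$ in the proof of Theorem~\ref{thm:expertbound} is $-\mu_t^\top x_t^*+\alpha_i\mu_t^\top\hat\rho_i\ge 0$, which follows purely from the per-round cap $x_{t,m}^*\le\bar x_m\le\alpha_i\hat\rho_{i,m}$ and the definition of $\alpha_i$. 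The offline budget and replenishment never enter, so the frame bound of Theorem~\ref{thm:expertbound} applies verbatim with the global $x_t^*$ and you simply sum; the paper does exactly this, absorbing the first two frames (length $3T^*$, no guaranteed boost) into a constant $3\bar fT^*$.

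On the accumulation lower bound, the target $\Omega_{i,m}\ge\min\{2^{i-2}E_{\min,m},\Gamma_{i,m}\}$ you propose is too strong in the cap-limited regime. The paper's key lemma proves a \emph{three}-way minimum: for $i\ge 3$,
\[
\Omega_{i,m}\ge\min\Bigl\{B_{\max,m}-2^{i-3}T^*\rho_{\max,m}\beta_m-(T-(2^{i-2}-1)T^*)\rho_m,\;\; 2^{i-2}T^*\rho_{\max,m}\beta_m,\;\; 2^{i-2}E_{\min,m}\Bigr\},
\]
where the first term is precisely the cap-clipped replenishment you flag as the difficulty, and it is $i$-dependent, so it cannot be folded into a single feasibility constraint on $\beta_m$. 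The stated $\Delta\rho_m$ and optimal $\beta_m$ arise by first taking the worst case over $i$ of the first term (after dividing by $2^{i-1}T^*$)---which is monotone in $i$ with direction depending on whether $B_{\max,m}\ge(T+T^*)\rho_m$, hence the two regimes---then equating that worst case with the second term $\tfrac12\rho_{\max,m}\beta_m$ to solve for $\beta_m$, and finally taking the min with $E_{\min,m}/(2T^*)$.
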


The proof of Theorem \ref{thm:enhanced_expertbound_cr} is deferred to Section~\ref{sec:prooftheorem3.2}. The key challenge is to lower bound the assigned frame-level budget $B^{(i)}$ in Eqn.~\eqref{eqn:allocatebudget} for frame $i$ and get an effective per-round budget $\hat{\rho}=\rho+\Delta\rho$. To do so, we construct an effective additive budget $\hat{\Omega}_i(\beta)$ for frame $i$ given any $\beta> 0$ in \eqref{eqn:effectivebudget} and prove that $\hat{\Omega}_i(\beta)$ is the infimum of the additive budget $\Omega_i(\beta)$ by \experttwo for any $\beta>0$. Then, by selecting the worst-case per-round effective reference budget $\rho+\hat{\Omega}_i/(2^{i-1}T^*)$ for each frame $i$ and optimizing it by choosing $\beta$, we obtain the  per-round budget $\hat{\rho}=\rho+\Delta \rho$. At last, by summing up the utility bounds of all the frames, the difference between the optimal utility and \experttwo is bounded as
$OPT(y)-\hat{\alpha} F_T^{\experttwo}(y)\leq \hat{C}_1+\hat{C}_2 \sqrt{T}$,
 where $\hat{\alpha}=\max_{m\in[M]}\frac{\bar{x}_m}{\rho_m+\Delta\rho_m}$,
and
$\hat{C}_1$ and $\hat{C}_2$ are two finite constants in Appendix~\ref{sec:prooftheorem3.2}. This is then translated to the asymptotic competitive ratio in Theorem~\ref{thm:enhanced_expertbound_cr}.

Different from the competitive ratio of \expert which relies on the fixed per-round budget $\rho$,
the competitive ratio of \experttwo utilizes the effective per-round budget $\rho+\Delta\rho$, which includes the fixed part
 $\rho$ and the additional part $\Delta\rho$ due to replenishment (subject to the maximum budget cap $B_{\max}$).
Importantly, $\Delta\rho$ is positive if the minimum replenishment over a unit frame $E_{\min}>0$, resulting in 
a higher asymptotic competitive ratio than \expert. When $E_{\min}=0$, there is no guarantee of minimum budget replenishment for each unit frame. Hence, the asymptotic competitive ratio of \ouralgtwo reduces to the one achieved by \expert in the worst case since we cannot rule out the case in which there is no budget replenishment at all. 
Thus, the improvement of the competitive ratio by \experttwo does not conflict with the optimality of the competitive ratio achieved by \expert for general cases (which includes the case of no budget replenishment). 

The insights of the asymptotic competitive ratio of \experttwo are further explained as follows. The improvement of the competitive ratio compared with \expert depends on $\Delta \rho_m$, which is lower bounded by the minimum of two terms. 
The first term $\frac{E_{\min,m}}{2T^*}$ indicates the effect of the minimum amount of budget replenishment within a unit frame. Naturally, a larger minimum budget replenishment can make the problem less resource-constrained and lead to a higher competitive ratio. The second term
in the minimum operation 
shows the effect of the maximum budget cap $B_{\max,m}$ on constraining the actual budget replenishment following \eqref{eqn:budget_dynamics}. 
The second term has a different expression for resource $m$ with $B_{\max,m}<(T+T^*)\rho_m$ because a small $B_{\max,m}$
can result in less space for replenishment.
No matter whether $B_{\max,m}\geq (T+T^*)\rho_m$ holds, a higher budget cap $B_{\max}$ 
allows more budgets to be replenished, thus leading to a higher competitive ratio. 
If the budget cap $B_{\max}$ is large enough, it does not constrain the budget replenishment any more and the competitive ratio improvement only depends on the minimum budget replenishment $E_{\min}$.  
 In addition, the best choices of threshold hyper-parameter $\beta_m$  increases with $\rho_{\max,m}=B_{\max,m}/T$. This is consistent with the intuition that with a larger budget cap $B_{\max,m}$, the threshold of the additive budget in Eqn.~\eqref{eqn:allocatebudget} can be set larger to assign the frame-level budget more aggressively.  
These observations all confirm the intuition that a larger budget cap
can utilize the budget replenishment more effectively, increasing the asymptotic competitive ratio.


\section{\ouralg: Learning-Augmented Online Allocation}\label{sec:learning}

While \expert and \experttwo have provable worst-case performance
guarantees (in terms of asymptotic competitive ratio),
they may not perform well on average due to their conservativeness
in resource pricing $\mu_t$ in order to address the worst-case uncertainties
in budget replenishments. 
In this section, we go beyond the worst-case and
propose a novel learning-augmented approach, \ouralg,
that integrates an ML-based online optimizer
with \expert (or \experttwo) to improve the average performance (Theorem~\ref{thm:average}) while still being able to guarantee the worst-case performance (Theorem~\ref{thm:robustness}).

\subsection{Average Utility Maximization with Worst-Case Utility Constraint}\label{sec:learning_objective}

We first present our optimization objective of designing a learning-augmented online algorithm $\pi$ as follows --- maximizing the average utility subject to a worst-case utility constraint. Since the competitive algorithms (i.e., \expert or \experttwo) have been proved to ensure the asymptotic competitive ratios, we guarantee the worst-case utility of the learning-augmented online algorithm $\pi$ by comparing it with the utility of a competitive algorithm. Thus, the objective of our learning-augmented online algorithm is 
\begin{subequations}\label{eqn:objective_learning_formulation}
 \begin{gather}\label{eqn:objective_learning}
     \max_{\pi} \mathbb{E}_{y}\left[F_T^{\pi}(y)\right]\\ 
       \label{eqn:constraint_1_learning}
       s.t., \;\;\;\; F_T^{\pi}(y)\geq \lambda F_T^{\pi^{\dagger}}(y)-R,\;\;\;\forall y\in\mathcal{Y}, 
       \end{gather}
\end{subequations}
where  $F_T^{\pi}(y)=\sum_{t=1}^Tf(x_t,c_t)$
is the total utility of an online algorithm $\pi$,
$\lambda\in[0,1]$ represents multiplicative competitiveness
of the online algorithm $\pi$ with respect
to the algorithm $\pi^{\dagger}$ (i.e., \expert or \experttwo in our case) and $R\geq0$ indicates the additive slackness in
the utility constraint. 
Note that considering a sequence-wise distribution
of 
$y\sim \mathbb{P}_y$ differs from the standard stochastic
setting where each online input $y_t$ for $t\in[t]$ is assumed to follow an 
independent and identically distributed (i.i.d.) distribution
(e.g., i.i.d. utility function $f_t$ in \cite{OnlineAllocation_DualMirroDescent_Google_OperationalResearch_2022_doi:10.1287/opre.2021.2242}, or i.i.d. potential replenishment $\hat{E}_t$ in \cite{OnlineOpt_EnergyHarvesting_Bregman_Neely_USC_SIgmetrics_2020_10.1145/3428337}), because $y_t$ for $t\in[t]$ within an sequence
can still be arbitrary in our problem \eqref{eqn:objective_learning_formulation}.

The parameters $\lambda\in[0,1]$ and $R\geq0$ can be viewed
as worst-case robustness requirement with respect
to \expert or \experttwo (denoted
as $\pi^{\dagger}$ for the convenience of presentation).
Concretely, when $\lambda\in[0,1]$ increases and/or $R\geq0$ decreases,
the online algorithm $\pi$ is closer to $\pi^{\dagger}$ in terms
of the worst-case utility, and vice versa. 
 Moreover, 
as $\pi^{\dagger}$ itself has performance guarantees
 and
is asymptotically competitive against the optimal oracle $OPT$ (Theorem~\ref{thm:expertbound} and Theorem~\ref{thm:enhanced_expertbound_cr}),
the constraint in \eqref{eqn:constraint_1_learning} also immediately
translates into provable asymptotic competitiveness of the online algorithm $\pi$
with respect to $OPT$. 
That is, given the asymptotic competitive ratio $CR^{\pi^{\dagger}}$ achieved by $\pi^{\dagger}$, the constraint \eqref{eqn:constraint_1_learning} leads to $\lim_{T\rightarrow \infty}\sup_y \frac{1}{T}\left(OPT(y)-\frac{1}{\lambda CR^{\pi^{\dagger}}}F_T^{\pi}(y)\right)\leq \lim_{T\rightarrow \infty}\sup_y \frac{1}{T}\left(OPT(y)-\frac{1}{CR^{\pi^{\dagger}}}F_T^{\pi^{\dagger}}(y)+\frac{R}{\lambda CR^{\pi^{\dagger}}}\right) \leq 0$, guaranteeing an asymptotic competitive ratio of $\lambda \cdot CR^{\pi^{\dagger}}$ for $\pi$.
In fact, considering a baseline algorithm for
worst-case robustness is also a common practice in  existing
 learning-augmented algorithms \cite{SOCO_ML_ChasingConvexBodiesFunction_Adam_COLT_2022,SOCO_OnlineOpt_UnreliablePrediction_Adam_Sigmetrics_2023_10.1145/3579442,Control_RobustConsistency_LQC_TongxinLi_Sigmetrics_2022_10.1145/3508038}. 
 Thus, in the following,
it suffices to consider \eqref{eqn:objective_learning_formulation}
to achieve the best of both worlds: maximizing the average utility
while bounding the worst-case utility (directly with respect to \expert or \experttwo and also indirectly with
respect to $OPT$).

Unlike \expert or \experttwo that is particularly designed to address
the worst-case robustness, an ML model can readily
exploit statistical information of $y\in\mathcal{Y}$ based on history instances.
Thus, one may want
to use a pure ML-based online optimizer to maximize
the average utility for solving 
\eqref{eqn:objective_learning_formulation}. Nonetheless,
ML-based optimizers typically do not have worst-case performance guarantees
as hand-crafted algorithms (\expert or \experttwo in our case) due to, e.g., training-testing distributional shifts. 
In fact, even by assuming perfect ML-based optimizers,
maximizing the average utility alone does not necessarily guarantee
the worst-case robustness in \eqref{eqn:constraint_1_learning}.
The reason is that maximizing the average utility
needs to prioritize many typical problem instances,
while the worst-case robustness needs to address those rare but possible
corner cases. In general, the trade-off
between average utility and worst-case robustness is unavoidable
and well-known for online optimization problems, thus
spurring the emerging field of learning-augmented online algorithms that leverage both ML predictions and hand-crafted algorithms (see, e.g., \cite{SOCO_ML_ChasingConvexBodiesFunction_Adam_COLT_2022,SOCO_OnlineOpt_UnreliablePrediction_Adam_Sigmetrics_2023_10.1145/3579442,OnlineOpt_Learning_Augmented_RobustnessConsistency_NIPS_2020,SOCO_MetricUntrustedPrediction_Google_ICML_2020_pmlr-v119-antoniadis20a} for studies in other online problems).

\subsection{Algorithm Design}\label{sec:learning_algorithm}

We now present the design of \ouralg, a novel learning-augmented algorithm
for online allocation with replenishable budgets 
under an additional mild assumption of
Lipschitz utility functions.
\begin{assumption}[Lipschitz utility]\label{asp:lipschitz}
 For any $t\in[T]$, the utility function $f_t(x)$ is $L$-Lipschitz continuous with respect to $x$, i.e. $\forall x,x'\in\mathcal{X}$, 
 we have $\left|f_t(x)-f_t(x')\right|\leq L\|x-x'\|$, where $L>0$
 and $\|\cdot\|$ is a norm operator. 
\end{assumption}

The Lipschitz assumption implies a bounded utility change
given a bounded action change, which is reasonable for real applications and commonly assumed in online problems.
Remember that to guarantee a competitive ratio, \expert in
Algorithm~\ref{alg:expert} and \experttwo in Algorithm~\ref{alg:expert_2} conservatively set their resource
 prices $\mu_t$ in two different conservative manners. Thus,
 the key goal of \ouralg is to overcome the conservativeness of competitive algorithms like \expert and \experttwo
by exploiting the distribution of $y\in\mathcal{Y}$ to improve the average utility
while bounding the worst-case utility loss with respect to \expert/\experttwo.\footnote{For notational simplicity, we use \ouralg to represent our learning-augmented algorithm, noting
that the competitive algorithm used
by \ouralg can also be \experttwo.}
Towards this end, \ouralg utilizes an ML policy/predictor (denoted
as $\tilde{\pi}$) as well as a competitive algorithm (denoted as $\pi^{\dagger}$) 
 that output their decisions as advice,
and then
judiciously chooses the actual online decisions.

Naturally, always following the decisions of competitive algorithm
satisfies
the worst-case utility constraint in \eqref{eqn:constraint_1_learning}, but 
fails to utilize ML for average utility improvement. On the other hand, blindly
following the ML policy 
can potentially improve the average performance but the worst-case utility constraint is not guaranteed. 

Thus,  a key challenge of learning-augmented online algorithms is how to utilize
the decisions of the ML policy and a worst-case robust algorithm (i.e., \expert and \experttwo in our case)
as online advice \cite{SOCO_ML_ChasingConvexBodiesFunction_Adam_COLT_2022,SOCO_OnlineOpt_UnreliablePrediction_Adam_Sigmetrics_2023_10.1145/3579442}.
To address this challenge,
given $\tilde{x}_t$ and $x_t^{\dagger}$
that represent the allocation decisions by the ML policy and the competitive algorithm, respectively,
 \ouralg chooses the actual decision $x_t$ using a novel 
 reservation utility which we introduce as follows. In the following,
to be consistent with the literature \cite{SOCO_ML_ChasingConvexBodiesFunction_Adam_COLT_2022},
we also refer to the ML policy's decision $\tilde{x}_t$ as ML predictions.

\textbf{Constrained decision set.} To ensure that an online algorithm $\pi$ satisfies the worst-case utility constraint \eqref{eqn:constraint_1_learning} for any sequence $y\in\mathcal{Y}$,
it might seem sufficient to guarantee $\sum_{\tau=1}^tf_t(x_t)\geq \lambda \sum_{\tau=1}^tf_t(x_t^{\dagger})-R$ for each round $t\in[T]$. Nonetheless,
even though the constraint $\sum_{\tau=1}^tf_t(x_t)\geq \lambda \sum_{\tau=1}^tf_t(x_t^{\dagger})-R$ is satisfied for round $t$, it may not
be guaranteed at round $t+1$, thus potentially violating the worst-case utility
constraint at the end of the sequence. 
Let us now consider an illustrative example to explain this point. Suppose that the algorithm $\pi$ satisfies  $\sum_{\tau=1}^tf_t(x_t)\geq \lambda \sum_{\tau=1}^tf_t(x_t^{\dagger})-R$ but
allocates more resources than $\pi^{\dagger}$ up to round $t$. Then, in future rounds,
it is possible that there is very little budget replenishment and the algorithm
$\pi^{\dagger}$ can still allocate resources to gain a higher utility, whereas the algorithm $\pi$ does not have enough resources to match the utility of 
$\pi^{\dagger}$. In other words, if $\pi$ uses more resources
than $\pi^{\dagger}$ up to round $t$,
 the satisfaction of utility constraint by $\pi$ in terms
of $\sum_{\tau=1}^tf_t(x_t)\geq \lambda \sum_{\tau=1}^tf_t(x_t^{\dagger})-R$
is just \emph{temporary} and can still be violated in the future.

To address such uncertainties in the future, 
we introduce a novel reservation utility
$\Delta(x_t)= \lambda L \sum_{m=1}^M\left[(B_{t}^{\dagger}+E_t^{\dagger}  - x_t^{\dagger})_m-(B_{t}+E_t - x_t )_m\right]^+$ into the utility constraint for each round
$t$, where $(B_{t}+E_t - x_t )_m$ means the remaining budget
for the type-$m$ resource at the end of round $t$. The interpretation of $\Delta(x_t)$
is to bound
the maximum potential utility advantage (scaled by $\lambda\in[0,1]$) obtained by $\pi^{\dagger}$ in future rounds, if $\pi^{\dagger}$ has more remaining budgets
compared to $\pi$ at the end of round $t$; on the other hand, if the algorithm $\pi$
has even more resources available than  $\pi^{\dagger}$, there is no need to add the reservation since  $\pi$ can always roll back to the decision of $\pi^{\dagger}$ in the future without worrying about budget shortages. Here, we simply use
$\Delta(x_t)$ for the convenience of presentation while suppressing its dependency on other terms such as $x_t^{\dagger}$.
Thus, by adding $\Delta(x_t)$, we now have a new constraint on the decision $x_t$
as follows:
\begin{equation}\label{eqn:constraint_new_reservation}
\sum_{i=1}^t f_t(x_i)\geq \lambda \sum_{i=1}^t f_t(x_i^{\dagger}) +\Delta(x_t)-R,
\end{equation}
which, if satisfied at round $t$, guarantees the existence of at least
one feasible decision that can still satisfy the constraint. In other words,
if the decisions $x_t$ are chosen out of the constrained set \eqref{eqn:constraint_new_reservation} for round $t\in[T]$,
worst-case utility constraint \eqref{eqn:constraint_1_learning} can be satisfied 
 at the end of any sequence $y\in\mathcal{Y}$.
To our knowledge, the design of $\Delta(x_t)$ for
constructing a constrained decision set \eqref{eqn:constraint_new_reservation}
is novel for online allocation with replenishable budgets
and also differs from many prior learning-augmented algorithms (e.g.,
\cite{SOCO_MetricUntrustedPrediction_Google_ICML_2020_pmlr-v119-antoniadis20a}
uses a pre-determined threshold for dynamically switching between ML prediction
$\tilde{x}_t$
and the worst-case robust action $x_t^{\dagger}$).

\begin{algorithm}[!t]
	\caption{Learning-Augmented Online Allocation with Replenishable Budgets (\ouralg)}
	\begin{algorithmic}[1]\label{alg:online_inference}
		\REQUIRE ML policy $\tilde{\pi}$ and the competitive algorithm $\pi^{\dagger}$ (\expert or \experttwo)
		\FOR   {$t=1$ to $T$}
		\STATE Receive reward function $f_t$, and potential budget replenishment $\hat{E}_t$.
        \STATE Get replenished budgets $E_t=\min\{\hat{E}_t,B_{\max}-B_t\}$
        for \ouralg, and $E_t^{\dagger}=\min\{\hat{E}_t,B_{\max}-B_t^{\dagger}\}$ for $\pi^{\dagger}$
        \STATE Get ML prediction $\tilde{x}_t$\\ 
       \STATE Get the action $x_t^{\dagger}$ of $\pi^{\dagger}$
       based on its own history (by Algorithm~\ref{alg:expert} or Algorithm~\ref{alg:expert_2})\\
       \STATE Choose $x_t$ by solving
       \begin{subequations}\label{eqn:projection_algorithm}
 \begin{gather}\label{eqn:projection_objective}
     x_t=\arg\min_{x\in\mathcal{X}}\|x-\tilde{x}_t\|\\ 
       \label{eqn:projection_constraint}
       s.t., \sum_{i=1}^t f_t(x_i)\geq \lambda \sum_{i=1}^t f_t(x_i^{\dagger}) +\Delta(x_t)-R, \text{ and }
         x_t \leq B_{t}+E_t, 
       \end{gather}
\end{subequations}
       where $\Delta(x_t)= \lambda L \sum_{m=1}^M\left[(B_{t}^{\dagger}+E_t^{\dagger}  - x_t^{\dagger})_m-(B_{t}+E_t - x_t )_m\right]^+$
		\STATE Update budgets $B_{t+1}=B_{t}
   -x_t+E_t$ for \ouralg, and $B^{\dagger}_{t+1}=B^{\dagger}_{t}
   -x_t^{\dagger}+E_t^{\dagger}$ for $\pi^{\dagger}$
	\ENDFOR
	\end{algorithmic}
\end{algorithm}

\textbf{Algorithm.} Next, we describe the online optimization process of \ouralg in Algorithm~\ref{alg:online_inference}. In \ouralg, the competitive algorithm (i.e., $\pi^{\dagger}$) runs independently
for the purpose of bounding the worst-case utility constraint \eqref{eqn:constraint_1_learning}, and the ML predictor $\tilde{\pi}$ takes
the actual online information $y_{1:t}$ (including the actual remaining budget $B_t$
and replenishment $E_t$) as its input and generates its prediction $\tilde{x}_t$
as advice to \ouralg. Then, $\tilde{x}_t$ is projected into a constrained decision
set \eqref{eqn:constraint_new_reservation} to find
the actual decision $x_t$ that guarantees the worst-case utility constraint.
The purpose of the projection in \ouralg is
to ensure that $x_t$ is both close to the ML prediction $\tilde{x}_t$ to exploit its potential for improving the average utility,
while
still staying inside the constrained decision set \eqref{eqn:constraint_new_reservation}
for worst-case utility constraint \eqref{eqn:constraint_1_learning}.

\textbf{ML training.} Up to this point, we have assumed that the ML predictor/policy $\tilde{\pi}$ has been provided to \ouralg for online optimization. Next, we discuss how
to train $\tilde{\pi}$ used in Algorithm \ref{alg:online_inference}. In the context of online optimization, the ML-based predictor/policy
is typically trained offline and then applied online for inference \cite{L2O_OnlineBipartiteMatching_Toronto_ArXiv_2021_DBLP:journals/corr/abs-2109-10380,L2O_NewDog_OldTrick_Google_ICLR_2019,L2O_AdversarialOnlineResource_ChuanWu_HKU_TOMPECS_2021_10.1145/3494526,Shaolei_L2O_ExpertCalibrated_SOCO_SIGMETRICS_Journal_2022}.
Here, we adopt this standard practice for \ouralg. Specifically,
we collect a training dataset $\mathcal{S}$ of episodic information $y\in\mathcal{Y}$ based on history data and/or data augmentation techniques, and build an ML model (e.g., a recurrent neural network for online sequential decision making, with each base network
parameterized by the same weights \cite{L2O_NewDog_OldTrick_Google_ICLR_2019,L2O_OnlineBipartiteMatching_Toronto_ArXiv_2021_DBLP:journals/corr/abs-2109-10380}).

We train the ML model $\tilde{\pi}$ by optimizing the expected utility obtained by Algorithm~\ref{alg:online_inference}. Denote \ouralg($\tilde{\pi}$) as the algorithm \ouralg with the ML model $\tilde{\pi}$. The training objective is expressed as 
\begin{equation}\label{eqn:training_obj}
\max_{\tilde{\pi}}\frac{1}{|\mathcal{S}|}\sum_{y\in\mathcal{S}} F_T^{\ouralg(\tilde{\pi})}(y),
\end{equation}
where $F_T^{\ouralg(\tilde{\pi})}(y)$ is the total utility of $\ouralg(\tilde{\pi})$ for the sequence $y$.

 To train the ML model, we apply the state-of-the-art backpropagation, while noting that 
differentiation of the projection operator (which itself is a constrained optimization problem) with respect to the ML 
prediction $\tilde{x}_t$ is needed and can be performed based on 
implicit differentiation techniques \cite{DNN_ImplicitLayers_Zico_Website,L2O_DifferentiableMPC_NIPS_2018_NEURIPS2018_ba6d843e,L2O_DifferentiableOptimization_Brandon_ICML_2017_amos2017optnet}.

\subsection{Performance Analysis}\label{sec:learning_analysis}

We now present the performance analysis of \ouralg in terms of its worst-case 
utility as well as its average performance. As formally stated below, our results highlight
that \ouralg guarantees the worst-case utility constraint for any sequence
$y\in\mathcal{Y}$ and meanwhile is able to exploit the benefits of ML predictions
to improve the average utility.

\subsubsection{Worst-Case Utility} We first present
the worst-case utility of \ouralg.

\begin{theorem}\label{thm:robustness}
For any $\lambda\in[0,1]$
and $R\geq0$, given any ML predictor $\tilde{\pi}$ and by the design $\Delta(x_t)= \lambda L \sum_{m=1}^M\left[(B_{t}^{\dagger}+E_t^{\dagger}  - x_t^{\dagger})_m-(B_{t}+E_t - x_t )_m\right]^+$,
\ouralg in Algorithm~\ref{alg:online_inference}
 always guarantees the worst-case utility constraint  \eqref{eqn:constraint_1_learning} for any sequence $y\in\mathcal{Y}$.
\end{theorem}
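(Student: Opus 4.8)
The plan is to reduce the end-of-episode guarantee \eqref{eqn:constraint_1_learning} to the feasibility of the per-round projection \eqref{eqn:projection_algorithm}, and then prove that feasibility by induction. Write $\delta_t=\sum_{m=1}^M\big[(B_{t+1}^{\dagger})_m-(B_{t+1})_m\big]^+$ for the end-of-round budget deficit of \ouralg relative to $\pi^{\dagger}$, where $B_{t+1}=B_t+E_t-x_t$ and $B_{t+1}^{\dagger}=B_t^{\dagger}+E_t^{\dagger}-x_t^{\dagger}$, so that the reservation is $\Delta(x_t)=\lambda L\,\delta_t\ge 0$. The reduction is then immediate: at the terminal round, constraint \eqref{eqn:projection_constraint} reads $\sum_{i=1}^T f_i(x_i)\ge\lambda\sum_{i=1}^T f_i(x_i^{\dagger})+\lambda L\delta_T-R$, and dropping the nonnegative term $\lambda L\delta_T$ yields exactly $F_T^{\ouralg}(y)\ge\lambda F_T^{\pi^{\dagger}}(y)-R$. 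Hence it suffices to show that \eqref{eqn:projection_algorithm} admits a feasible $x_t\in\mathcal{X}$ with $x_t\le B_t+E_t$ at \emph{every} round, so that the invariant $(C_t)$ ``$x_{1:t}$ satisfy \eqref{eqn:projection_constraint}'' is maintained up to $t=T$.

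I would run the induction on $t$ with $(C_t)$ as the invariant. For the base case, \ouralg and $\pi^{\dagger}$ start from the same budget $B_1=B_1^{\dagger}$, hence receive identical replenishment $E_1=E_1^{\dagger}$; thus $x_1=x_1^{\dagger}$ is budget-feasible, gives $\delta_1=0$ so $\Delta(x_1)=0$, and satisfies \eqref{eqn:projection_constraint} because $f_1(x_1^{\dagger})\ge\lambda f_1(x_1^{\dagger})$ for $\lambda\in[0,1]$ and $R\ge0$.

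For the inductive step I would exhibit an explicit feasible witness instead of arguing abstractly, namely the fallback $x_t^{\mathrm{fb}}=\min(x_t^{\dagger},\,B_t+E_t)$, which lies in $\mathcal{X}$ and respects the available budget by construction. Two ingredients drive the argument. First, Lipschitz continuity (Assumption~\ref{asp:lipschitz}) controls the utility forgone when \ouralg cannot fully afford $x_t^{\dagger}$, giving $f_t(x_t^{\mathrm{fb}})\ge f_t(x_t^{\dagger})-L\,S_t$ with shortfall $S_t=\sum_{m=1}^M(x_t^{\dagger}-B_t-E_t)_m^+$. Second, and this is precisely what the reservation is engineered to exploit, the fallback forces the deficit to contract by at least the shortfall, $\delta_{t-1}-\delta_t\ge S_t$, so that the amount of reservation \emph{released} between rounds $t-1$ and $t$ bounds $\pi^{\dagger}$'s largest possible future utility advantage arising from its surplus budget. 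Combining these two facts with the induction hypothesis $(C_{t-1})$ verifies $(C_t)$ for $x_t^{\mathrm{fb}}$; since the actual $x_t$ minimizes $\|x-\tilde{x}_t\|$ over a set that contains $x_t^{\mathrm{fb}}$, the invariant is inherited by the genuine iterate and the induction closes.

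The step I expect to be the main obstacle is establishing the deficit-contraction inequality $\delta_{t-1}-\delta_t\ge S_t$ in the presence of the cap $B_{\max}$, because the cap decouples the two replenishments, $E_t=\min(\hat E_t,B_{\max}-B_t)\neq E_t^{\dagger}=\min(\hat E_t,B_{\max}-B_t^{\dagger})$, so one cannot simply cancel a common replenishment term. The key structural observation is that each post-replenishment budget is the image of the carried budget under the coordinatewise map $z\mapsto\min(z+\hat E_t,B_{\max})$, which is nondecreasing and $1$-Lipschitz; this monotone non-expansiveness gives $(B_t^{\dagger}+E_t^{\dagger}-B_t-E_t)_m^+\le(B_t^{\dagger}-B_t)_m^+$, showing that replenishment alone can never enlarge the deficit. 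A short per-resource case split, on whether \ouralg can afford $(x_t^{\dagger})_m$ under its available budget, then attributes the remaining decrease of the deficit exactly to the fallback allocation and yields the claimed contraction, completing the proof.
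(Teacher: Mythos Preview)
Your proposal is correct and follows essentially the same route as the paper's proof: both argue by induction that the fallback action $\min(x_t^{\dagger},B_t+E_t)$ is always feasible for the per-round constraint, both use the $1$-Lipschitz property of the capping map $z\mapsto\min(z+\hat E_t,B_{\max})$ to handle the decoupled replenishments, and both carry out the same per-resource case split on whether $(B_t+E_t)_m\ge (x_t^{\dagger})_m$ to obtain the deficit contraction $\delta_{t-1}-\delta_t\ge S_t$. Your organization (naming $\delta_t$ and $S_t$ explicitly) is slightly cleaner, but the mathematical content is identical.
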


The proof of Theorem~\ref{thm:robustness} is available in Appendix~\ref{appendix:proof_robustness} and shows
that, based on the design of $\Delta(x_t)$, if \eqref{eqn:constraint_new_reservation} is satisfied for round $t$,
then there must always exist a feasible solution satisfying \eqref{eqn:constraint_new_reservation} for round $t+1$.

Theorem~\ref{thm:robustness} guarantees that the worst-case utility
constraint \eqref{eqn:constraint_1_learning} is always satisfied for any sequence $y\in\mathcal{Y}$ regardless
of how bad the ML predictions are.
Thus, even
when the training-testing distributions differ and/or the ML predictions
are adversarially modified, \ouralg can still offer worst-case utility
guarantees with respect to the competitive algorithm \expert or \experttwo.

\subsubsection{Average Utility}\label{sec:average_utility}

Besides
the robustness guarantee, the performance of
a learning-augmented algorithm is often analyzed by considering
the \emph{worst}-case competitive ratio (a.k.a., consistency) under the assumption that ML predictions are perfect and offline optimal for any sequence $y\in\mathcal{Y}$ \cite{SOCO_MetricUntrustedPrediction_Google_ICML_2020_pmlr-v119-antoniadis20a,OnlineOpt_Learning_Augmented_RobustnessConsistency_NIPS_2020}.
The consistency metric measures how closely a learning-augmented algorithm
can follow the perfect ML predictions in the worst case. 
However, an ML model in practice is typically not perfect and instead
is trained to maximize the average performance in practice.
Thus, to measure the capability of \ouralg for following
ML predictions, we directly bound the average utility of \ouralg and compare it with the average utility of the optimal unconstrained ML model $\tilde{\pi}^*=\arg\max_{\pi} \mathbb{E}_{y}\left[F_T^{\pi}(y)\right]$ that provides the best average performance. As such, 
given an optimally
trained ML model,
we measure the \emph{average}-case consistency of \ouralg
with respect to the optimal unconstrained ML model $\tilde{\pi}^*$ 
in terms of the average utility difference between \ouralg and $\tilde{\pi}^*$.
Our consideration of an optimal ML model 
essentially parallels the assumption of ``perfect ML prediction'' for worst-case consistency analysis
of learning-augmented algorithms \cite{OnlineOpt_Learning_Augmented_RobustnessConsistency_NIPS_2020,SOCO_OnlineOpt_UnreliablePrediction_Adam_Sigmetrics_2023_10.1145/3579442}, 
while noting that our optimality is in the average sense subject
to our designed constrained decision set \eqref{eqn:constraint_new_reservation}.

More concretely, 
we consider an optimal ML predictor that optimizes the average utility of \ouralg, i.e. \begin{equation}\label{eqn:opt_constrained_ml}
\tilde{\pi}^{\circ}=\arg\max_{\tilde{\pi}} \mathbb{E}_{y}\left[F_T^{\ouralg(\tilde{\pi})}(y)\right],
\end{equation}
where $\ouralg(\tilde{\pi})$ outputs the actions $x_t$ satisfying
\eqref{eqn:constraint_new_reservation} given the ML prediction $\tilde{x}_t$, 
and show the average utility bound of \ouralg in the next theorem.

\begin{theorem}\label{thm:average}
For any $\lambda\in[0,1]$
and $R\geq0$, 
the average utility of \ouralg with the optimal ML model $\tilde{\pi}^{\circ}$ is bounded by
\begin{equation}\label{eqn:consistency bound}
\mathbb{E}_y\left[F_T^{\ouralg(\tilde{\pi}^{\circ})}(y)\right]\geq \max\left\{\mathbb{E}_{y}\left[F_T^{\tilde{\pi}^*}(y)\right]-L(1-\gamma_{\lambda,R})\mathbb{E}_{y}\left[\sum_{t=1}^T \|x_t^{\dagger}-\tilde{x}_t^{*}\|\right], \mathbb{E}_{y}\left[F_T^{\pi^{\dagger}}(y)\right]\right\}
\end{equation}
where  $\gamma_{\lambda, R}=\min\left\{1, \frac{R}{2\lambda L\theta}\right\}$, 
$\theta=\max_{y}\sum_{t=1}^T\|\tilde{x}_t^*-x_t^{\dagger}\|_1$
indicates the maximum cumulative decision difference between the action $x_t^{\dagger}$ of \expert or \ouralgtwo and the action $\tilde{x}_t^*$ of the optimal unconstrained ML predictor
 $\tilde{\pi}^*$, 
and $L$ is the Lipschitz constant of
the utility function (Assumption~\ref{asp:lipschitz}). 
\end{theorem}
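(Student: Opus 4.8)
The plan is to prove the two lower bounds appearing inside the $\max$ separately, and in each case to exhibit a concrete (generally suboptimal) predictor and then invoke the optimality of $\tilde{\pi}^{\circ}$ in \eqref{eqn:opt_constrained_ml}, which guarantees $\mathbb{E}_y[F_T^{\ouralg(\tilde{\pi}^{\circ})}(y)]\ge \mathbb{E}_y[F_T^{\ouralg(\tilde{\pi})}(y)]$ for every predictor $\tilde{\pi}$. The bound $\mathbb{E}_y[F_T^{\ouralg(\tilde{\pi}^{\circ})}(y)]\ge \mathbb{E}_y[F_T^{\pi^{\dagger}}(y)]$ is the easy half: feeding the predictor that outputs $x_t^{\dagger}$ makes the choice $x_t=x_t^{\dagger}$ admissible in \eqref{eqn:constraint_new_reservation} at every round, since the cumulative utilities match, the budgets of \ouralg and $\pi^{\dagger}$ coincide so that $\Delta(x_t)=0$, and $\lambda\le1$, $R\ge0$. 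Hence \ouralg reproduces $\pi^{\dagger}$ exactly and attains $F_T^{\pi^{\dagger}}(y)$.

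For the first term I would feed the interpolated predictor $\tilde{\pi}^{\mathrm{mix}}$ that outputs $x_t^{\mathrm{mix}}=(1-\gamma)x_t^{\dagger}+\gamma\,\tilde{x}_t^{*}$ with $\gamma=\gamma_{\lambda,R}$, and show by induction on $t$ that this sequence is admissible for \eqref{eqn:constraint_new_reservation}, so that the projection \eqref{eqn:projection_algorithm} returns it unchanged. The heart of the argument is a per-round accounting of the two quantities that consume the slack $R$. First, using Assumption~\ref{asp:lipschitz} together with $\lambda\le1$ and $f_t\ge0$, one gets the pointwise inequality $f_t(x_t^{\mathrm{mix}})\ge \lambda f_t(x_t^{\dagger})-\lambda L\|x_t^{\mathrm{mix}}-x_t^{\dagger}\|$ (since $f_t(x_t^{\mathrm{mix}})\ge\lambda f_t(x_t^{\mathrm{mix}})\ge\lambda(f_t(x_t^{\dagger})-L\|x_t^{\mathrm{mix}}-x_t^{\dagger}\|)$), whence the cumulative utility deficit obeys $\lambda\sum_{i\le t}f_i(x_i^{\dagger})-\sum_{i\le t}f_i(x_i^{\mathrm{mix}})\le \lambda L\gamma\sum_{i\le t}\|\tilde{x}_i^{*}-x_i^{\dagger}\|_1\le \lambda L\gamma\theta$. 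Second, the reservation term satisfies $\Delta(x_t^{\mathrm{mix}})=\lambda L\|[B_{t+1}^{\dagger}-B_{t+1}]^{+}\|_1\le\lambda L\gamma\theta$, which I would obtain by telescoping the budget recursion while exploiting that $b\mapsto\min(b+\hat{E}_t,B_{\max})$ is non-expansive, giving $|B_{t+1}^{\dagger}-B_{t+1}|\le\sum_{i\le t}|x_i^{\dagger}-x_i^{\mathrm{mix}}|=\gamma\sum_{i\le t}|\tilde{x}_i^{*}-x_i^{\dagger}|$ element-wise.

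Adding the two bounds gives a total slack consumption of at most $2\lambda L\gamma\theta=\min\{2\lambda L\theta,R\}\le R$ by the definition of $\gamma_{\lambda,R}$, which is exactly the admissibility of \eqref{eqn:constraint_new_reservation}. Once feasibility is established, the utility bound follows from one further use of Lipschitzness: since $\|x_t^{\mathrm{mix}}-\tilde{x}_t^{*}\|=(1-\gamma)\|\tilde{x}_t^{*}-x_t^{\dagger}\|$, we obtain $F_T^{\ouralg(\tilde{\pi}^{\mathrm{mix}})}(y)\ge F_T^{\tilde{\pi}^{*}}(y)-L(1-\gamma_{\lambda,R})\sum_{t=1}^{T}\|\tilde{x}_t^{*}-x_t^{\dagger}\|$; taking expectations and applying the optimality of $\tilde{\pi}^{\circ}$ yields the first branch of the $\max$, and combining with the easy half completes the proof.

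The step I expect to be the main obstacle is the inductive verification of admissibility of the interpolated sequence, specifically the reservation bound under the budget cap $B_{\max}$: the clean telescoping identity $B_{t+1}^{\dagger}-B_{t+1}=\sum_{i\le t}(x_i-x_i^{\dagger})$ holds only without a cap, and with the cap I must instead rely on the non-expansiveness estimate above, and also check that the interpolated decision stays within the per-round budget $x_t^{\mathrm{mix}}\le B_t+E_t$ of \ouralg, whose remaining budget $B_t$ differs from both $B_t^{\dagger}$ and the ML policy's budget. Keeping the induction hypothesis strong enough to simultaneously control the cumulative utility, the budget gap $[B_{t+1}^{\dagger}-B_{t+1}]^{+}$, and the per-round budget feasibility is the delicate part; everything else reduces to the two Lipschitz estimates.
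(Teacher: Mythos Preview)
Your proposal is essentially the paper's proof: construct the interpolated policy $x_t^{\mathrm{mix}}=\gamma\,\tilde{x}_t^{*}+(1-\gamma)x_t^{\dagger}$, verify it satisfies \eqref{eqn:constraint_new_reservation} so that \ouralg can realize it, bound the utility deficit and the reservation term each by $\lambda L\gamma\theta$, and invoke optimality of $\tilde{\pi}^{\circ}$. The second branch is handled exactly as you describe (and only implicitly in the paper).

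The one place where the paper supplies an ingredient you flag but do not name is the per-round budget feasibility $x_t^{\mathrm{mix}}\le B_t+E_t$ under the cap. Non-expansiveness of $b\mapsto\min(b+\hat{E}_t,B_{\max})$ gives you the reservation bound, but it is not enough to show the one-sided inequality $B_t\ge \gamma\tilde{B}_t^{*}+(1-\gamma)B_t^{\dagger}$, which is what guarantees feasibility. The paper's induction uses \emph{concavity} of that same map: from $B_t\ge \gamma\tilde{B}_t^{*}+(1-\gamma)B_t^{\dagger}$ one gets $\min(B_t+\hat{E}_t,B_{\max})\ge \gamma\min(\tilde{B}_t^{*}+\hat{E}_t,B_{\max})+(1-\gamma)\min(B_t^{\dagger}+\hat{E}_t,B_{\max})$, and subtracting $x_t^{\mathrm{mix}}$ closes the induction. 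This simultaneously yields feasibility and the sharper reservation estimate $[B_{t+1}^{\dagger}-B_{t+1}]^{+}\le\gamma[B_{t+1}^{\dagger}-\tilde{B}_{t+1}^{*}]^{+}$, after which the paper telescopes via $1$-Lipschitzness exactly as you do. So the induction hypothesis you are looking for is precisely this convex-combination lower bound on $B_t$, proved via concavity rather than non-expansiveness.
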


The proof of Theorem~\ref{thm:average} is available in Appendix~\ref{appendix:proof_consistency}.
The key idea is to translate the constraint \eqref{eqn:constraint_new_reservation}
into a new distance constraint between $x_t$ and
$x_t^{\dagger}$. Thus, if $x_t$ is sufficiently
close to $x_t^{\dagger}$ for each round $t\in[T]$, we guarantee the worst-case utility constraint \eqref{eqn:constraint_1_learning}.
Meanwhile, by considering the optimal unconstrained ML predictor
$\tilde{\pi}^*=\arg\max_{\pi} \mathbb{E}_{y}\left[F_T^{\pi}(y)\right]$,
we find the closest distance between 
$x_t$ and the ML prediction $\tilde{x}_t^*$ subject to the distance constraint between $x_t$ and
$x_t^{\dagger}$, and use this as a feasible online algorithm. The bound of such a closest distance requires an analysis of the remaining budget perturbation depending on the non-linear budget dynamics in \eqref{eqn:budget_dynamics} due to the maximum budget cap.
Next, by optimality of $\tilde{\pi}^{\circ}$ used by \ouralg
to explicitly maximize the average utility satisfying
our constraint \eqref{eqn:constraint_new_reservation},
we obtain the bound in Theorem~\ref{thm:average}.

Theorem~\ref{thm:average} shows that the average utility of $\ouralg(\tilde{\pi}^{\circ})$ with the optimal ML model $\tilde{\pi}^{\circ}$ is no worse than that of the competitive algorithm $\pi^{\dagger}$ (\expert or \experttwo) which is the second term in the maximum operator.
The reason is that the competitive algorithm $\pi^{\dagger}$ is one of the decision policies with actions in the constrained decision sets \eqref{eqn:constraint_new_reservation}, whereas $\ouralg(\tilde{\pi}^{\circ})$ is the optimal policy satisfying the decision constraints \eqref{eqn:constraint_new_reservation}.
This indicates that, while providing the worst-case performance guarantee, \ouralg can still improve the average utility compared with the competitive algorithm (\expert or \experttwo). The improvement relies on the first term in the maximum operator, which
bounds the average utility difference between \ouralg and the 
optimal-unconstrained ML model $\tilde{\pi}^*$.

The first term in the maximum operator in Theorem~\ref{thm:average}
provides the key insight into the tradeoff between the worst-case performance and average performance.
Specifically, with a smaller $\lambda\in[0,1]$ 
and/or greater $R\geq0$, the worst-case utility constraint is less stringent 
and hence provides
more flexibility for \ouralg to exploit the benefits of ML predictions for higher average utility,
and vice versa. In particular, when $R$ is large enough or $\lambda$ is small enough, the worst-case utility constraint in \eqref{eqn:constraint_1_learning} is so relaxed that it does not affect average utility maximization. In such cases, \ouralg approaches the average utility of the optimal unconstrained ML predictor. 
 When the 
decisions of the optimal-unconstrained
ML predictor and the competitive algorithm become more distinct (i.e., increasing $\theta$ or $\mathbb{E}_{y}\left[\sum_{t=1}^T \|x_t^{\dagger}-\tilde{x}_t^{*}\|\right]$ in Theorem~\ref{thm:average}), it is natrually
more difficult to follow the ML predictions while still
staying close to the competitive algorithm for worst-case utility,
unless we lessen the worst-case utility constraint by decreasing
$\lambda\in[0,1]$ and/or increasing $R\geq0$.

Theorem~\ref{thm:average} gives the bound of average utility by assuming an optimal ML model $\tilde{\pi}^{\circ}$ which parallels the assumption of    ``perfect ML prediction''
for the worst-case consistency analysis in existing learning-augmented algorithms \cite{OnlineOpt_Learning_Augmented_RobustnessConsistency_NIPS_2020,SOCO_OnlineOpt_UnreliablePrediction_Adam_Sigmetrics_2023_10.1145/3579442}.  
However, if the ML model $\tilde{\pi}$ in \ouralg is not optimally trained, we can define the ML prediction imperfectness as $\epsilon = \mathbb{E}_{y}\left[F_T^{\tilde{\pi}^{\circ}}(y)-F_T^{\tilde{\pi}}(y)\right]$, where $\tilde{\pi}^{\circ}=\arg\max_{\tilde{\pi}}\mathbb{E}_y\left[F_T^{\ouralg(\tilde{\pi})}(y)\right]$ is the optimal ML model for \ouralg. The imperfectness
can come from a variety of sources, including
finite model capacity and potential training-testing distributional shifts.
Then, the average utility bound with respect to $\tilde{\pi}$ can be obtained by subtracting the ML imperfectness $\epsilon$ from the average utility bound in Theorem~\ref{thm:average}.
 Nevertheless, even when $\epsilon\to\infty$,
the average utility of \ouralg is always bounded by 
$\lambda \mathbb{E}_y\left[F_T^{\pi^{\dagger}}(y)\right]-R$, 
where $\mathbb{E}_y\left[F_T^{\pi^{\dagger}}(y)\right]$ is the average utility
of the competitive algorithm (\expert or \experttwo) used by \ouralg.
This is a natural
byproduct of Theorem~\ref{thm:robustness}, which guarantees the worst-case utility constraint of \ouralg with respect to the competitive algorithm.

In general, achieving
the optimal tradeoff between average utility and
the worst-case utility is extremely challenging for learning-augmented algorithms
(see, e.g., \cite{SOCO_OnlineOpt_UnreliablePrediction_Adam_Sigmetrics_2023_10.1145/3579442,SOCO_ML_ChasingConvexBodiesFunction_Adam_COLT_2022} for discussions on smoothed online convex optimization). Nonetheless, although it remains an open problem to achieve
the best tradeoff,
our result in Theorem~\ref{thm:average} provides the first characterization
of such a tradeoff in the context of learning-augmented algorithms
for online allocation with budget replenishment. 
In fact, 
even a competitive online algorithm with budget replenishment is lacking prior
to our design of \expert and \experttwo.

\section{Simulation Study}\label{sec:simulation}

In this section, we run a simulation study on sustainable AI inference powered by renewables. 
First, we present the experimental setup, followed by the comparative analysis of the results from our algorithms with 
existing baselines. We show that \ouralg 
has improved performance in terms of average utility while still being able to offer
good worst-case utility.

\subsection{Setup}
This section presents our problem setting, dataset, baseline algorithms, and  ML model architecture.

\textbf{Problem setting.} 
Edge data centers are becoming a major platform for AI inference thanks
to their proximity to end users.
To achieve sustainable AI inference on the edge, it is important to 
exploit renewable generation to replenish on-site energy storage. This can
significantly lower the carbon emissions caused by the surging  demand for AI inference  \cite{ML_Carbon_Bloom_176B_Sasha_Luccioni_arXiv_2022_luccioni2022estimating}. 
For a given AI inference service, multiple models are often available. For instance, there are eight different GPT-3 models  \cite{ML_GPT3_Energy_Others_NIPS_2020_NEURIPS2020_1457c0d6}, each with distinct model sizes, providing a flexible balance between accuracy and energy consumption.
However, the renewable sources are known for their time-varying and unstable nature. Thus, 
we can
use intermittent renewables 
to replenish the energy budgets,
and schedule an appropriate AI model
 for inference in an online manner to maximize the utility given
available energy budget constraints \cite{carbon_aware_computing_radovanovic2022carbon,green_AI_schwartz2020green}.

Specifically, we focus on an edge data center with
an on-site energy storage unit (e.g., batteries) for AI inference.
The initial energy budget is $B_1=12 kWh$. At each round $t$, the time-varying renewable energy $E_t$ is replenished to the energy storage subject to the maximum capacity constraint $B_{\max}=30 kWh$. Each problem instance has 120 rounds. If served by the full AI model,
the energy consumption for inference is $c_t$, which also measures
the total demand. Nonetheless,
the resource manager can decide an AI model at each round $t$, which consumes energy $x_t$. If a smaller AI model is chosen, then $x_t$ is also smaller,
but the inference accuracy is potentially lower. Here, we use
a utility function to denote the reward by consuming $x_t$
energy for serving the demand. Specifically,
we model the utility of serving each unit of AI inference demand as $\log (1+ \min\{1,\frac{x_t}{c_t}\})$, where the min operator means that over-using energy $x_t$ beyond the maximum demand does not offer additional utility.
Next, by using the total demand $c_t$ to scale the demand,
we have a utility function of $f_t(x_t)=c_t\log (1+ \min\{1,\frac{x_t}{c_t}\})$ at time $t$. Note that choosing $x_t=0$ means that the inference demand is
not processed by the edge (and routed to cloud data centers beyond our scope).
The remaining budget in the energy storage is then updated according to \eqref{eqn:budget_dynamics}. The goal of the resource manager 
is to maximize $\sum_{t=1}^Tf_t(x_t)$ subject to the energy budget constraint.

\textbf{Dataset.}  In our experiment, the inference demand $c_t$ comes from the GPU power usage of the BLOOM model (a large lanugage model) API running on 16 Nvidia A100 GPUs \cite{ML_Carbon_Bloom_176B_Sasha_Luccioni_arXiv_2022_luccioni2022estimating}.
The budget replenishment  $E_t$ 
(harvested renewable energy) is constructed based on the renewable dataset from California Independent System Operator \cite{caiso}, which
contains hourly solar renewables. The values are scaled down to our setting. We extend the BLOOM trace data by data augmentation to construct a training dataset consisting of 1600 problem instances, each with 120 hours. 
Then, the entire dataset is divided  into  training and testing 
sets with a 3:1 ratio.

\textbf{Baseline algorithms.}
To compare our results, we consider the following baseline algorithms.

-- \emph{OPT}: OPT is the optimal oracle algorithm that solves \eqref{eqn:objective_offline}
based on complete offline information. Thus, OPT has the highest utility for any problem instance.

-- \emph{Equal:} Equal uniformly allocates the initial budget, and greedily uses the replenished budget whenever applicable, i.e., $x_t=\min\{\bar{x},\rho+E_t\}$.

-- \emph{Greedy:} Greedy allocates as much budget as possible at each round, i.e., $x_t=\min\{\bar{x},B_t+E_t\}$.

-- \emph{DMD:} DMD (Dual Mirror Descent) updates the dual variable by mirror descent \cite{OnlineAllocation_DualMirroDescent_Google_OperationalResearch_2022_doi:10.1287/opre.2021.2242}. With replenishable budgets, {DMD} updates the dual variable based on  subgradient $g_t=\rho+E_t-\hat{x}_t$.

-- \emph{ML:} {ML} uses a standalone ML predictor to yield online allocation decisions subject to per-round budget constraints.
Such ML-based online optimizer  empirically have superior \emph{average} performance in a variety of online problems (when training-testing distributions are consistent) \cite{L2O_AdversarialOnlineResource_ChuanWu_HKU_TOMPECS_2021_10.1145/3494526,L2O_OnlineBipartiteMatching_Toronto_ArXiv_2021_DBLP:journals/corr/abs-2109-10380,L2O_NewDog_OldTrick_Google_ICLR_2019}, but cannot guarantee worst-case utility bounds.

The hyperparameters for these algorithms, if applicable, are tuned
based on our validation dataset to achieve the maximum utility.
While it is not possible to compare our algorithms with all
the existing baselines in the literature, our choice of baseline algorithms is representative in
the sense that they cover the strongest OPT, naive Greedy, state-of-the-art
competitive online algorithm DMD, as well as state-of-the-art ML-based online optimizers. Thus, we do not consider
other competitive algorithms than state-of-the-art DMD, or
other algorithms that focus on average performance 
but do not have as empirically good performance as ML.
Importantly, our design of \expert or \experttwo is provably-competitive and
\ouralg can provably satisfy the worst-case utility constraint \eqref{eqn:constraint_1_learning} with respect
to any available online algorithm by using it to replace \expert or \experttwo
as $\pi^{\dagger}$ in Algorithm~\ref{alg:online_inference}.

\textbf{ML model architecture.}
We implement the ML model based on a neural network with $2$ hidden layers each having a width of $10$ with ReLu activation. To train the model, we use the Adam optimizer for 100 epochs with a batch size of 20 and  a learning rate of $0.001$. The same ML architecture is also used in \ouralg.

\subsection{Results} 

In this section, 
we present a comparative analysis of different baselines with our proposed algorithms in terms of the average utility and empirical competitive ratio. The average utility is empirically calculated as the average utility of the testing samples and is normalized by the optimal average utility. The competitive ratio is empirically calculated as the minimum ratio of an online algorithm's utility to the optimal utility among the testing samples.
{Because of the provably better asymptotic competitive
ratio of \experttwo, we use \experttwo in \ouralg and set $R=0$ in \eqref{eqn:constraint_1_learning}} by default. All the utility values are normalized with respect to that of OPT.

\textbf{Comparison with baselines.} 
We first compare \expert, \experttwo and \ouralg with
the baseline algorithms in Table~\ref{table:result} under an
\emph{in-distribution} case where
the training-testing instances are drawn from the same distribution. 
Our results show
that ML has the highest average utility among the considered online algorithms,
while \ouralg, \expert, and \experttwo outperform other baselines (DMD, Greedy and Equal)
in terms of the average utility. Importantly, by setting $\lambda=0.3$
and $\lambda=0.6$, the average utilities of \ouralg are both improved 
compared to \expert and \experttwo, and closer to that of ML.

For the in-distribution testing case, the empirical
competitive ratio of ML is also the best, although ML does not have a guaranteed
competitive ratio. 
Besides, \expert and \experttwo both have higher competitive ratios than other baselines (DMD, Greedy, Equal), demonstrating their advantages in competitive ratio guarantees. Note that the empirical competitive ratios of \expert are higher than
 that of DMD which sets its resource price more aggressively, showing the benefit
of conservative pricing in \expert. 
Moreover, 
while the empirical competitive ratios of \ouralg are lower than ML, they have provable
competitive ratio which is scaled down by $\lambda$ compared to that of \experttwo.

\textbf{Training-testing distributional shifts.}
The above results consider that the training and testing instances
are drawn from the same distribution. Now, we consider an out-of-distribution (\emph{OOD}) testing case by adding perturbation noises to 30\% of the testing instances, and show the results in Table~\ref{table:result}. OOD
is commonly seen in practice, making ML predictions potentially untrusted.
Since the testing distribution shifts compared to the training distribution under the OOD setting, the performances of ML in terms of both average utility and competitive ratio
decrease and become worse than  \expert and \experttwo. Still, 
\expert and \experttwo outperform the other baselines (DMD, Greedy and Equal) in terms of the empirical competitive ratio, again showing their benefits in the worst-case competitive guarantee. The learning-augmented algorithm \ouralg improves the competitive ratio of ML with a large $\lambda$, showing its effects in providing the ML with guaranteed competitiveness. 

\begin{table}[!t]
\centering
\small
\begin{tabular}{l|l|c|c|c|c|c|c|c|c} 
\toprule
\multicolumn{1}{l}{} &           & \textbf{ML}     & 
\textbf{\expert} & \textbf{\experttwo} & \textbf{\ouralg-0.3} & \textbf{\ouralg-0.6}   & \textbf{DMD} & \textbf{Greedy} & \textbf{Equal}  \\ 
\hline
\multirow{2}{*}{\textbf{AVG}} & \textbf{In}    &  \textbf{0.9340} & 0.8959&0.9130&0.9311&0.9301&0.8715&0.8574&0.7246   \\ 
\cline{2-10}
                     & \textbf{OOD} & 0.8975 & 0.9036 & \textbf{0.9041} & 0.8953 & 0.8981 & 0.9016 & 0.9000 & 0.7528         \\ 
\hline
\multirow{2}{*}{\textbf{CR}}  & \textbf{In}    & \textbf{0.8645} & 0.8481 & 0.8565 & 0.8303 & 0.8223 & 0.8200 & 0.8048 & 0.5650       \\ 
\cline{2-10}
                     & \textbf{OOD} & 0.7916 & 0.8234 & \textbf{0.8411} & 0.7916 & 0.8003 & 0.8076 & 0.8048 & 0.5650   \\
\bottomrule
\end{tabular}
\caption{Comparison of average utility (AVG) and empirical competitive ratio (CR).
\ouralg-$n$ indicates
\ouralg with $\lambda=n$. 
``In'' and ``OOD'' mean in-distribution and out-of-distribution, respectively.
The average utility is normalized by that of OPT (i.e., 80.2771 and 78.8540 for the in-distribution and out-of-distribution  cases, respectively)....
Bold values represent the best for the respective metrics. }
\vspace{-2em}
\label{table:result}
\end{table}

\textbf{Performance under varying $\lambda$.}
Next, we show in Fig.~\ref{fig:avg_utility} the impact of $\lambda\in[0,1]$ on \ouralg in terms of the average utility. We see that under the in-distribution setting, when $\lambda$ increases,
the average utility of \ouralg can decrease due to
the increasingly more stringent worst-case robustness constraint \eqref{eqn:constraint_1_learning}. Interestingly,  \ouralg can achieve higher average utility than ML under some $\lambda$. This is due to the fact
that \experttwo used by \ouralg can correct the ML predictions
for some problem instances in which the original ML predictions do not perform well.
For the OOD setting, the average utility of ML is lower due to the distribution shift. By integrating \experttwo with ML, \ouralg is more beneficial in terms
of improving the average utility. 
This confirms our analysis of \ouralg in Theorems~\ref{thm:robustness} and ~\ref{thm:average}.

We show the empirical competitive ratios under different $\lambda$ in Fig.~\ref{fig:competitive ratio}.
In practice, it is difficult to evaluate the competitive ratio empirically since the adversarial samples for the algorithms under evaluation may not exist in the actual testing dataset under evaluation. 
As a result, a few unfavorable instances can affect the empirical
competitive ratio significantly.
Our results show that \ouralg has an empirical competitive ratio higher than the theoretical bound in Theorem~\ref{thm:robustness} (dotted line in Fig.~\ref{fig:competitive ratio}), which is also very common in practice.

Finally, we
show in Fig.~\ref{fig:violation_prob} the worst-case utility constraint violation probability 
for the pure standalone ML predictor. Naturally, when $\lambda$ increases, the worst-case utility constraint in \eqref{eqn:constraint_1_learning}  becomes tighter, making
the pure ML predictor violate the constraint more frequently.
This highlights the lack of worst-case utility guarantees of pure ML,
as well as the necessity of \ouralg to safeguard the ML predictions.
Thus, although ML empirically can have a good competitive
ratio (against OPT) as shown in Table~\ref{table:result} for
the in-distribution case, this empirical advantage is not always guaranteed. 
\begin{figure*}[!t]	
	\centering
 \subfigure[Average utility of \ouralg]{
		\label{fig:avg_utility}
		\includegraphics[width=0.295\textwidth]{./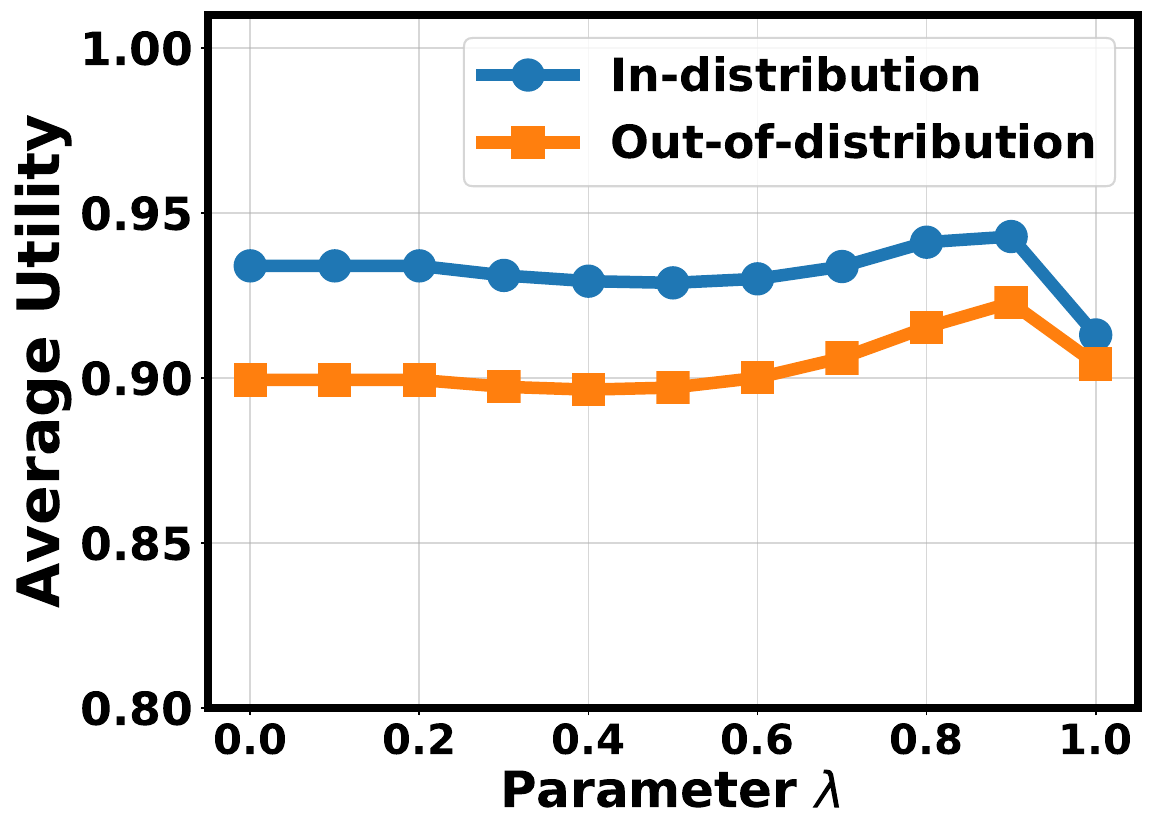}}
 \hspace{0.01\textwidth}
 \subfigure[Competitive ratio \ouralg]{
 \label{fig:cr}
 \label{fig:competitive ratio}
		 \includegraphics[width={0.295\textwidth}]{./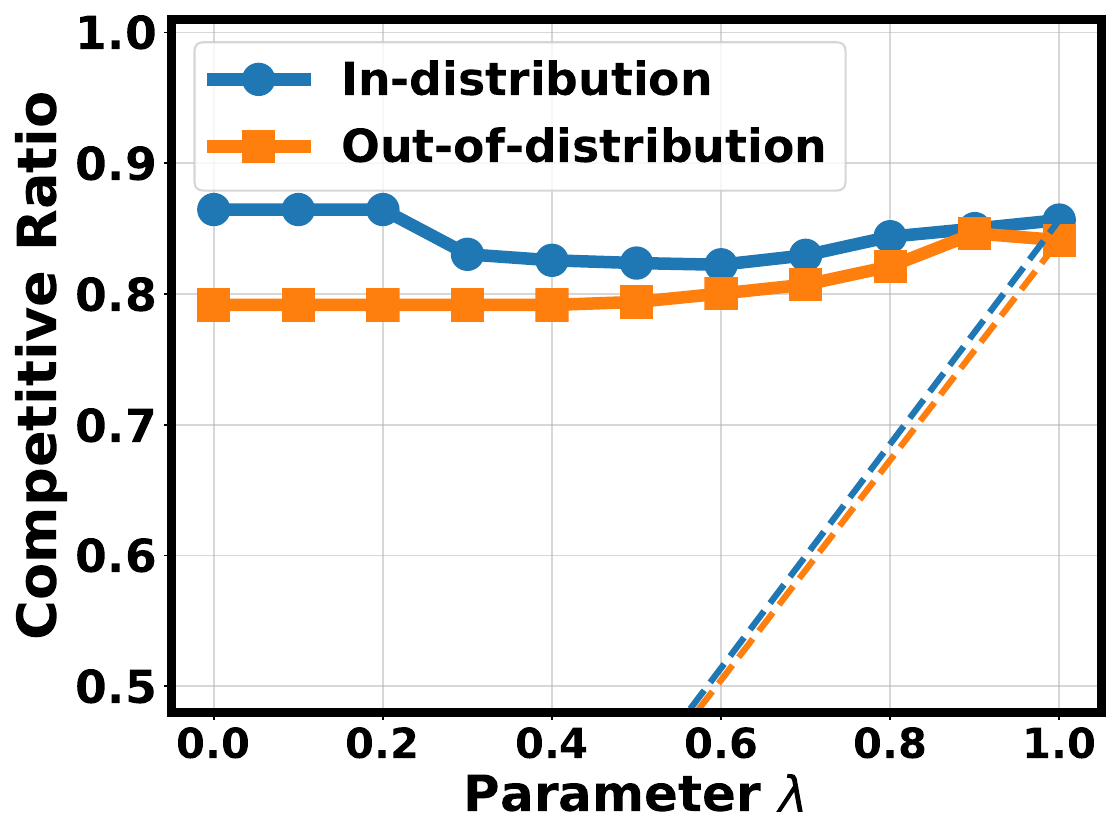}
   }
	\subfigure[Constraint violation rate of ML]{
		\label{fig:violation_prob}
		\includegraphics[width=0.295\textwidth]{./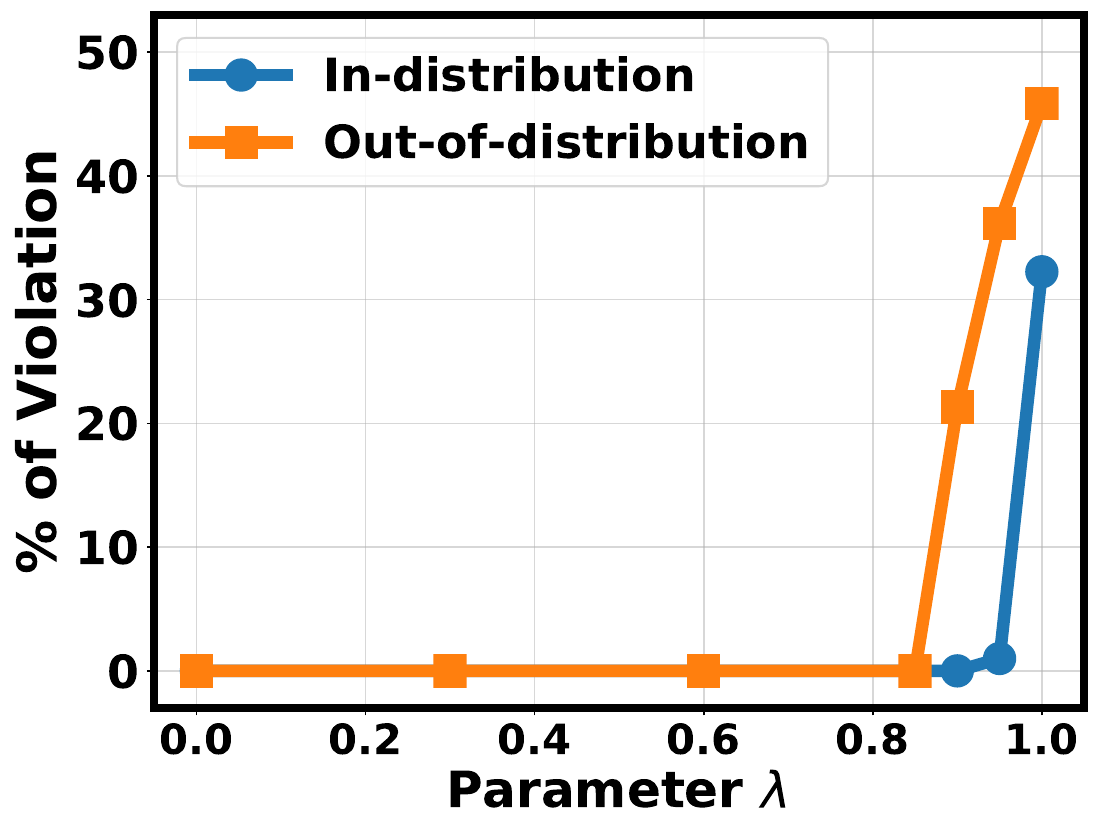}
	}%
	\centering
	\vspace{-0.4cm}	
	\caption{(a) Average utility of \ouralg with varying $\lambda\in[0,1]$; (b) Empirical competitive ratio of \ouralg with varying $\lambda\in[0,1]$
 (dotted lines represent the theoretical competitive ratio bounds);
 (c) Utility constraint \eqref{eqn:constraint_1_learning} violation  probability by the pure ML predictor.}
	\label{fig:details}
\end{figure*} 

\section{Related Works}\label{sec:related}

Online constrained allocation is a classic problem extensively studied in
the last few decades.
For example, 
 some earlier works \cite{OCO_devanur2009adwords,OCO_feldman2010online} solve online allocation by estimating a fixed Lagrangian multiplier using offline data,
 while
 other studies design online algorithms
by updating the Lagrangian multiplier or resource price in an online manner \cite{devanur2019near,OCO_agrawal2014fast,OCO_zinkevich2003online}. 
 Likewise, online algorithms have also been proposed for online stochastic optimization with distributional information \cite{Online_stochastic_optimization_non_stationary_jiang2020online}.
Online algorithms that allow budget violations are also available \cite{OnlineOpt_ConvexOptimization_TimeVaryingConstraints_Regret_LongboHuang_Performance_2021_LIU2021102240,Neely_Booklet,Neely_Universal}. 
In the context of network optimization, Lyapunov optimization can address 
various resource constraints by introducing resource queues (equivalent to the Lagrangian multiplier), but the extension to adversarial settings
with strict budget constraints is  challenging 
\cite{EnergyHarvesting_LearningAided_LyapunovOptimization_LongboHuang_TMC_2020_8807278,Neely_Booklet,Neely_LowComplexity_Lyapunov_JMLR_2020_JMLR:v21:16-494,Longbo_PowerOnlineLearningLyapuvno_Sigmetrics_2014_10.1145/2637364.2591990}.

Online allocation with budget constraints in adversarial settings is very challenging and has not been fully resolved yet.
Concretely, for online allocation with inventory constraints,
competitive online algorithms are designed by pursuing a
pseudo-optimal algorithm, but the utility function either takes a single scalar 
\cite{OnlineAllocation_SingleInventory_MinghuaChen_Sigmetrics_2019_10.1145/3322205.3311081}
or is separable over multiple dimensions \cite{OnlineAllocation_MultiInventory_MinghuaChen_Sigmetrics_2022_10.1145/3530902}. A recent study  \cite{OnlineAllocation_Dual_mirror_descent_Google_ICML_2020_balseiro2020dual} considers online allocation with a more general convex utility function and proposes dual mirror descent (DMD)
to update the Lagrangian multiplier given stochastic inputs at each round,
while the extension to adversarial settings has been considered more recently in \cite{OnlineAllocation_DualMirroDescent_Google_OperationalResearch_2022_doi:10.1287/opre.2021.2242} and extension
to uncertain time horizons is studied in \cite{OnlineAllocation_UncertainHorizon_DMD_Sigmetrics_2023_10.1145/3578338.3593559}.
 Nonetheless, these studies do not apply to online budget replenishment,
which we address by proposing provably-competitive \expert and \experttwo.

ML predictors/policies have been emerging for exploiting
the distributional information of problem inputs and hence improving the average performance
of various (online) optimization problems \cite{L2O_LearningToOptimize_Berkeley_ICLR_2017,L2O_Survey_Benchmark_ZhangyangWang_WotaoYin_arXiv_2021_chen2021learning,Shaolei_LearningRobustCombinatorial_Zhihui_Infocom_2022}. 
For example, online scheduling, resource management, and classic secretary problems
\cite{L2O_NewDog_OldTrick_Google_ICLR_2019,L2O_Combinatorial_Reinforcement_AAAI_2020,L2O_Scheduling_DRL_Infocom_2019_8737649,Shaolei_LearningRobustCombinatorial_Zhihui_Infocom_2022} have all been considered. 
Nonetheless,
 a major drawback of these standalone ML-based optimizers is
that they do not have worst-case performance guarantees
and may have very high or even unbounded losses in the worst case.
As a consequence, they may not be suitable for mission-critical applications. 
While constrained ML-based policies \cite{constrained_MDP_efroni2020exploration,constrained_RL_primal_dual_ding2020natural,conservative_bandits_wu2016conservative,Conservative_Bandits_LinearContextual_NIPS2017_bdc4626a} are available, they focus on orthogonal
challenges (i.e., unknown cost/utility functions) and typically
focus on the average constraint, rather than worst-case utility constraint
for any problem instance.

\ouralg is relevant to the emerging field
of learning-augmented algorithms 
\cite{Shaolei_SOCO_RobustLearning_OnlineOpt_MemoryCosts_Infocom_2023,SOCO_ML_ChasingConvexBodiesFunction_Adam_COLT_2022,OnlineOpt_ML_Adivce_Survey_2016_10.1145/2993749.2993766,OnlineOpt_Learning_Augmented_RobustnessConsistency_NIPS_2020,OnlineOpt_ML_Adivce_Survey_2016_10.1145/2993749.2993766,OnlineOpt_ML_Advice_CompetitiveCache_Google_JACM_2021_10.1145/3447579}.
The goal of typical learning-augmented algorithms is to improve
the worst-case competitive ratio when the ML prediction is perfect,
while bounding the worst-case competitive ratio when ML predicition
is arbitrarily bad. While it has been considered
in a variety of settings, a learning-augmented algorithm for online allocation with replenishable
budgets is still lacking. Thus, \ouralg addresses this gap and is
the first learning-augmented algorithm for online allocation with replenishable budgets that
offers worst-case utility guarantees for any problem instance.

\section{Conclusion}

In this paper, we study online resource allocation with replenishable budgets,
and propose
novel competitive algorithms, called 
\expert and \experttwo, that 
conservatively adjusts dual variables while opportunistically
utilizing available resources.
We prove, for the first time, that \expert and \experttwo both achieve bounded asymptotic
competitive ratios in adversarial settings as the number of decision rounds
$T\to\infty$. 
In particular, under the mild assumption
that the budget is replenished every $T^*$ rounds,
\ouralgtwo can improve the asymptotic competitive ratio over
\expert.
 Then, to
address the conservativeness of \expert,
we move beyond the worst-case and propose
\ouralg, a novel learning-augmented algorithm
for our problem setting.
\ouralg can provably improve
the average utility compared to \expert and \experttwo when the ML predictor
is properly trained, while still offering worst-case utility guarantees.
Finally, we perform simulation studies using online power allocation 
with energy harvesting. Our results validate our analysis
and demonstrate the empirical benefits of \ouralg compared to existing baselines.

\section*{ACKNOWLEDGEMENT}
This work was supported in part by the NSF under grants CNS-1910208 and CNS-2007115.


\received{February 2023}
\received[revised]{January 2024}
\received[accepted]{January 2024}

\appendix

\section*{Appendix}
\section{Proof of Theorem~\ref{thm:expertbound}}\label{appendix:proof_expert}
We now prove Theorem~\ref{thm:expertbound}
and first restate the convergence lemma of online mirror descent.
\begin{lemma}[\cite{OnlineAllocation_DualMirroDescent_Google_OperationalResearch_2022_doi:10.1287/opre.2021.2242,tutorial_online_learning_orabona2019modern}]\label{lma:mirrordescent}
Let $V_h(x,y)=h(x)-h(y)-\triangledown h(y)^{\top}(x-y)$ be the Bregman divergence based on a $\sigma$-strongly convex function $h$.  If $w_t(\mu)$ is a convex function with respect to $\mu\in\mathcal{D}$ where $\mathcal{D}$ is a convex set and its sub-gradient satisfies $\|\partial_{\mu} w_t(\mu)\|_{\infty}\leq G$, by updating the variable $\mu_{t+1}=\arg\min_{\mu\in\mathcal{D}} \mu^\top\partial_{\mu} w_t(\mu)+\frac{1}{\eta}V_h(\mu,\mu_t)$ from some initial variable $\mu_1$, it holds for any $\mu\in\mathcal{D}$ that
\begin{equation}
\sum_{t=1}^Tw_t(\mu_t)-w_t(\mu)\leq \frac{G^2\eta}{2\sigma}T+\frac{1}{\eta}V_h(\mu,\mu_1).
\end{equation}
\end{lemma}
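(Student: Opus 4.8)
The plan is to establish this standard online mirror descent (OMD) regret inequality through the classical three-point analysis of Bregman divergences. First I would fix a subgradient $g_t=\partial_\mu w_t(\mu_t)$, so that the update reads $\mu_{t+1}=\arg\min_{\mu\in\mathcal{D}}\{g_t^\top\mu+\frac{1}{\eta}V_h(\mu,\mu_t)\}$. By convexity of $w_t$, the per-round instantaneous regret is controlled linearly, $w_t(\mu_t)-w_t(\mu)\leq g_t^\top(\mu_t-\mu)$, which I would split as $g_t^\top(\mu_t-\mu_{t+1})+g_t^\top(\mu_{t+1}-\mu)$ and bound the two pieces separately.

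For the second piece, I would invoke the first-order optimality condition of the proximal update over the convex set $\mathcal{D}$. Since $\mu_{t+1}$ minimizes $g_t^\top\mu+\frac{1}{\eta}V_h(\mu,\mu_t)$ and $\nabla_\mu V_h(\mu,\mu_t)=\nabla h(\mu)-\nabla h(\mu_t)$, we have $\langle g_t+\frac{1}{\eta}(\nabla h(\mu_{t+1})-\nabla h(\mu_t)),\,\mu-\mu_{t+1}\rangle\geq0$ for every $\mu\in\mathcal{D}$. Rearranging and applying the three-point identity $\langle\nabla h(\mu_{t+1})-\nabla h(\mu_t),\,\mu-\mu_{t+1}\rangle=V_h(\mu,\mu_t)-V_h(\mu,\mu_{t+1})-V_h(\mu_{t+1},\mu_t)$ yields $g_t^\top(\mu_{t+1}-\mu)\leq\frac{1}{\eta}\big[V_h(\mu,\mu_t)-V_h(\mu,\mu_{t+1})-V_h(\mu_{t+1},\mu_t)\big]$.

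For the first piece, I would bound the cross term using the dual-norm pairing together with $\sigma$-strong convexity. H\"older's inequality gives $g_t^\top(\mu_t-\mu_{t+1})\leq\|g_t\|_\infty\|\mu_t-\mu_{t+1}\|_1\leq G\|\mu_t-\mu_{t+1}\|_1$, while strong convexity gives $V_h(\mu_{t+1},\mu_t)\geq\frac{\sigma}{2}\|\mu_{t+1}-\mu_t\|_1^2$. Combining these with the inequality from the previous paragraph and completing the square (Young's inequality $Ga-\frac{\sigma}{2\eta}a^2\leq\frac{G^2\eta}{2\sigma}$ with $a=\|\mu_t-\mu_{t+1}\|_1$) lets the negative $-\frac{1}{\eta}V_h(\mu_{t+1},\mu_t)$ term absorb the cross term and produces the per-round bound $w_t(\mu_t)-w_t(\mu)\leq\frac{G^2\eta}{2\sigma}+\frac{1}{\eta}\big[V_h(\mu,\mu_t)-V_h(\mu,\mu_{t+1})\big]$.

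Finally I would sum over $t=1,\dots,T$; the Bregman terms telescope to $\frac{1}{\eta}\big[V_h(\mu,\mu_1)-V_h(\mu,\mu_{T+1})\big]$, and discarding the nonnegative $V_h(\mu,\mu_{T+1})\geq0$ gives exactly the claimed bound. The one step requiring care is matching the norms: the subgradient is measured in $\|\cdot\|_\infty$ while the strong convexity of $h$ is stated in $\|\cdot\|_1$ (consistent with Assumption~\ref{assumption:reference_function}), so I must ensure the H\"older pairing and the strong-convexity lower bound are applied with this dual pair $(\|\cdot\|_1,\|\cdot\|_\infty)$ consistently. This norm-matching is the only genuine obstacle; the remainder is the textbook OMD telescoping argument.
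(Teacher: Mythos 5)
Your proposal is correct: the linearization via convexity, the first-order optimality condition combined with the three-point identity for Bregman divergences, the H\"older pairing of $\|\cdot\|_\infty$ on subgradients with $\|\cdot\|_1$ from the strong convexity of $h$ (consistent with Assumption~\ref{assumption:reference_function}), Young's inequality to absorb $-\frac{1}{\eta}V_h(\mu_{t+1},\mu_t)$, and the telescoping sum constitute precisely the standard online mirror descent analysis. The paper itself does not prove this lemma but imports it by citation from \cite{OnlineAllocation_DualMirroDescent_Google_OperationalResearch_2022_doi:10.1287/opre.2021.2242,tutorial_online_learning_orabona2019modern}, and your argument matches the proof given in those sources.
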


\noindent\textbf{Proof of Theorem \ref{thm:expertbound}}

\begin{proof}
We define $\mathcal{T}_A=\left\{\tau_1,\cdots, \tau_{|\mathcal{T}_A|}\right\}$ as a set of rounds when  $\hat{x}_t$ violates the budget constraint, i.e. $\forall \tau\in \mathcal{T}_A$, there exists a dimension $m$ such that $(\hat{x}_{\tau})_m>(B_{\tau}+E_{\tau})_m$.
By our algorithm design, if $t\in \mathcal{T}_A$, we choose $x_t=0$ and $g_t=0$.
Define a sequence of functions as
\begin{equation}
w_t(\mu)=\mu_t^\top g_t=\left\{\begin{matrix}
\mu_t^\top(\rho-\hat{x}_t), & t\notin \mathcal{T}_A,\\ 
0, & t\in \mathcal{T}_A.
\end{matrix}\right. 
\end{equation}
By Lemma \ref{lma:mirrordescent}, we have
\begin{equation}\label{eqn:dual_convergence}
\sum_{t=1}^T w_t(\mu_t)-w_t(\mu)\leq \frac{G^2\eta}{2\sigma}T +\frac{1}{\eta}V_h(\mu,\mu_1),
\end{equation}
where $G=\sup \|g_t\|_{\infty}\leq \bar{\rho}+\|\bar{x}\|_{\infty}$.
By our algorithm design, $\forall t\notin \mathcal{T}_A$, the action is chosen as $x_t=\arg\max_{x\in\mathcal{X}}\{f_t(x)-\mu_t^\top x\}$, we have
$f_t(x_t^*)\leq f_t(x_t)+\mu_t^\top (x_t^*-x_t)$ and $0=f_t(0)\leq f_t(x_t)-\mu_t^\top x_t$.
Thus we have 
\begin{equation}\label{eqn:adversarial_utility_bound}
\begin{split}
\alpha f_t(x_t)&=f_t(x_t)+(\alpha-1)f_t(x_t)\\
&\geq f_t(x_t^*)-\mu_t^\top x_t^* +\mu_t^\top x_t+(\alpha-1)f_t(x_t)\\
&\geq f_t(x_t^*)-\mu_t^\top x_t^* +\mu_t^\top x_t+(\alpha-1)\mu_t^\top x_t\\
&=f_t(x_t^*)-\alpha \mu_t^\top (\rho-x_t) - \mu_t^
\top x_t^* +\alpha \mu_t^\top\rho\\
&\geq f_t(x_t^*)-\alpha w_t(\mu_t),
\end{split}
\end{equation}
where the last inequality holds by setting $\alpha= \max_{m\in[M]}  
\frac{\bar{x}_m}{\rho_m}$.

Then for any $\mu>0$, we have
\begin{equation}\label{eqn:regret}
\begin{split}
&OPT(y)-\alpha F_T(y)\\
\leq &\sum_{t=1}^T f_t(x_t^*)-\alpha \sum_{t\notin\mathcal{T}_A}f_t(x_t)\\
\leq & \sum_{t=1}^T f_t(x_t^*)-\sum_{t\notin\mathcal{T}_A}f_t(x_t^*)+\sum_{t\notin\mathcal{T}_A}\alpha w_t(\mu_t)\\
\leq  & \sum_{t\in\mathcal{T}_A}f_t(x_t^*)+\alpha\sum_{t\notin\mathcal{T}_A}
w_t(\mu)+\alpha\left(\frac{G^2\eta}{2\sigma}T +\frac{1}{\eta}V_h(\mu,\mu_1)\right)\\
\leq &|\mathcal{T}_A|\bar{f}+\alpha\sum_{t\notin\mathcal{T}_A}
\mu^\top(\rho-x_t)+\alpha\left(\frac{G^2\eta}{2\sigma}T +\frac{1}{\eta}V_h(\mu,\mu_1)\right)
\end{split}
\end{equation}
where the first inequality holds because the utility are non-negative, the second inequality holds by \eqref{eqn:adversarial_utility_bound}, the third inequality holds by Lemma \ref{lma:mirrordescent}, and the last inequality holds by $f_t\leq \bar{f}$.

Now it remains to choose $\mu$ to get the bound. 
If $|\mathcal{T}_A|=0$, set $\mu=0$, and the bound holds.
Otherwise, we choose $\mu$ as follows. 
Define $\mathcal{M}_A$ is the set of resources of which the corresponding constraints are violated, i.e.  for $m\in\mathcal{M}_A$, $\exists t\in[T]$ such that $\hat{x}_{m,t}>(B_t+E_t)_m$.
Since the consumed resource plus $\hat{x}_{t,m}$ is larger than the initial budget $B_{1,m}$ when the constraint resource $m$ is violated and $\hat{x}_{t,m}\leq \bar{x}_m$, it holds for resource $m\in \mathcal{M}_A$ that 
\begin{equation}\label{eqn:remaining_budget}
\begin{split}
\sum_{t\notin\mathcal{T}_A}x_{t,m}+\bar{x}_m\geq B_{1,m}=\rho_m T.
\end{split}
\end{equation}
We choose one resource $j\in \mathcal{M}_A$ and set $\mu=\frac{\bar{f}}{\alpha\rho_j}e_j$ where $e_j$ is a unit vector with $j$th entry being one and other entries being zero, it holds that 
\begin{equation}\label{eqn:bound_w_mu}
\begin{split}
&\alpha\sum_{t\notin\mathcal{T}_A}
\mu^\top(\rho-x_t)\\
=&\alpha\sum_{t\notin\mathcal{T}_A}
\mu_j(\rho_j-x_{t,j})\\
\leq& \alpha
(T-|\mathcal{T}_A|)\mu_j\rho_j-\alpha \mu_j(T\rho_j-\bar{x}_j)\\
\leq & -\alpha
|\mathcal{T}_A|\mu_j\rho_j+\alpha \mu_j\bar{x}_j\\
\leq &-|\mathcal{T}_A|\bar{f}+\alpha\bar{f},
\end{split}
\end{equation}
where the first inequality holds by \eqref{eqn:remaining_budget}, and the last inequality holds by the choice of $\mu$.

Substituting \eqref{eqn:bound_w_mu} into \eqref{eqn:regret}, we get the bound as
\begin{equation}
\begin{split}
OPT(y)-\alpha R_T^{DMD}(y)
\leq \alpha\bar{f}+\frac{\alpha G^2\eta T}{2\sigma}+\frac{\alpha}{\eta}V_h(\mu,\mu_1),
\end{split}
\end{equation}
where $\alpha= \sup_{m\in[M]} 
\frac{\bar{x}_m}{\rho_m}$, and $\mu=0$ if $\mathcal{M}_A=\emptyset$. Otherwise, $\mu=\frac{\bar{f}}{\alpha\rho_j}e_j, j=\arg\min_m V_h(\frac{\bar{f}}{\alpha\rho_m}e_m,\mu_1), m\in\mathcal{M}_A\}$.
Thus, we complete the proof.
\end{proof}

\section{Proof of Theorem~\ref{thm:enhanced_expertbound_cr}}\label{sec:prooftheorem3.2}
\begin{figure*}[!t]	
	\centering
		 \includegraphics[width={0.8\textwidth}]{./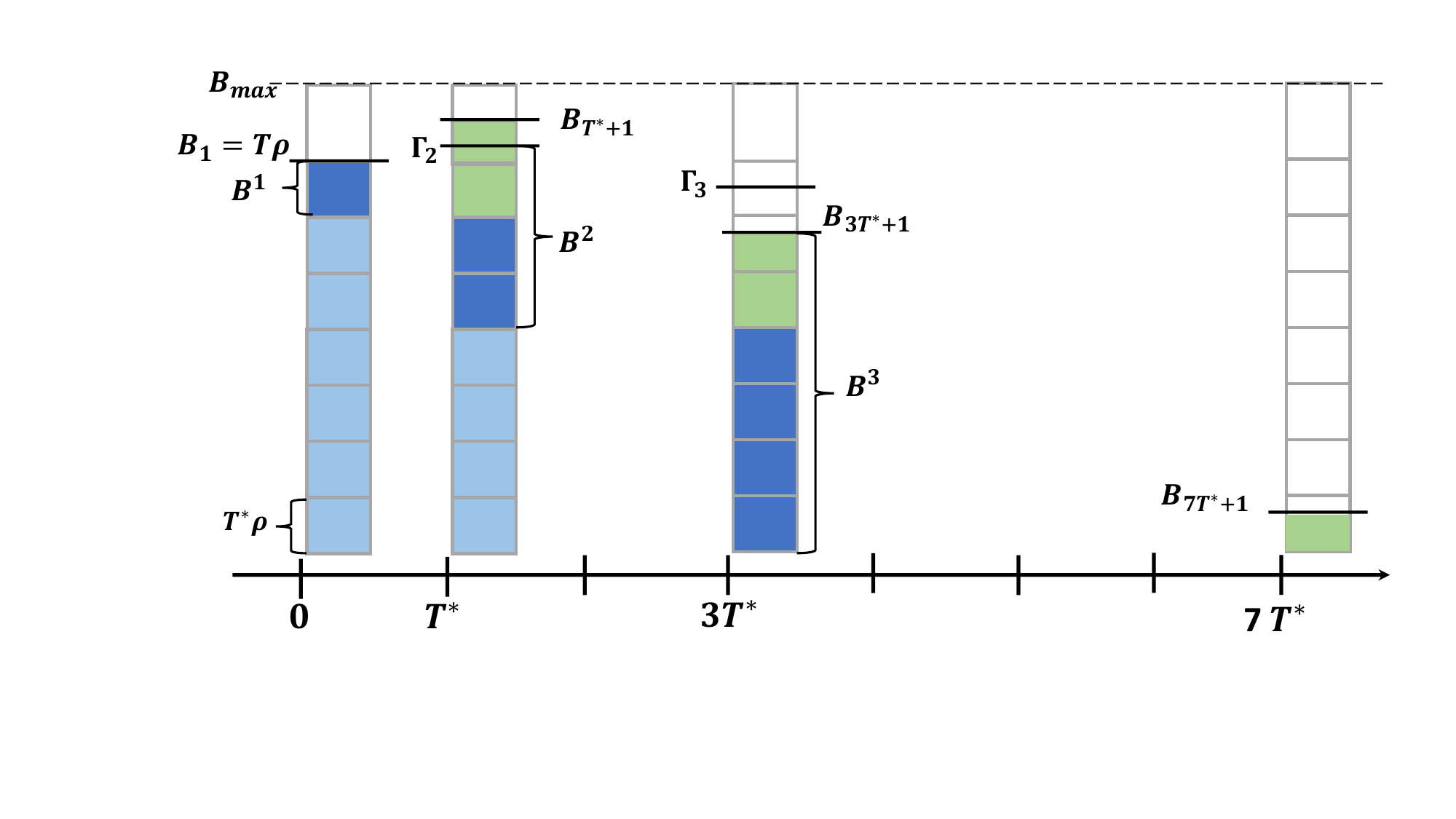}
	\caption{An example of budget assignment with $T=7T^*$. Colored rectangles indicate the amount of remained budget and white rectangles are the spaces in the storage. Dark blue rectangles indicate permanent budgets $2^{i-1}T^*\rho$ for the current frame. Light blue rectangles indicate permanent budgets for the future frames $(T-(2^{(i)}-1)T^*)\rho$. Green rectangles indicate the budget accumulation  $\min\{ B_{T_{i-1}+1}-(T-(2^{i-1}-1)T^*)\rho,  2^{i-2}T^*\rho_{\max}\odot\beta\}$.   }
	\label{fig:budget_illustrtion}
\end{figure*}

\begin{lemma}\label{lma:assignbudgetlowerbound}
If a fixed budget $B^{(i)}=2^{i-1}T^*\rho+\Omega_{i}$ where $\Omega_{i}=\min\{B_{T_{i-1}+1}-(T-(2^{i-1}-1)T^*)\rho, 2^{i-2}T^*\rho_{\max}\odot\beta\}$ where $\rho_{\max}=B_{\max}/T$ is assigned to each frame $i,1\leq i\leq K$ with $2^{i-1}T^*$ rounds, the additive budget $\Omega_i$ is greater or equal to equivalent additive budget $\hat{\Omega}_i,1\leq i\leq K-1$ which is expressed as 
\begin{subequations}
\label{eqn:effectivebudget}
\begin{align}
&\hat{\Omega}_1=0\\
&\hat{\Omega}_2=\min\{T\rho_{\max}-T\rho,E'_{\min}\}\\
&\hat{\Omega}_{i}=\min\{T\rho_{\max}-2^{i-3}T^*\rho_{\max}\odot\beta-(T-(2^{i-2}-1)T^*)\rho,2^{i-2}E'_{\min}\},3\leq i\leq K
\end{align}
\end{subequations}
where $E_{\min}'=\min\{E_{\min}, T^*\rho_{\max}\odot\beta\}$.
\end{lemma}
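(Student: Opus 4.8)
The plan is to prove the element-wise inequality $\Omega_i \ge \hat{\Omega}_i$ frame by frame, reducing it to a lower bound on the \emph{accumulated budget} $A_i := B_{T_{i-1}+1} - (T-T_{i-1})\rho$ at the start of frame $i$, and then establishing that bound through a two-case analysis on whether the budget cap $B_{\max}$ is reached during the preceding frame $i-1$.

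First I would carry out the reduction. For $i \in [2,K-1]$ we have $\Omega_i = \min\{A_i,\Gamma_i\}$ with $\Gamma_i = 2^{i-2}T^*\rho_{\max}\odot\beta$; moreover $\Omega_1 = 0 = \hat{\Omega}_1$, and using the normalization $T = (2^K-1)T^*$ one checks $T - T_{K-1} = 2^{K-1}T^*$, so that $\Omega_K = A_K$. Since $E'_{\min} = \min\{E_{\min}, T^*\rho_{\max}\odot\beta\} \le T^*\rho_{\max}\odot\beta$, the second argument of the min defining $\hat{\Omega}_i$ satisfies $2^{i-2}E'_{\min} \le \Gamma_i$, hence $\hat{\Omega}_i \le \Gamma_i$. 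Therefore it suffices to show $A_i \ge \hat{\Omega}_i$ for every $i \ge 2$: this immediately yields $\Omega_i = \min\{A_i,\Gamma_i\} \ge \hat{\Omega}_i$, and $\Omega_K = A_K \ge \hat{\Omega}_K$.

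Next, fixing $i \ge 2$, I would telescope the actual budget dynamics \eqref{eqn:budget_dynamics} over frame $i-1$, noting that frame $i-1$ spans exactly $2^{i-2}$ consecutive unit frames (its endpoints $T_{i-2},T_{i-1}$ are integer multiples of $T^*$) and that its total allocation is bounded by the assigned frame budget $B^{(i-1)} = 2^{i-2}T^*\rho + \Omega_{i-1}$. \emph{Case 1 (cap never binds in frame $i-1$):} then $E_t = \hat{E}_t$, so $B_{T_{i-1}+1} = B_{T_{i-2}+1} + \sum(\hat{E}_t - x_t)$; applying $\sum\hat{E}_t \ge 2^{i-2}E_{\min}$ (Definition~\ref{asp:replenishment}), $\sum x_t \le B^{(i-1)}$, and $\Omega_{i-1} \le A_{i-1}$, a rearrangement that absorbs the reserved fixed budget gives $A_i \ge A_{i-1} + 2^{i-2}E_{\min} - \Omega_{i-1} \ge 2^{i-2}E_{\min} \ge 2^{i-2}E'_{\min}$, the second term of $\hat{\Omega}_i$. \emph{Case 2 (cap binds in frame $i-1$):} let $t_0$ be the last round of frame $i-1$ whose post-replenishment budget equals $B_{\max}$; since no overflow happens after $t_0$, telescoping from $t_0$ yields $B_{T_{i-1}+1} \ge B_{\max} - \sum_{t=t_0}^{T_{i-1}} x_t \ge B_{\max} - B^{(i-1)}$, and bounding $\Omega_{i-1} \le \Gamma_{i-1}$ (with $\Omega_1 = 0$ when $i=2$) recovers precisely the space term, i.e. the first argument of the min in $\hat{\Omega}_i$. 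Combining the two cases gives $A_i \ge \min\{\text{space term},\, 2^{i-2}E'_{\min}\} = \hat{\Omega}_i$.

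The main obstacle is the nonlinear cap in \eqref{eqn:budget_dynamics}: once replenishment overflows $B_{\max}$, the captured amount can fall far short of the potential $\hat{E}_t$, so the clean telescoping bound $A_i \ge 2^{i-2}E_{\min}$ is no longer available. The ``last overflow round'' device is what handles this, turning the binding event into the useful fact that the storage was recently full, thereby lower-bounding $B_{T_{i-1}+1}$ by $B_{\max} - B^{(i-1)}$ regardless of how much replenishment was wasted. A secondary point worth stressing is that both cases invoke only $\Omega_{i-1} \le A_{i-1}$ and $\Omega_{i-1} \le \Gamma_{i-1}$, which hold by the very definition of $\Omega_{i-1}$ as a min, so the argument is genuinely per-frame and needs no inductive lower bound on $A_{i-1}$; I would double-check the arithmetic identity $2^{i-2}T^*\rho + (T-T_{i-1})\rho = (T-T_{i-2})\rho$, since it is precisely this collapse of the reserved fixed budget that makes both case bounds reduce to exactly $\hat{\Omega}_i$.
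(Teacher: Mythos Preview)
Your argument is correct and rests on the same core idea as the paper's proof---a dichotomy on whether the cap $B_{\max}$ is reached during frame $i-1$---but you organize it more economically. Two differences are worth noting. First, your opening reduction $\hat{\Omega}_i \le \Gamma_i$ (from $E'_{\min}\le T^*\rho_{\max}\odot\beta$) collapses the paper's outer case split on whether $\Omega_i$ equals $A_i$ or $\Gamma_i$: once you know $A_i\ge\hat{\Omega}_i$, the threshold branch is automatic. Second, in the cap-binding case you bound $B_{T_{i-1}+1}$ directly via the \emph{last} overflow round $t_0$, obtaining $B_{T_{i-1}+1}\ge B_{\max}-B^{(i-1)}$ independently of $B_{T_{i-2}+1}$; the paper instead bounds the captured replenishment $\sum E_t\ge B_{\max}-B_{T_{i-2}+1}$ and must then run a further sub-case analysis on whether $A_{i-1}\lessgtr\Gamma_{i-1}$, combining the pieces through the inequality $\min\{A+B,C\}\le\min\{A,C\}+B$. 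Your route therefore avoids one layer of case analysis and the auxiliary min-inequality, at the cost of nothing: both proofs use only $\Omega_{i-1}\le A_{i-1}$ and $\Omega_{i-1}\le\Gamma_{i-1}$, which hold by definition. The arithmetic identity you flag, $2^{i-2}T^*\rho+(T-T_{i-1})\rho=(T-T_{i-2})\rho$, is exactly $T_{i-1}-T_{i-2}=2^{i-2}T^*$, which follows from $T_j=(2^j-1)T^*$.
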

\begin{proof}  
We prove that the equivalent additive budget $\hat{\Omega}_{i}$ does not exceed the true additive budget $\Omega_{i}$ for any frame $i$.

For the first frame, it is obvious that $\hat{\Omega}_1\leq \Omega_1=0$ holds. 
For the second frame, we discuss the value of $\Omega_2$ in the following cases. 

Firstly, if for a resource $m\in[M]$, $B_{T^*+1,m}-(T-T^*)\rho_m\leq T^*\beta_m\rho_{\max,m}$, the additive budget $\Omega_{2,m}$ is $B_{T^*+1,m}-(T-T^*)\rho_m$, and it comprises the replenishment in the first frame $\sum_{t=1}^{T_1}E_{t,m}$ and the unconsumed budget in the first frame $B^1_m-\sum_{t=1}^{T_1}x_{t,m}$. We can bound the replenishment in the first frame as 
\begin{equation}\label{eqn:replenishlowerbound}
\sum_{t=1}^{T_1}E_{t,m}\geq \min\{B_{\max,m}-T\rho_m,E_{\min,m}\}\geq \min\{B_{\max,m}-T\rho,E_{\min,m}'\}=\hat{\Omega}_{2,m}.
\end{equation}
The reason is that if the truly replenished budget of resource $m$ at each round of the first frame is not constrained by $B_{\max,m}$, i.e. $E_{t,m}=\hat{E}_{t,m},\forall t\in[1,T_1]$, we have $\sum_{t=1}^{T_1}E_{t,m}=\sum_{t=1}^{T_1}\hat{E}_{t,m}\geq E_{\min,m}\geq E_{\min,m}'$. Otherwise, we must have $\sum_{t=1}^{T_1}E_{t,m}\geq B_{\max,m}-T\rho_m$ since $B_{\max,m}-T\rho_m$ is the minimum replenished budget such that the replenishment is constrained by the budget cap $B_{\max,m}$. Therefore for the first case, we always have for the resource $m$, $\hat{\Omega}_{2,m}\leq \sum_{t=1}^{T_1}E_{t,m}\leq \Omega_{2,m}$. 

For the second case when $B_{T^*+1,m}-(T-T^*)\rho_m> T^*\beta_m\rho_{\max,m}$ for resource $m$, we have $\Omega_{2,m}= T^*\beta_m\rho_{\max,m}$. Thus, we still have $\hat{\Omega}_{2,m}\leq E'_{\min,m}\leq T^*\beta_m\rho_{\max,m}=  \Omega_{2,m}$.

Since the inequality holds for all the resources $m$, we have $\hat{\Omega}_{2}\leq\Omega_{2}$.

For the $i$th ($3\leq i\leq K$) frame, we discuss for the value of $\Omega_i$ in the following cases.

Firstly, if for a resource $m$, $B_{T_{i-1}+1,m}-(T-(2^{i-1}-1)T^*)\rho_m\leq  2^{i-2}T^*\beta_m\rho_{\max,m}$, then the additive budget $\Omega_{i,m}$ includes the replenishment in the $(i-1)$th frame $\sum_{t=T_{i-2}+1}^{T_{i-1}}E_{t,m}$, the unconsumed assigned budget in the $(i-1)$th frame $B^{i-1}_m-\sum_{t=T_{i-2}+1}^{T_{i-1}}x_{t,m}$, and the possibly saved budget $[B_{T_{i-2}+1,m}-(T-(2^{i-2}-1)T^*)\rho_m - 2^{i-3}T^*\beta_m\rho_{\max,m}]^+$ at the beginning of $(i-1)$th frame.  The truly replenished budget in the $(i-1)$th frame can be bounded as 
\begin{equation}\label{eqn:boundreplenishcap}
\sum_{t=T_{i-2}+1}^{T_{i-1}}E_{t,m}\geq \min\{B_{\max,m}-B_{T_{i-2}+1,m},2^{i-2}E'_{\min,m}\}.
\end{equation}
The reason is that if the replenishment at each round of the $(i-1)$th frame is not constrained by $B_{\max,m}$, i.e. $E_{t,m}=\hat{E}_{t,m},\forall t\in[T_{i-2}+1,T_{i-1}]$, we have $\sum_{t=T_{i-2}+1}^{T_{i-1}}E_{t,m}=\sum_{t=T_{i-2}+1}^{T_{i-1}}\hat{E}_{t,m}\geq 2^{i-2} E_{\min,m}\geq 2^{i-2} E_{\min,m}'$. Otherwise, we must have $\sum_{t=T_{i-2}+1}^{T_{i-1}}E_{t,m}\geq B_{\max,m}-B_{T_{i-2}+1,m}$ since $B_{\max,m}-B_{T_{i-2}+1,m}$ is the minimum replenished budget such that the replenishment is constrained by the budget cap $B_{\max,m}$.

If it holds at the beginning of the $(i-1)$th frame that $B_{T_{i-2}+1,m}\leq  (T-(2^{i-2}-1)T^*)\rho_m + 2^{i-3}T^*\beta_m\rho_{\max,m}$, we further have
\begin{equation}
\Omega_{i,m}\geq \sum_{t=T_{i-2}+1}^{T_{i-1}}E_{t,m}\geq \min\{B_{\max,m}-2^{i-3}T^*\beta_m\rho_{\max,m}-(T-(2^{i-2}-1)T^*)\rho_m,2^{i-2}E'_{\min,m}\}=\hat{\Omega}_{i,m}.
\end{equation}
 Otherwise,  $[B_{T_{i-2}+1,m}-(T-(2^{i-2}-1)T^*)\rho_m - 2^{i-3}T^*\beta_m\rho_{\max,m}]^+$ is positive and is included in $\Omega_{i,m}$. Under such a case, we have
\begin{equation}
\begin{split}
\hat{\Omega}_{i,m}=&\min\{(T-2^{i-3}T^*\beta_m)\rho_{\max,m}-(T-(2^{i-2}-1)T^*)\rho_m,2^{i-2}E'_{\min,m}\}\\
\leq &\min\{B_{\max,m}-B_{T_{i-2}+1,m},2^{i-2}E'_{\min,m}\}+B_{T_{i-2}+1,m}-(T-(2^{i-2}-1)T^*)\rho_m - 2^{i-3}T^*\beta_m \rho_{\max,m}\\
\leq & \sum_{t=T_{i-2}+1}^{T_{i-1}}E_{t,m}+B_{T_{i-2}+1,m}-(T-(2^{i-2}-1)T^*)\rho_m - 2^{i-3}T^*\beta_m \rho_{\max,m}\leq \Omega_{i,m},
\end{split}
\end{equation}
where the first inequality holds because $\min\{A+B,C\}\leq \min\{A,C\}+B$ for $A,B,C\geq 0$, the second inequality holds by \eqref{eqn:boundreplenishcap}, and the last inequality holds since $\sum_{t=T_{i-2}+1}^{T_{i-1}}E_{t,m}$ and $[B_{T_{i-2}+1,m}-(T-(2^{i-2}-1)T^*)\rho_m - 2^{i-3}T^*\beta_m\rho_{\max,m}]^+$ are both included in $\Omega_{i,m}$.

Secondly, if $B_{T_{i-1}+1,m}-(T-(2^{i-1}-1)T^*)\rho_m>  2^{i-2}T^*\beta_m\rho_{\max,m}$, the additive budget $\Omega_{i,m}=2^{i-2}T^*\beta_m\rho_{\max,m}$, and we have $\hat{\Omega}_{i,m}\leq 2^{i-2}E'_{\min,m}\leq 2^{i-2}T^*\beta_m\rho_{\max,m}=\Omega_{i,m}$.

Since the inequality holds for all the resources $m$, we have $\hat{\Omega}_{i}\leq\Omega_{i}$ for $3\leq i\leq K$.
\end{proof}

\textbf{Proof of Theorem \ref{thm:enhanced_expertbound_cr}}
\begin{proof}
Since dual mirror descent is applied to each frame, using similar techniques as the proof of Theorem \ref{thm:expertbound}, we can prove that within each frame $i, i\in[K]$, given the choice of $\eta$ and $\mu$, it holds that
\begin{equation}\label{eqn:frame_regret}
\sum_{t=T_{i-1}}^{T_i}f_t(x_t^*)-\alpha_i f_t(x_t)\leq \alpha_i \bar{f}+\alpha_i(\bar{\rho}^{(i)}+\|\bar{x}\|_{\infty})\sqrt{\frac{V_h(\mu,\mu_1)(2^{i-1}T^*)}{2\sigma}},
\end{equation}
where $x_t^*$ is the offline-optimal solution for the whole episode with length $T$, $\alpha_i=\sup_{m\in[M]}\frac{\bar{x}_m}{\rho_m^{(i)}}$, and $\bar{\rho}^{(i)}=\sup_{m\in[M]}\rho^{(i)}_m$.

To use the doubling trick, we need to bound $\rho^{(i)}=\frac{B^{(i)}}{2^{i-1}T^*}$. By Lemma \ref{lma:assignbudgetlowerbound},  we have $\rho^1=\rho$, $\rho^2=\frac{2T^*\rho+\Omega_{i}}{2 T^*}\geq \frac{2T^*\rho+\min\{T\rho_{\max}-T\rho,E'_{\min}\}}{2 T^*}=\rho +\min\{\frac{T\rho_{\max}-T\rho}{2T^*},\frac{\rho_{\max}\odot\beta}{2}, \frac{E_{\min}}{2T^*}\}$, and for $3\leq i\leq K$, we have
\begin{equation}
\begin{split}
\rho^{(i)}=&B^{(i)}/(2^{i-1} T^*)=\frac{2^{i-1}T^*\rho+\Omega_{i}}{2^{i-1} T^*}\geq \frac{2^{i-1}T^*\rho+\hat{\Omega}_{i}}{2^{i-1} T^*}\\
= & \rho+\frac{\min\left\{T\rho_{\max}-2^{i-3}T^*\rho_{\max}\odot\beta-(T-(2^{i-2}-1)T^*)\rho,2^{i-2}E'_{\min}\right\}}{2^{i-1} T^*}\\
= &\rho +\min\left\{\frac{1}{2^{i-1}}\left(\frac{T\rho_{\max}}{T^*}-\frac{T+T^*}{T^*}\rho\right)+\frac{\rho}{2}-\frac{\rho_{\max}\odot\beta}{4},\frac{\rho_{\max}\odot\beta}{2}, \frac{E_{\min}}{2T^*}\right\},
\end{split}
\end{equation}
where the first inequality holds since $\Omega_i\geq \hat{\Omega}_i$, and the last equality holds since $E_{\min}'=\min\{E_{\min},\\ T^*\rho_{\max}\odot\beta\}$. 
If it holds for a resource $m$ that $B_{\max,m}<(T+T^*)\rho_m$, we have $\frac{1}{2^{i-1}}\left(\frac{T\rho_{\max,m}}{T^*}-\frac{T+T^*}{T^*}\rho_m\right)+\frac{\rho_m}{2}-\frac{\beta_m\rho_{\max,m}}{4}\geq \frac{T\rho_{\max,m}-(T+T^*)\rho_m}{4T^*}+\frac{\rho_m}{2}-\frac{\beta\rho_{\max,m}}{4}$. By optimally choosing $\beta_m=\frac{T}{3T^*}-\frac{T-T^*}{3T^*}\frac{\rho_m}{\rho_{\max,m}}$, we have 
\begin{equation}
\begin{split}
\rho^{(i)}_m\geq & \rho_m+\min\left\{\frac{T\rho_{\max,m}-(T+T^*)\rho_m}{4T^*}+\frac{\rho_m}{2}-\frac{\beta_m\rho_{\max,m}}{4},\frac{\beta_m \rho_{\max,m}}{2}, \frac{E_{\min,m}}{2T^*} \right\}\\
=&\rho_m+\min \left\{\frac{T\rho_{\max,m}}{6T^*}-\frac{(T-T^*)\rho_m}{6T^*},\frac{E_{\min,m}}{2T^*}\right\}.
\end{split}
\end{equation}
If it holds for a resource $m$ that $B_{\max,m}\geq (T+T^*)\rho_m$, we have $\frac{1}{2^{i-1}}\left(\frac{T\rho_{\max,m}}{T^*}-\frac{T+T^*}{T^*}\rho_m\right)+\frac{\rho_m}{2}-\frac{\beta_m\rho_{\max,m}}{4}\geq \frac{1}{2^{K-1}}\left(\frac{T\rho_{\max,m}}{T^*}-\frac{T+T^*}{T^*}\rho_m\right)+\frac{\rho_m}{2}-\frac{\beta_m\rho_{\max,m}}{4}\geq \frac{T\rho_{\max,m}-(T+T^*)\rho_m}{T+T^*}+\frac{\rho_m}{2}-\frac{\beta_m\rho_{\max,m}}{4}$ given that $T\geq (2^{K-1}-1)T^*$. By optimally choosing $\beta_m=\frac{4T}{3(T+T^*)}-\frac{2\rho_m}{3\rho_{\max,m}}$, we have
\begin{equation}
\begin{split}
\rho^{(i)}_m\geq &\rho_m +\min\left\{\frac{T\rho_{\max,m}-(T+T^*)\rho_m}{T+T^*}+\frac{\rho_m}{2}-\frac{\beta_m\rho_{\max,m}}{4},\frac{\beta_m \rho_{\max,m}}{2}, \frac{E_{\min,m}}{2T^*}\right\}\\
=&\rho_m +\min\left\{\frac{2T\rho_{\max,m}}{3(T+T^*)}-\frac{\rho_m}{3},\frac{E_{\min,m}}{2T^*}\right\}.
\end{split}
\end{equation}
Therefore, we can bound $\rho^{(i)}_m$ as $\rho^{(i)}_m\geq \rho_m +\min\left\{\frac{2T\rho_{\max,m}}{3(T+T^*)}-\frac{\rho_m}{3}, \frac{E_{\min,m}}{2T^*}\right\}$ when $B_{\max,m}\geq (T+T^*)\rho_m$ and $\rho^{(i)}_m\geq \rho_m +\min\left\{\frac{T\rho_{\max,m}}{6T^*}-\frac{(T-T^*)\rho_m}{6T^*}, \frac{E_{\min,m}}{2T^*}\right\}$ when $B_{\max,m}<(T+T^*)\rho_m$.  We define $\Delta \rho_m=\min\{\frac{2T\rho_{\max,m}}{3(T+T^*)}-\frac{\rho_m}{3},  \frac{E_{\min,m}}{2T^*}\}$ when $B_{\max,m}\geq (T+T^*)\rho_m$ and $\Delta \rho_m=\min\left\{\frac{T\rho_{\max,m}}{6T^*}-\frac{(T-T^*)\rho_m}{6T^*}, \frac{E_{\min,m}}{2T^*}\right\}$ when $B_{\max,m}< (T+T^*)\rho_m$. Thus, we have $\rho^{(i)}_m\geq \rho_m + \Delta\rho_m$.

Also, we can get the upper bound of $\rho^{(i)}$ for $i\in[2,K]$ as
$
\rho^{(i)}\leq \frac{2^{i-1}T^*\rho+2^{i-2}T^*\rho_{\max}\odot\beta}{2^{i-1} T^*}=\rho+\frac{\rho_{\max}\odot\beta}{2},
$ where the inequality holds because $\Omega_i\leq 2^{i-2}T^*(\beta\odot\rho_{\max})$.
Thus we have $\bar{\rho}^{(i)}=\sup_{m\in[M]}\rho^{(i)}_m\leq \bar{\rho}+\frac{\bar{\beta}}{2}\bar{\rho}_{\max},$
where $\bar{\rho}_{\max}=\max_m\rho_{\max,m}$, $\bar{\beta}=\max_m\beta_{m}$. When $B_{\max,m}\geq (T+T^*)\rho_m$, the optimal $\rho_m\leq \frac{4}{3}$ as $T\rightarrow \infty$. When $B_{\max,m}< (T+T^*)\rho_m$, the optimal $\rho_m\leq \frac{2}{3}$ as $T\rightarrow \infty$ since $\frac{T}{T+T^*}<\frac{\rho_m}{\rho_{\max,m}}\leq 1$.

Define $\hat{\alpha}=\min_{m\in[M]}\frac{\bar{x}_m}{\rho_m+\Delta\rho_m}$. By summing up frames with the lower and upper bounds of $\rho^{(i)}$, we get
\begin{equation}
\begin{split}
\sum_{t=1}^{T}&f_t(x_t^*)-\hat{\alpha} f_t(x_t)\leq \sum_{t=1}^{3T^*}f_t(x_t^*)-\hat{\alpha} f_t(x_t)+\sum_{i=3}^K\sum_{t=T_{i-1}}^{T_i}f_t(x_t^*)-\alpha_i f_t(x_t)\\
&\leq 3\bar{f}T^*+\sum_{i=3}^K \alpha_i \bar{f}+\alpha_i(\bar{\rho}+\frac{\bar{\beta}}{2}\bar{\rho}_{\max}+\|\bar{x}\|_{\infty})\sqrt{\frac{V_h(\mu,\mu_1)(2^{i-1}T^*)}{2\sigma}}\\
&\leq 3\bar{f}T^*+\hat{\alpha}K\bar{f}+\hat{\alpha}(\bar{\rho}+\frac{\bar{\beta}}{2}\bar{\rho}_{\max}+\|\bar{x}\|_{\infty})\sqrt{\frac{V_h(\mu,\mu_1)}{2\sigma}}\sum_{i=3}^K\sqrt{(2^{i-1}T^*)}\\
&\leq 3\bar{f}T^*+\hat{\alpha}K\bar{f}+\hat{\alpha}(\bar{\rho}+\frac{\bar{\beta}}{2}\bar{\rho}_{\max}+\|\bar{x}\|_{\infty})\sqrt{\frac{V_h(\mu,\mu_1)}{2\sigma}}(1+\sqrt{2})\sqrt{T},
\end{split}
\end{equation}
where the second inequality holds by \eqref{eqn:frame_regret} and the third inequality holds due to the fact that $\hat{\alpha}\geq \alpha_i$ for any $i\in[3,K]$.

Since $K=\left \lceil \log_2(T/T^*) \right \rceil=O(\log(T))$,  it holds for any sequence $y$ that
\begin{equation}
\lim_{T\rightarrow \infty}\frac{1}{T}\sum_{t=1}^{T} f_t(x_t^*)-\hat{\alpha} f_t(x_t) \leq  0, 
\end{equation}
indicating an asymptotic competitive ratio of $CR^{\experttwo}=\frac{1}{\hat{\alpha}}=\min_{m\in[M]}\frac{\rho_m+\Delta\rho_m}{\bar{x}_m}$.
\end{proof}

\section{Proof of Theorem~\ref{thm:robustness}}\label{appendix:proof_robustness}
\begin{proof}
To prove the wost-case robustness of \ouralg, we need to prove that there exists at least one feasible action in each round.
We prove by induction that  $\check{x}_t=\min\{x_t^{\dagger},B_t+E_t\}$ is always feasible for constraint \eqref{eqn:constraint_new_reservation}. 

When $t=1$, $x_t^{\dagger}$ is obviously a feasible solution of \eqref{eqn:constraint_new_reservation}.  Let $F_t=\sum_{\tau=1}^t f_\tau(x_\tau)$ for any $t\in[T]$. Assume that at round $t-1$, $F_{t-1}-\Delta(x_{t-1})+R\geq \lambda F^{\dagger}_{t-1}$. At round $t$, we have
\begin{equation}\label{eqn:robustnessproof1}
\begin{split}
&F_{t}-\Delta(x_t)+R- \lambda F^{\dagger}_{t}\\
=&F_{t-1}- \lambda F^{\dagger}_{t-1}-\Delta(x_t)+R+f_t(x_t)-\lambda f_t(x^{\dagger}_t)\\
\geq &\left(\Delta(x_{t-1})-\Delta(x_{t})\right)+f_t(x_t)-\lambda f_t(x^{\dagger}_t)\\
=&\lambda L\left(\sum_{m=1}^M |B_{m,t}^{\dagger}-B_{m,t}|^+- |B_{m,t+1}^{\dagger}-B_{m,t+1}|^+\right)+f_t(x_{t})-\lambda f_t(x^{\dagger}_{t}),
\end{split}
\end{equation}
where $B_{m,t+1}=B_{m,t}+{E}_{m,t} - x_{m,t}$ and $B_{m,t+1}^{\dagger}=B_{m,t}^{\dagger}+{E}_{m,t}^{\dagger} - x_{m,t}^{\dagger}$ by the budget dynamics.

Next, we prove $x_t=\check{x}_t$ is always a feasible solution for constraint \eqref{eqn:constraint_new_reservation}. If $x_t=\check{x}_t$, we have $B_{t+1}=B_{t}+E_{t}-\check{x}_t$.
If $B_{m,t}+E_{m,t}\geq x^{\dagger}_{m,t}$ holds for $m$, then $\check{x}_{m,t}=x^{\dagger}_{m,t}$ and we have
\begin{equation}
\begin{split}
|B_{m,t+1}^{\dagger}-B_{m,t+1}|^+&=|B_{m,t}^{\dagger}+E_{m,t}^{\dagger}-B_{m,t}-E_{m,t}|^+\\
&=|\min\{B_{m,t}^{\dagger}+\hat{E}_{m,t},B_{\max}\}-\min\{B_{m,t}+\hat{E}_{m,t},B_{\max}\}|^+\\
&\leq |B_{m,t}^{\dagger}-B_{m,t}|^+,
\end{split}
\end{equation}
where the last inequality holds by $1$-Lipschitz of the function $\min\{\cdot, B_{\max}\}$.  On the other hand, if $B_{m,t}+E_{m,t}< x^{\dagger}_{m,t}$ holds for $m$, then $\check{x}_{m,t}=B_{m,t}+E_t$ holds for $m$. Thus 
\begin{equation}
\begin{split}
&|B_{m,t}^{\dagger}-B_{m,t}|^+- |B_{m,t+1}^{\dagger}-B_{m,t+1}|^+\\
=&(B_{m,t}^{\dagger}-B_{m,t})-|B_{m,t}^{\dagger}+E_{m,t}^{\dagger}-x_{m,t}^{\dagger}-B_{m,t+1}|^+\\
=&-B_{m,t}-E_{m,t}^{\dagger}+x_{m,t}^{\dagger}\\
\geq&x_{m,t}^{\dagger}-\check{x}_{m,t},
\end{split}
\end{equation}
where the first equality holds because $\min\{B_{m,t}+\hat{E}_{m,t},B_{\max}\}=B_{m,t}+E_{m,t}< x^{\dagger}_{m,t}\leq B^{\dagger}_{m,t}+E_{m,t}^{
\dagger}=\min\{B_{m,t}^{\dagger}+\hat{E}_{m,t},B_{\max}\}
$, so $B_{m,t}\leq B_{m,t}^{\dagger}$, the second equality holds because $B_{m,t+1}=B_{m,t}+E_{m,t}-\check{x}_{m,t}=0$, and the inequality holds because $E_{m,t}\geq E_{m,t}^{\dagger}$ given $B_{m,t}\leq B_{m,t}^{\dagger}$.
Thus we have for any $m\in[M]$,
\begin{equation}
L\left(\sum_{m=1}^M |B_{m,t}^{\dagger}-B_{m,t}|^+- |B_{m,t+1}^{\dagger}-B_{m,t+1}|^+\right)\geq L(x_{m,t}^{\dagger}-\bar{x}_{m,t}).
\end{equation}

Thus,
by the Lipschiz continuity of $f$, we have
\begin{equation}
\begin{split}
&f_t(x_{t}^{\dagger})-f_t(\check{x}_{t})\\
\leq& \sum_{m=1}^ML |x_{m,t}^{\dagger}-\check{x}_{m,t}|\\
\leq& \sum_{m=1}^ML(|B_{m,t}^{\dagger}-B_{m,t}|^+- |B_{m,t+1}^{\dagger}-B_{m,t+1}|^+).
\end{split}
\end{equation}
Continuing with \eqref{eqn:robustnessproof1}, when $x_t=\check{x}_t$, since $\lambda\in[0,1]$, we have
\begin{equation}
F_{t}-\Delta(x_t)+R- \lambda F^{\dagger}_{t}
\geq  (1-\lambda) f_t(\check{x}_{t})\geq 0.
\end{equation}

Thus we prove that there always exists $\check{x}_t=\min\{x_t^{\dagger},B_t+E_t\}$ such that $F_{t}-\Delta(x_t)+R\leq \lambda F^{\dagger}_{t}$ holds for each round $t$. Since $\Delta(x_t)\geq 0$, if \eqref{eqn:constraint_new_reservation} holds for each round, we have \eqref{eqn:constraint_new_reservation} holds for the last round, thus satisfying the worst-case utility constraint \eqref{eqn:constraint_1_learning}.
\end{proof}
\section{Proof of Theorem \ref{thm:average}}\label{appendix:proof_consistency}
\begin{proof}
The ML policy optimally trained aware of the projection for worst-case utility
constraint is the policy that optimizes the average utility that satisfies \eqref{eqn:constraint_new_reservation} for each round. Thus we bound the average utility by bounding the average utility of the policy $\pi^{\circ}$ based on the optimal unconstrained ML policy $\tilde{\pi}^*$ and \expert $\pi^{\dagger}$, i.e. $\pi^{\circ}=\gamma\tilde{\pi}^{*}+(1-\gamma)\pi^{\dagger}$. 
The constructed policy $\pi^{\circ}$ gives the action $x_t^{\circ}=\gamma \tilde{x}_t^{*}+(1-\gamma) x_t^{\dagger}$ where $\tilde{x}_T^*$ is the output of ML policy $\tilde{\pi}^*$ and $x_t^{\dagger}$ is the output of  $\pi^{\dagger}$. 

We first prove that $x_t^{\circ}$ is always a feasible action for the budget constraints. To show this, we prove by induction that the remaining budget $B_t^{\circ}$ of $\pi^{\circ}$ at each round is no less than a linear combination of the remaining budget $\tilde{B}^*$ of $\tilde{\pi}^{*}$ and the remaining budget $B^{\dagger}$ of $\pi^{\dagger}$. 
At the first round, it holds that
\begin{equation}
\begin{split}
B_2^{\circ}&=\min\{ B_1+\hat{E}_1, B_{\max}\} -x_1^{\circ}\\
&=\gamma \tilde{B}_2^*+(1-\gamma) B_2^{\dagger}.
\end{split}
\end{equation}
Assume for the round $t, t>2$, we have
$B_{t}^{\circ}\geq \gamma \tilde{B}_t^*+(1-\gamma) B_t^{\dagger}$. Then we have
\begin{equation}\label{eqn:budgetcombination}
\begin{split}
B_{t+1}^{\circ}&=\min\{ B^{\circ}_t+\hat{E}_t, B_{\max}\} -x_t^{\circ}\\
&\geq \min\{ \gamma \tilde{B}_t^*+(1-\gamma) B_t^{\dagger}+\hat{E}_t, B_{\max}\} - \gamma \tilde{x}^*_t-(1-\gamma) x_t^{\dagger}\\
&\geq \gamma\left(\min\{\tilde{B}_t^* +\hat{E}_t, B_{\max}\}-\tilde{x}^*_t\right)+(1-\gamma)\left(\min\{B_t^{\dagger}+\hat{E}_t, B_{\max}\}-x_t^{\dagger}\right)\\
&=\gamma \tilde{B}_{t+1}^*+(1-\gamma) B_{t+1}^{\dagger},
\end{split}
\end{equation}
where the second inequality holds because $\min\{\cdot, B_{\max}\}$ is a concave function. Thus, for any round $t\in[T]$, we have $B_{t}^{\circ}\geq \gamma \tilde{B}_{t}^*+(1-\gamma) B_{t}^{\dagger}$. Since the ML policy and \expert both guarantee that $\tilde{B}_t^*\geq 0$ and $B_t^{\dagger}\geq 0$, we have  $B_{t}^{\circ}\geq 0$ which means $x_t^{\circ}$ is a feasible action for budget constraints.

Next, we need to find an $\gamma$ such that the policy $\pi^{\circ}$ satisfy the robustness constraints.
By the robust algorithm design, we need to satisfy the robust constraint for each step $t$ which can be expressed as
\begin{equation}\label{eqn:rboustconstraint_circ}
\sum_{i=1}^t f_i(x_i^{\circ})\geq \lambda \sum_{i=1}^tf_i(x_i^{\dagger})+\lambda L\sum_{m=1}^M |B_{m,t+1}^{\dagger}-B^{\circ}_{m,t+1}|^+-R
\end{equation}
By Lipschitz continuity of $f_t$, we have $f_i(x_i^{\dagger})\leq f_i(x_i^{\circ})+L\|x_i^{\dagger}-x_i^{\circ}\|_1$ (We can use $L^1$-norm since it returns the largest value among $L^p$-norms ($p\geq 1$). ), and thus get a sufficient condition for the robust constriant \eqref{eqn:rboustconstraint_circ} as
\begin{equation}
-\lambda L\sum_{i=1}^t\|x_i^{\circ}-x_i^{\dagger}\|_1-\lambda L\sum_{m=1}^M |B_{m,t+1}^{\dagger}-B^{\circ}_{m,t+1}|^+   \geq (\lambda-1) \sum_{i=1}^t f_i(x_i^{\circ})-R.
\end{equation}
By \eqref{eqn:budgetcombination} and the monotonicity of ReLU operation, we have \begin{equation}\label{eqn:averageproof1}
|B_{m,t+1}^{\dagger}-B^{\circ}_{m,t+1}|^+\leq |B_{m,t+1}^{\dagger}-\gamma\tilde{B}_{m,t+1}^*-(1-\gamma) B_{m,t+1}^{\dagger}|^+=\gamma |B_{m,t+1}^{\dagger}- \tilde{B}_{m,t+1}^*|^+.
\end{equation}
Substituting the expressions of $x_t^{\circ}$ and \eqref{eqn:averageproof1} into the inequality, the sufficient condition for the robust constraint \eqref{eqn:rboustconstraint_circ} becomes
\begin{equation}
-\gamma\lambda L\sum_{i=1}^t\|\tilde{x}_i^{*}-x_i^{\dagger}\|_1-\gamma\lambda L\sum_{m=1}^M |B_{m,t+1}^{\dagger}- \tilde{B}_{m,t+1}^*|^+    \geq (\lambda-1) \sum_{i=1}^t f_i(x_i^{\circ})-R.
\end{equation}

By the definition of $B_t^{\dagger}$ and $\tilde{B}_t^*$, we have 
\begin{equation}
\begin{split}
&\sum_{m=1}^M|B_{m,t+1}^{\dagger}-\tilde{B}_{m,t+1}^*|^+\\=&\sum_{m=1}^M\left|\left(\min\{B_{m,t}^{\dagger} +\hat{E}_{m,t}, B_{\max}\}-x_{m,t}^{\dagger}\right)-\min\{\tilde{B}_{m,t}^*+\hat{E}_{m,t}, B_{\max}\}-\tilde{x}^*_{m,t}\right|^+\\
\leq &\sum_{m=1}^M|B_{m,t}^{\dagger}-\tilde{B}_{m,t}^*|^++|x_{m,t}^{\dagger}-\tilde{x}^*_{m,t}|^+
\leq  \sum_{i=1}^t\sum_{m=1}^M|x_{m,t}^{\dagger}-\tilde{x}^*_{m,t}|^+\leq \sum_{i=1}^t\|x_t^{\dagger}-\tilde{x}^*_t\|_1,
\end{split}
\end{equation}
where the second inequality holds by 1-Lipschitz of $\min\{\cdot, B_{\max}\}$, and the second inequality holds by iteratively applying the first inequality. Thus, the sufficient condition for the robust constraint \eqref{eqn:rboustconstraint_circ} becomes
\begin{equation}
2\gamma\lambda L\sum_{i=1}^t\|\tilde{x}_i^{*}-x_i^{\dagger}\|_1 \leq (1-\lambda) \sum_{i=1}^t f_i(x_i^{\circ})+R.
\end{equation}

Since $(1-\lambda) \sum_{i=1}^t f_i(x_i^{\circ})\geq 0$, if $\gamma\in[0,1]$ satisfies
\begin{equation}
\gamma\leq \min\left\{1, \frac{R}{2\lambda L\sum_{i=1}^t\|\tilde{x}_i^*-x_i^{\dagger}\|_{1}}\right\},
\end{equation}
then $x_t^{\circ}$ satisfies the robust constraint \eqref{eqn:rboustconstraint_circ}  for each round $t$.

Thus, by the definition of $\theta=\max_{y}\sum_{t=1}^T\|\tilde{x}_t^*-x_t^{\dagger}\|_1$, we further have the sufficient condition that $\hat{x}_t$ satisfies the robust constraint is
$\gamma\in[0,1]$ satisfies
\begin{equation}
\gamma\leq \min\left\{1, \frac{R}{2\lambda L\theta}\right\}:=\gamma_{\lambda, R},
\end{equation}

Next, we can bound the average utility of $\pi^{\circ}=\gamma_{\lambda, R}\tilde{\pi}^*+(1-\gamma_{\lambda, R})\pi^{\dagger}$ which is also the bound of the average utility of the proposed policy.
Since the function $f$ is $L-$Lipschitz continuous, then we have
\begin{equation}
\begin{split}
\mathbb{E}_{y}\left[F_T^{\pi^{\circ}}(y)\right]&=\mathbb{E}_{y}\left[F_T^{\tilde{\pi}^*}(y)\right]-\mathbb{E}_{y}\left[\left|F_T^{\pi^{\circ}}(y)- F_T^{\tilde{\pi}^*}(y)\right|\right]\\
&\geq \mathbb{E}_{y}\left[F_T^{\tilde{\pi}^*}(y)\right]-L\mathbb{E}_{y}\left[ \sum_{t=1}^T\|x_t^{\circ}-\tilde{x}_t^*\|\right]\\
&=\mathbb{E}_{y}\left[F_T^{\tilde{\pi}^*}(y)\right]-L(1-\gamma_{\lambda,R})\mathbb{E}_{y}\left[\sum_{t=1}^T \|x_t^{\dagger}-\tilde{x}_t^{*}\|\right]\\
&=\mathbb{E}_{y}\left[F_T^{\tilde{\pi}^*}(y)\right]-L\max\{0,1-\frac{R}{2\lambda L\theta}\}\mathbb{E}_{y}\left[\sum_{t=1}^T \|x_t^{\dagger}-\tilde{x}_t^{*}\|\right],
\end{split}
\end{equation}
where the inequality holds by the Lipschitz continuity of reward functions, and the second equality holds since $x_t^{\circ}-\tilde{x}_t^*=(1-\gamma_{\lambda, R})(x_t^{\dagger}-\tilde{x}_t^*)$.
Since the ML policy $\tilde{\pi}^{\circ}$ is optimally trained under the constraint \eqref{eqn:rboustconstraint_circ}, we have  $\mathbb{E}_{y}\left[F_T^{\ouralg(\tilde{\pi}^{\circ})}(y)\right]\geq \mathbb{E}_{y}\left[F_T^{\pi^{\circ}}(y)\right]$, so we prove the average bound in our theorem.
\end{proof}

\end{document}